\def\nbb{{\mathbf{b}}}
\def\nbn{{\mathbf{n}}}
\def\nbr{{\mathbf{r}}}
\def\nbv{{\mathbf{v}}}
\def\nbx{{\mathbf{x}}}
\def\nbzero{{\mathbf{0}}}
\def\nb1{{\mathbf{1}}}
\def\nbR{{\mathbf{R}}}
\def\nbV{{\mathbf{V}}}
\def\nbX{{\mathbf{X}}}
\def\ncalD{{\mathcal{D}}}
\def\ncalX{{\mathcal{X}}}
\def\nbbE{{\mathbb{E}}}
\def\nbbN{{\mathbb{N}}}
\def\nbbP{{\mathbb{P}}}
\def\nbbR{{\mathbb{R}}}
\def\nrmd{{\rm d}}
\newtheorem{lemma}{Lemma}
\newtheorem{theorem}{Theorem}
\newtheorem{prop}{Proposition}
\newtheorem{cor}{Corollary}
\newtheorem{remark}{Remark}
\def\figref#1{Fig.\,\ref{#1}}%
\def\chb#1{{\color{blue}#1}}
\begin{document}
\graphicspath{{./Figures/}}

\title{Foundations of Vision-Based Localization:
A New Approach to Localizability Analysis Using Stochastic Geometry}

\author{
	Haozhou Hu, Harpreet S. Dhillon, R. Michael Buehrer
	\thanks{The authors are with Wireless@VT, Bradley Department of Electrical and Computer Engineering, Virginia Tech, Blacksburg, VA, 24061, USA. Email: \{haozhouhu, hdhillon, rbuehrer\}@vt.edu. The support of the US NSF (Grant CNS-2107276 and CNS-2225511) is gratefully acknowledged. This work was also supported in part by the Commonwealth Cyber Initiative, an investment in the advancement of cyber R\&D, innovation, and workforce development. For more information about CCI, visit www.cyberinitiative.org.} 
}

% Part of this work has been accepted to Asilomar 2024~\cite{conf1}.

%The support of the US NSF (Grant CNS-2107276 and CNS-2225511) is gratefully acknowledged. This work was also supported in part by the Commonwealth Cyber Initiative, an investment in the advancement of cyber R\&D, innovation, and workforce development. For more information about CCI, visit www.cyberinitiative.org.

\maketitle

\begin{abstract}
Despite significant algorithmic advances in vision-based positioning, a comprehensive probabilistic framework to study its performance has remained unexplored. 
The main objective of this paper is to develop such a framework using ideas from stochastic geometry. 
Due to limitations in sensor resolution, the level of detail in prior information, and computational resources, we may not be able to differentiate between landmarks with similar appearances in the vision data, such as trees, lampposts, and bus stops. 
While one cannot accurately determine the absolute target position using a single indistinguishable landmark, obtaining an approximate position fix is possible if the target can see multiple landmarks whose geometric placement on the map is unique. Modeling the locations of these indistinguishable landmarks as a Poisson point process (PPP) $\Phi$ on $\mathbb{R}^2$, we develop a new approach to analyze the localizability in this setting. 
From the target location $\nbx$, the {\em measurements} are obtained from landmarks within the {\em visibility region}.
These measurements, including ranges and angles to the landmarks, denoted as $f(\nbx)$, can be treated as mappings from the target location.
We are interested in understanding the probability that the measurements $f(\nbx)$ are sufficiently distinct from the measurement $f(\nbx_0)$ at the given location, which we term localizability.
Expressions of localizability probability are derived for specific vision-inspired measurements, such as ranges to landmarks and snapshots of their locations. 
Our analysis reveals that the {\em localizability probability} approaches one when the landmark intensity tends to infinity, which means that error-free localization is achievable in this limiting regime.
\end{abstract}

\begin{IEEEkeywords}
Vision-based localization, stochastic geometry, Poisson point process, localizability.
\end{IEEEkeywords}

\section{Introduction} \label{sec:intro}
\subsection{Problem Formulation}
Let $\Phi = \{\nbx_i, i\in \nbbN\}$ represent a PPP on $\nbbR^2$. For any $\nbx \in \nbbR^2$, we introduce the function $g(\nbx) = \Phi_{-\nbx} \cap \nbb(\nbzero, d)$, defined as the intersection of the translated point process $\Phi_{-\nbx} = \{\nbx_i - \nbx, i \in \nbbN\}$ and the ball $\nbb(\nbzero, d)$ centered at the origin with radius $d$. 
Subsequently, we define a function composition $f(\nbx) = h \circ g(\nbx)$, where $h$ is a mapping from the intersection of the translated point process $g(\nbx)$ to its codomain.
The distance between two elements in the codomain of $f$ is represented as $\Delta(f(\nbx_i), f(\nbx_j))$.
For a specified location $\nbx_0$ on $\nbbR^2$, its corresponding $f(\nbx_0)$ denotes the signature of $\nbx_0$.
Our interest lies in determining the probability that the signature of an arbitrary location $f(\nbx)$ is within $\epsilon$ distance of $f(\nbx_0)$.
The motivation and physical significance behind this problem are explained within the framework of vision-based positioning next.

\begin{comment}
    \chb{Let $\Phi = \{\nbX_i\}$ be a PPP on $\nbbR^2$. The geometric pattern of 
a given location $\nbx$ is the points within the ball $\nbb(\nbx, d)$ centered at $\nbx$ with radius $d$, defined as $g(\nbx) = \nbb(\nbx, d) \cap \Phi$.
The measurement is derived from the geometric pattern at location $\nbx$ and is expressed as a function of its location, denoted by $f(\nbx) = h \circ g(\nbx)$.
Measurements from two locations are considered similar if the distance between them, $d(f(\nbx_1), f(\nbx_2))$, is less than or equal to some threshold $\epsilon$. 
Given a realization of the point process $\varphi \in \Phi$ and measurements at a specific location $\nbx_0$, the probability of finding similar measurements elsewhere on the map is represented as
\begin{align}
    \nbbP\left[d(f(\nbx_0), f(\nbX)) \leq \epsilon \mid \Phi\right],
\end{align}
where $\nbX$ is an arbitrarily chosen location on the map.
Further, with a given realization of PPP, the probability that measurements from two arbitrarily chosen locations are similar is of interest and defined as 
\begin{align}
    \nbbP\left[d(f(\nbX_0), f(\nbX)) \leq \epsilon \mid \Phi\right].
\end{align}
}
\end{comment}

\subsection{Background}
Localization involves determining the position and orientation of the {\em target} using various measurements. In vision-based approaches, these measurements consist of visual information captured by vision sensors. These approaches are widely used in robotics, autonomous vehicles, augmented reality, and various other fields.
However, the foundations of vision-based positioning have not been thoroughly explored to the same depth as those of wireless-based positioning because of two interrelated challenges.
First, unlike wireless signals that can be uniquely associated with their corresponding anchors, the limited resolution of vision sensors blurs the distinctions between geographical landmarks (such as trees) with similar appearances. As a result, even when a landmark is identified in the visual data, it may not necessarily aid in accurate positioning, given the presence of multiple indistinguishable landmarks in a region. 
Second, wireless-based positioning benefits from well-established mathematical principles drawn from wireless communications, information theory, and estimation theory, enabling rigorous analysis. In contrast, similar formal mathematical frameworks for vision-based positioning have not been as extensively developed. This paper introduces one such mathematical framework for vision-based positioning utilizing tools from stochastic geometry, a natural choice to describe random spatial patterns.

Specifically, we model the map of landmarks as a PPP $\Phi = \{\nbx_i, i \in \nbbN\}$ on $\nbbR^2$.
The map of landmarks and the measurements of the visible landmarks are utilized to estimate the unknown location.
The measurements obtained at a given location can be treated as mappings, including ranges, angles, and relative locations, of the proximate landmarks as seen from that location. 
%These measurements are generally represented as mappings of their corresponding locations, including ranges, angles, and relative locations to the visible landmarks.
If the obtained measurements are {\em unique} enough, they effectively assist in positioning.
To quantify the ability of measurements to identify locations, we investigate the probability that measurements from two different places are similar.
This property implies the {\em uniqueness} of the measurements and the amount of information provided about the location.
{\em It should be noted that this perspective of vision-based positioning, probabilistic problem formulation, and the subsequent application of stochastic geometry analysis are all presented for the first time in this article.} This approach is of interest to the statistical physics community because of the use of point processes to model landmark locations.

\subsection{Related Literature and Contributions}
In this article, we explore the new connections between vision-based localization and stochastic geometry. The relevant prior arts from both directions are discussed. We also briefly discuss the connections to localizability-related works and information theory.

{\em Vision-based localization.} Traditionally, vision-based positioning has been approached as an image retrieval problem that estimates the unknown location of a query image by cross-referencing it with the most similar geo-tagged images in a pre-established database~\cite{zheng2009tour,crandall2009mapping,li2008modeling,zamir2016introduction}. 
Within this framework, SIFT (Scale-Invariant Feature Transform) emerges as a prominent technique, known for its ability to extract transform-invariant features from images~\cite{lowe2004distinctive}.
In the matching process, images are represented by vectors, which are computationally more efficient than pixel-based representations when matching images in the database.
%In the matching process, instead of representing an image by pixels, a vector representation is used, which is more computationally efficient when matching against the database for positioning. 
These vectors are transform-invariant features, including but not limited to bag-of-words~\cite{csurka2004visual, philbin2007object}, VLAD~\cite{arandjelovic2013all}, and graph-based methodologies~\cite{shankar2016approximate, verde2020ground}.
Another popular approach in vision-based positioning is Perspective-n-Points (PnP). It determines the position and orientation of the camera by utilizing a set of 3D points and their corresponding 2D projections onto the image plane. 
Distinctive feature points across different views, such as corners, edges, or blobs where the image intensity changes significantly, are detected. 
The locations of these feature points are used to estimate the camera location by solving parameters in the camera model~\cite{gao2021introduction}.
Other approaches involve convolutional neural networks (CNNs) to extract features from images. They have shown improved performance, particularly in various challenging environments.
Several variants of CNN, such as PoseNet~\cite{kendall2015posenet}, MapNet~\cite{geometry-aware}, and CamNet~\cite{ding2019camnet}, have been developed to locate and track the position and the orientation of the camera. 
The performance of these models is primarily evaluated on small-scale datasets, such as 7-Scenes and Oxford RobotCar.
In parallel, large-scale localization is explored using {\em cross-view matching}, which aims to match ground-based imagery with aerial images (such as from the satellites). 
Several neural network structures have been proposed for positioning the image~\cite{zhu2021revisiting, zhu2021vigor, hu2022beyond}.

{\em Stochastic Geometry and Statistical Physics.} We will now discuss relevant prior art from stochastic geometry. Even though no prior works use stochastic geometry for vision-based positioning (except our conference presentations~\cite{conf1,conf2} reporting some preliminary results), it has been used extensively in wireless communications and positioning. For instance, a recent approach to modeling the locations of anchors and blockages in wireless-based positioning involves treating them as realizations of point processes. In this context, stochastic geometry provides the framework to analyze key localization metrics, including localizability, e.g., see~\cite{o'lone2018statistical, christopher2019mathematical, schloemann1, schloemann2,aditya2018tractable,aditya2019characterizing, elsawy2017base}. These analyses help identify key factors in localization, evaluate the impact of these factors on the network performance, and suggest guidelines for optimizing localization algorithms. Point processes and stochastic geometry in this line of work also offer a common thread with statistical physics, where the same tools have found numerous applications. To explain this connection rigorously, consider a popular model for wireless cellular networks in which the locations of mobile towers are modeled as a PPP~\cite{andrews2023introduction}. Under realistic assumptions on how users connect to each mobile tower, the service regions of these towers can be modeled as Poisson Voronoi cells with mobile towers at the nuclei. 
To determine the transmit power of each user, it is important to understand the distances between users and their respective associated towers.
The distribution of this distance was recently derived in~\cite{mankar2020distance} using techniques developed in statistical physics~\cite{pineda2007temporal} to study the temporal evolution of the domain structure of a Poisson Voronoi tessellation. The moments of the volume of the cells located at the edges of a bounded Poisson Voronoi tessellation were derived in~\cite{koufos2019distribution}. Along the same lines, resource allocation problems in wireless networks have inspired a new variant of a random sequential adsorption process, termed the multilayer random sequential adsorption process, in~\cite{parida2022multilayer}. Another common thread is the use of line processes in both wireless and statistical physics. In wireless, it has been recently used to model the underlying road systems in vehicular networks~\cite{dhillon2020poisson}, whereas, in statistical physics, it has found applications in modeling the trajectories of subatomic particles. The recent use of line processes in vehicular networks has also inspired a new line of questioning related to path distances (distances along the lines), which has been studied in statistical physics~\cite{chetlur2020shortest}.

{\em Localizability.} Localizability analysis in wireless-based positioning systems provides initial estimations of the localization error by focusing on the feasibility of obtaining a position fix in a given setting. For instance, the Cram\'{e}r Rao Lower Bound related to positioning estimation is utilized to evaluate system settings, such as the configurations of anchor locations, network throughput, and estimation methods for localization~\cite{crlb1, crlb3, crlb2, o'lone2018statistical}. In~\cite{christopher2019mathematical,christopher2019single}, the localization performance using non-line-of-sight (NLOS) path reflections is investigated by analyzing localizability. However, the localizability in vision-based positioning has not been thoroughly explored to the same depth as in wireless-based positioning. Existing works either lack a rigorous mathematical model or are limited to specific applications.
In~\cite{li2019localizability}, localizability is evaluated through location entropy, estimated using a joint neural network. Another work in Lidar simultaneous localization and mapping (SLAM) 
dynamically adjusts the matching parameters based on localizability to improve localization performance~\cite{dong2021efficient}.

{\em Information Theory.} Another relevant research direction, though less immediately evident, is the information theoretic exploration of point process models.
For instance, \cite{anantharam2015capacity} studies capacity and error exponents of stationary point processes by considering points in the process as codewords and random displacement of points as additive noise.
This work shows that error-free communication is achievable when the rate does not exceed the Poltyrev capacity.
Further, the entropy rate of stationary point processes and the mutual information between a Poisson
point process and its displaced version are derived in~\cite{baccelli2016entropy}.
Even though one can draw parallels between our work and this specific line of work (e.g., the measurements in our model are analogous to codewords), the line of questioning (inspired by vision-based localization) and mathematical development are fundamentally different. 
That said, the problem discussed in this paper has similarities to the communication process.
The target's position fix can be viewed as a source message encoded by its corresponding error-free measurements. The process of obtaining the measurements can be considered as error-free measurements passing through the noisy channel. 
At the receiver, the received signal is a noise-affected measurement.
While the goal of the communication system is to retrieve the source message, the objective of localization is to estimate the position fix using these noise-affected measurements. 
%\chr{As the channel capacity establishes the fundamental limits of communications, the fundamental limits of localization systems are valuable to explore within the framework of information theory.}

{\em Contribution.} Even though stochastic geometry and statistical physics models have been used to study various aspects of wireless and positioning systems, no such attempt has been made to develop similar mathematical frameworks to study the localizability of vision-based systems. 
The challenge is that landmarks often have similar appearances and are indistinguishable in vision data, which is fundamentally different from processing signals in wireless-based positioning.
This paper proposes a fundamentally new perspective to derive localizability probability for vision-based positioning. We specifically focus on situations where all landmarks are indistinguishable in the vision data. 
%We consider the worst scenario that all landmarks are indistinguishable in the vision data. 
The landmark locations are modeled as a PPP, and the measurements are obtained from the landmarks that are visible to the target.
To make our framework more general, we represent the measurements as mappings from the locations where they are obtained. 
Specifically, the measurement is treated as a function of visible landmarks, including ranges, angles, and relative locations to these landmarks.
Modeling the measurement error as additive noise, 
we derive the expressions of the localizability probability
in vision-based positioning for a variety of measurements, including the number of visible landmarks, ranges, and relative locations to the visible landmarks.
The localizability probability quantifies the probability that the measurements obtained at one location match those obtained at other places on the map.
A lower localizability probability indicates that the obtained measurement is more uniquely associated with its location, making it easier to determine a global position fix.
Our analysis yields some interesting observations and connections to information theory. 
For instance, when the landmark intensity tends to infinity, the localizability probability will tend to one. 
This observation parallels the concept of having codewords with infinite lengths in communication systems. 
The implication is that error-free localization is achievable in this limiting scenario.

\section{Model and Problem Formulation} \label{sec:model_prob}
The landmarks that appear in the visual data are treated as points. We simplify the representation of the environment and focus on the spatial relationships between these points for localization purposes.
The locations of these points are modeled as a homogeneous PPP $\Phi = \left\{\nbx_i, i \in \nbbN\right\} \subset \nbbR^2$, where $\nbx_i$ denotes the landmark location. 
The landmark is {\em visible} to the target if the distance between them is less than the {\em maximum visibility distance} $d_v$. Thus, landmarks within the ball $B_{\nbx} = \nbb\!\left( \nbx, d_v\right) $ are {\em visible} to the target, where $\nbx$ is the target location. 
{\em It is worth noting that the landmark locations have been shown to follow a PPP in some settings of interest to vision-based localization, such as in~\cite{rohde2016localization}.}
The point representation of landmarks in this paper also has conceptual similarities with the graphical representation of landmarks discussed in prior works~\cite{verde2020ground}. In addition to providing a complementary probabilistic approach to this problem, our work offers an alternate definition for the weights of edges in these graphs through the measurement from vision sensors. Since we focus on probabilistic analysis, we will not explore this graph theory perspective in this paper. 

\subsection{Landmark Patterns} 
The visual information surrounding the target and the map of landmark locations are used to localize the target.
Specifically, the target location is inferred by comparing the actual geometric arrangement of landmarks and the locations on the map.
The geometric arrangement of landmarks around location $\nbx$ is defined as the {\em landmark pattern} at $\nbx$, represented as
\begin{equation} \label{eqdef:lmk-pattern}
	g(\nbx) = \left\{\tilde{\nbx}_i \mid \tilde{\nbx}_i = \nbx_i - \nbx, \nbx_i \in \Phi \cap B_{\nbx}\right\},
\end{equation}
where $g$ is a mapping from $\nbx$ to the relative locations of visible landmarks. 
It is important to note that there is no absolute location reference, and the absolute locations of these visible landmarks remain unknown.

\subsection{Measurements}
The form of the measurements obtained by the target varies depending on the application. Examples of the {\em measurements} include ranges, angles, and the relative locations of visible landmarks. 
To maintain generality, we model the measurement as a function of its location, denoted by $f(\nbx)$, where $\nbx$ represents the location where the measurement is obtained. 
The function $f = h \circ g$ is a function composition, where the mapping $h$ represents the {\em measurement process} on the landmark pattern. 
The measurement error is modeled as additive noise with finite support, meaning all possible noisy measurements at a location are bounded. We use a bounded set $F_{\epsilon}(\nbx)$ to represent all possible noisy measurements that could be obtained at $\nbx$.

\begin{comment}
\chb{Depending upon the application, the target located at $\nbx$ might take different measurements on the landmark pattern $g(\nbx)$.
%Depending upon the application, the target might take different measurements on $g(\nbx)$, which means that the measurements related to $g(\nbx)$ will be application-specific. 
Examples of the {\em measurements} include depth/range, relative angles, and the relative locations of visible landmarks. 
%For our analysis, describing specific methods for obtaining these measurements is unimportant. 
To maintain generality, we model the measurement as a function of its location, denoted by $f(\nbx)$, where $\nbx$ represents the target location, and $f = h \circ g$ is a function composition. 
The mapping $h$ represents the {\em measurement process} and is application-specific. 
The measurement error is modeled as additive noise with finite support.
Consequently, the measurements obtained at location $\nbx$ are random and take values within the set $F_{\epsilon}(\nbx)$.}
\end{comment}

\subsection{Problem Formulation}
\begin{figure} \label{fig:system-model}
	\centering
	\includegraphics[width=0.7\textwidth]{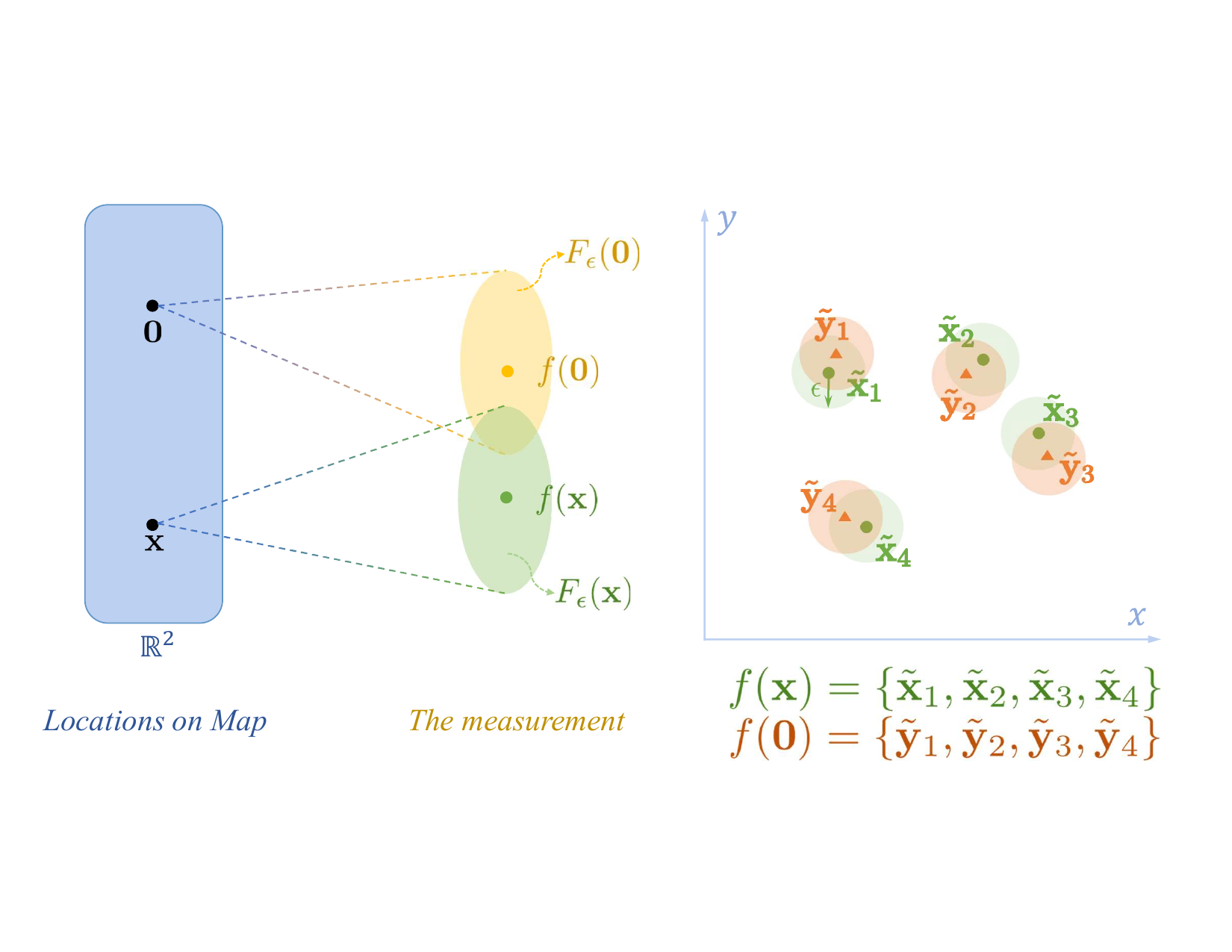}
	\caption{An illustration of the {\em overlapping} noisy measurements.}
\end{figure}
The localization performance is evaluated at candidate locations on the map.
These candidate locations form a PPP $\Psi = \left\{\nbx_i, i \in \nbbN\right\} \subset \nbbR^2$.
Conditioned on a point $\nbx_0 \in \Psi$, the probability that $F_{\epsilon}$ and $F_{\epsilon}(\nbx_0)$ do not overlap, is donated as
\begin{align} 
    \nbbP^{\nbx_0}\left[F_{\epsilon}(\nbx_0) \cap F_{\epsilon} = \varnothing\right] = \nbbP\!\left[ F_{\epsilon}(\nbx_0) \cap F_{\epsilon} = \varnothing \mid \nbx_0 \in \Psi \right],
\end{align}
where $\nbbP^{\nbx_0}(\cdot)$ is termed the Palm probability conditioned on a point at $\nbx_0$~\cite{chiu2013stochastic,BacBlaB2009,andrews2023introduction}, $F_{\epsilon}$ is a given bounded set of measurements, $F_{\epsilon}(\nbx_0)$ is the set of all possible measurements at $\nbx_0$. Palm probability characterizes the properties of a point process when one of its points is conditioned to lie at a specific location, which in this case is $\nbx_0$. 
For stationary point processes, the exact choice of $\nbx_0$ does not matter and hence we can consider it to be the origin \begin{align}
    \nbbP^{\nbx_0}\left[F_{\epsilon}(\nbx_0) \cap F_{\epsilon} = \varnothing\right] = \nbbP^{\nbzero}\left[F_{\epsilon}(\nbzero) \cap F_{\epsilon} = \varnothing\right].
\end{align}
An interpretation of the Palm probability is the proportion of points with the desired property in an arbitrary Borel set $B$, given by
\begin{align}
    \nbbP^{\nbzero}\left[F_{\epsilon}(\nbzero) \cap F_{\epsilon} = \varnothing\right] &= \frac{1}{\lambda |B|} \nbbE \left[ \sum_{\nbx \in \Psi} \nb1\!\left(F_{\epsilon}(\nbx) \cap F_{\epsilon} = \varnothing\right) \nb1\!\left(\nbx \in B\right) \right].
\end{align}
In terms of localization, it is the proportion of candidate locations where the obtained measurements are not in $F_{\epsilon}$, reflecting how often the measurements in $F_{\epsilon}$ appear on the map. 
Now, we apply Slivnyak's theorem~\cite{chiu2013stochastic,BacBlaB2009,andrews2023introduction} and define the {\em conditional localizability probability}
\begin{align}
    P_{\rm C, Loc} = \nbbP^{\nbzero}\left[F_{\epsilon}(\nbzero) \cap F_{\epsilon} = \varnothing\right] = \nbbP\left[F_{\epsilon}(\nbzero) \cap F_{\epsilon} = \varnothing\right].
\end{align}
This essentially follows from the fact that in a PPP, conditioning on a point at the origin does not change the distribution of the rest of the point process.
The {\em conditional localizability probability} represents the probability that the obtained measurements at the origin are not in $F_{\epsilon}$.

Next, we define the bounded set $F_{\epsilon}$ as the noisy measurements obtained from an arbitrarily selected candidate location $\nbx$. 
We are interested in the probability that the measurements obtained at the origin and the selected candidate location do not overlap. 
The probability of this event is defined as the {\em localizability probability}
\begin{align} \label{eqdef:Ploc}
    P_{\rm Loc} = \nbbP\left[F_{\epsilon}(\nbzero) \cap F_{\epsilon}(\nbx) = \varnothing\right],
\end{align}
where $F_{\epsilon}(\nbzero)$ and $F_{\epsilon}(\nbx)$ represent the noisy measurements obtained at the origin and $\nbx$, respectively. 
Notably, in~\eqref{eqdef:Ploc}, when the location $\nbx$ is positioned near the origin, the measurements will be similar to those obtained exactly at the origin. 
However, this does not present any technical issues, as clarified in the following remark
\begin{remark} \label{rem:localizability}
	Intuitively, measurements obtained around the origin will also be similar to those obtained strictly at the origin. 
    Even if candidate locations around the origin (which are finite) yield similar measurements, the number of such locations remains bounded. This implies that the probability of selecting a candidate location from around the origin is almost surely zero. 
    As a result, additional constraints are not needed in our analysis to capture the above intuition about localizability.
\end{remark}
Since the landmark locations form a homogeneous PPP, the landmarks around the origin and $\nbx$ can be considered independently and identically distributed.
Therefore, the probability in~\eqref{eqdef:Ploc} can be evaluated using the measurements from two independent realizations of the landmark locations. 
For convenience, we will occasionally present results for the {\em non-localizability probability} defined as
\begin{equation} \label{eqdef:nPloc}
	P_{\rm N-Loc} = \nbbP\!\left[F_{\epsilon}(\nbzero) \cap F_{\epsilon}(\nbx) \neq \varnothing\right] = 1 - P_{\rm Loc}.
\end{equation}

From an information theory perspective, we can treat the target location $\nbx$ as the source message. The measurements at location $\nbx$ encode this message, represented by the codeword $f(\nbx)$. 
The measurement error can be considered as additive noise in the channel. The receiver receives a noise-distorted codeword $\tilde{f}(\nbx)$ and aims to recover the source message $\nbx$.
Two types of errors occur when decoding the source message: (a) multiple locations are encoded with the same codewords, meaning that measurements at different locations are identical, and (b) the noise-distorted codeword lies in the decision region associated with another source message.
The former type of error suggests that the information conveyed by the codeword $f(\nbx)$ is insufficient to determine the location $\nbx$. 
For example, if the measurement is the number of visible landmarks, denoted as $f(\nbx) = | g(\nbx)|$, other locations on the map may also have the same number of visible landmarks as $\nbx$.
The latter type of error relates to the design of codewords in the presence of noise.
Ensuring that codewords (or measurements) are selected to provide more reliable and robust information becomes important.
Specific connections between our model and information theory are explored further in later sections.

\section{Number of Visible Landmarks} \label{sec:num_lmk}
Before we delve into the various types of measurements, we start with the most straightforward measurement: the number of visible landmarks. 
Here, the visible landmarks refer to the landmarks within the visibility region of the target.
The corresponding mapping $f$ is the cardinality of the landmark pattern $g(\nbx)$, denoted as:
\begin{equation}
	f: \nbbR^2 \mapsto \nbbN, \nbx \mapsto f(\nbx)= | g(\nbx) | = N\!\left(B_{\nbx}\right) = M_{\nbx},
\end{equation}
where $M_{\nbx}$ represents the number of visible landmarks at location $\nbx$. 
The non-localizability probability in~\eqref{eqdef:nPloc} can be written as
\begin{align}
	P_{\rm N-Loc} = \nbbP\!\left[F_{\epsilon}(\nbx) \cap F_{\epsilon}(\nbzero) \neq \varnothing \right] = \nbbP\!\left[M_{\nbx} = M_{\nbzero}  \right],
\end{align}
which is the probability that the numbers of visible landmarks at location $\nbx$ and the origin are the same.
We derive the localizability probability based on the number of visible landmarks in the following lemma.
\begin{lemma} \label{lem1}
	The localizability probability $P_{\rm Loc}$ based on the number of visible landmarks is
	\begin{align} \label{lem1-lb}
		P_{\rm Loc} = 1 - e^{-2 m} I_0 (2 m),
	\end{align}
	where $m = \lambda |B_{\nbx}| = \lambda \pi d_v^2$ is the average number of visible landmarks.
\end{lemma}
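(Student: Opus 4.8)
The plan is to compute $P_{\rm N\text{-}Loc} = \nbbP[M_{\nbx} = M_{\nbzero}]$ directly, using the key structural fact already noted in the paper: because $\Phi$ is a homogeneous PPP and we are evaluating the measurements at two distinct locations (with the candidate location $\nbx$ chosen so that $B_{\nbx}$ and $B_{\nbzero}$ can be treated as carrying independent landmark counts), the random variables $M_{\nbx}$ and $M_{\nbzero}$ are independent and each is Poisson distributed with parameter $m = \lambda \pi d_v^2$. I would state this reduction explicitly, citing Remark~\ref{rem:localizability} and the independence discussion preceding~\eqref{eqdef:nPloc} to justify that the overlap region contributes nothing of measure-theoretic consequence and that the two counts may be taken i.i.d.\ Poisson$(m)$.

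**Main computation.**

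With $M_{\nbx}, M_{\nbzero} \sim \Pois(m)$ independent, I would write
\begin{align}
	P_{\rm N\text{-}Loc} = \nbbP[M_{\nbx} = M_{\nbzero}] = \sum_{k=0}^{\infty} \nbbP[M_{\nbx} = k]\,\nbbP[M_{\nbzero} = k] = \sum_{k=0}^{\infty} \left( e^{-m} \frac{m^k}{k!} \right)^{\!2} = e^{-2m} \sum_{k=0}^{\infty} \frac{m^{2k}}{(k!)^2}.
\end{align}
The final sum is exactly the series expansion of the modified Bessel function of the first kind of order zero evaluated at $2m$, namely $I_0(z) = \sum_{k=0}^{\infty} (z/2)^{2k}/(k!)^2$, so that $\sum_{k\ge 0} m^{2k}/(k!)^2 = I_0(2m)$. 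Hence $P_{\rm N\text{-}Loc} = e^{-2m} I_0(2m)$, and by~\eqref{eqdef:nPloc}, $P_{\rm Loc} = 1 - e^{-2m} I_0(2m)$, which is~\eqref{lem1-lb}.

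**Where the real work lies.**

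The computation itself is a two-line exercise once the probabilistic setup is in place; the only genuine subtlety — and the step I would treat most carefully — is justifying the independence of $M_{\nbx}$ and $M_{\nbzero}$. Strictly, if $\nbx$ is an arbitrary candidate location then $B_{\nbx}$ and $B_{\nbzero}$ overlap whenever $\|\nbx\| < 2d_v$, and the counts are not independent on the overlap. The paper sidesteps this via the observation (stated just before~\eqref{eqdef:nPloc}) that the candidate location is drawn from an independent PPP $\Psi$, so one should argue that conditionally on the landmark configuration the relevant event is evaluated using measurements from two independent realizations of the landmark locations — equivalently, that we may replace the single PPP by two independent copies. I would make this explicit by invoking the i.i.d.\ nature of the homogeneous PPP restricted to disjoint observation windows and the fact (Remark~\ref{rem:localizability}) that the atypical near-origin candidates form an a.s.\ finite, hence null, set under $\Psi$. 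After that reduction, recognizing the resulting even-power reciprocal-factorial-squared series as $I_0(2m)$ is the only remaining ingredient, and it is immediate from the standard series definition of $I_0$.
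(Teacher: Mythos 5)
Your proposal is correct and follows essentially the same route as the paper: both treat $M_{\nbx}$ and $M_{\nbzero}$ as independent $\Pois(m)$ variables, sum $\bigl(e^{-m}m^k/k!\bigr)^2$ over $k$, and identify the resulting series with $e^{-2m}I_0(2m)$. Your extra care in justifying the independence of the two counts (via the two-independent-realizations reduction and Remark~\ref{rem:localizability}) is a welcome elaboration of a step the paper states only briefly, but it does not change the argument.
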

\begin{proof}
	Since $M_{\nbx}$ and $M_{\nbzero}$ are two independent Poisson random variables, the probability of $M_{\nbx} = M_{\nbzero}$ can be written as 
	\begin{align}
		\nbbP\!\left[M_{\nbx} = M_{\nbzero} \right] &= \sum_{k = 0}^{\infty} \nbbP\!\left[M_{\nbzero} = k \right] \cdot \nbbP\!\left[M_{\nbx} = k \right] \label{lb}\\
		&= \sum_{k = 0}^{\infty} \left(\frac{m^k}{k!} e^{-m}\right)^2 \\
		&= e^{-2m} I_0\!\left(2m\right),
	\end{align}
	where $I_0(\cdot)$ is the modified Bessel function of the first kind. Therefore, the localizability probability is 
	\begin{equation} \label{lb-proof}
		P_{\rm Loc} = 1 - \nbbP\!\left[M_{\nbx} = M_{\nbzero} \right] = 1 - e^{-2m} I_0\!\left(2m\right).
	\end{equation}
	This completes the proof.
\end{proof}
Lemma~\ref{lem1} presents the preliminary result on the localizability probability based on the number of visible landmarks. It illustrates that $P_{\rm Loc}$ is an increasing function of $m$, meaning that a larger visibility distance $d_v$ or landmark intensity $\lambda$ will lead to a higher localizability probability.
Additionally, $P_{\rm Loc}$ established in Lemma~\ref{lem1} serves as a lower bound for the localizability probabilities based on other types of measurements, as examined in the subsequent sections.
A straightforward explanation is that other types of measurements incorporate additional information. 
For example, the information conveyed by the number of landmarks is contained within the number of ranges and relative locations of landmarks.

\begin{remark} \label{remark}
    Assuming the target can always determine its location whenever a visible landmark is present, an upper bound of the localizability probability can be obtained by considering the $k=0$ term in equation~\eqref{ub}
	\begin{equation}
		\label{ub}
		P_{\rm Loc} \leq 1 - e^{-2m}.
	\end{equation}
    When there is no visible landmark, no additional information can be obtained beyond the number of visible landmarks being zero.
\end{remark}

\section{Case 1: Range Vectors}\label{sec:case1}
In this section, we consider the scenario where the target is equipped with a range scanner that can sequentially measure the ranges to each visible landmark, starting from a fixed orientation, such as true north.
The ordering of the range measurements provides information about the geometric arrangement of the landmarks. 
The obtained range measurements at location $\nbx$ can be represented as a vector, given by
\begin{equation}
	\nbr_{\nbx} = \left[r_1, \dots, r_i, \dots, r_{N_\nbx}\right],
\end{equation}
where $r_i = \|\tilde{\nbx}_i\|, \tilde{\nbx}_i \in g(\nbx)$ is the range to the $i$-th landmark, counted clockwise from the true north. Thus, the mapping $f$ is
\begin{equation}
	\label{def:c1-f}
	f: \nbbR^2 \mapsto \nbbR^{N_\nbx}, \nbx \mapsto f(\nbx) = \nbr_{\nbx},
\end{equation}
where $\nbr_{\nbx}$ represents the {\em range vector} at the location $\nbx$. We assume that measurement error associated with each range measurement $r_i \in \nbr_{\nbx}$ has finite support and does not exceed $\epsilon/2$.
Therefore, the possible outcomes of the noisy measurements at $\nbx$ can be characterized using a bounded set
\begin{equation}
	\label{def:c1-fe}
	F_{\epsilon}(\nbx) = \left\{\tilde{\nbr}_{\nbx} \mid \tilde{\nbr}_{\nbx} = \nbr_{\nbx} + \nbn, \| \nbn \|_{\infty} \leq \epsilon/2 \right\}.
\end{equation}
We define the distance between range vectors by generalizing the Chebyshev distance
\begin{equation}
	\label{def:c1-vec-dist}
	\Delta_{\rm p}\!\left(\nbr_i, \nbr_j\right) = 
	\begin{cases}
		\left\| \nbr_i - \nbr_j \right\|_{\infty}, & \dim(\nbr_i) = \dim(\nbr_j)\\
		\infty, & \dim(\nbr_i) \neq \dim(\nbr_j)
	\end{cases},
\end{equation}
where $\left\|\cdot\right\|_{\infty}$ is the infinity norm. Now, the definition in~\eqref{def:c1-fe} is equivalent to
\begin{equation}
	F_{\epsilon}(\nbx) = \left\{\tilde{\nbr}\mid \Delta_{\rm p}(\tilde{\nbr},\nbr_{\nbx}) \leq \epsilon/2 \right\}.
\end{equation}
\begin{figure*}
	\centering
	\subfigure[]{
	\begin{minipage}[t]{0.3\textwidth}
		\centering
		\includegraphics[width=\textwidth]{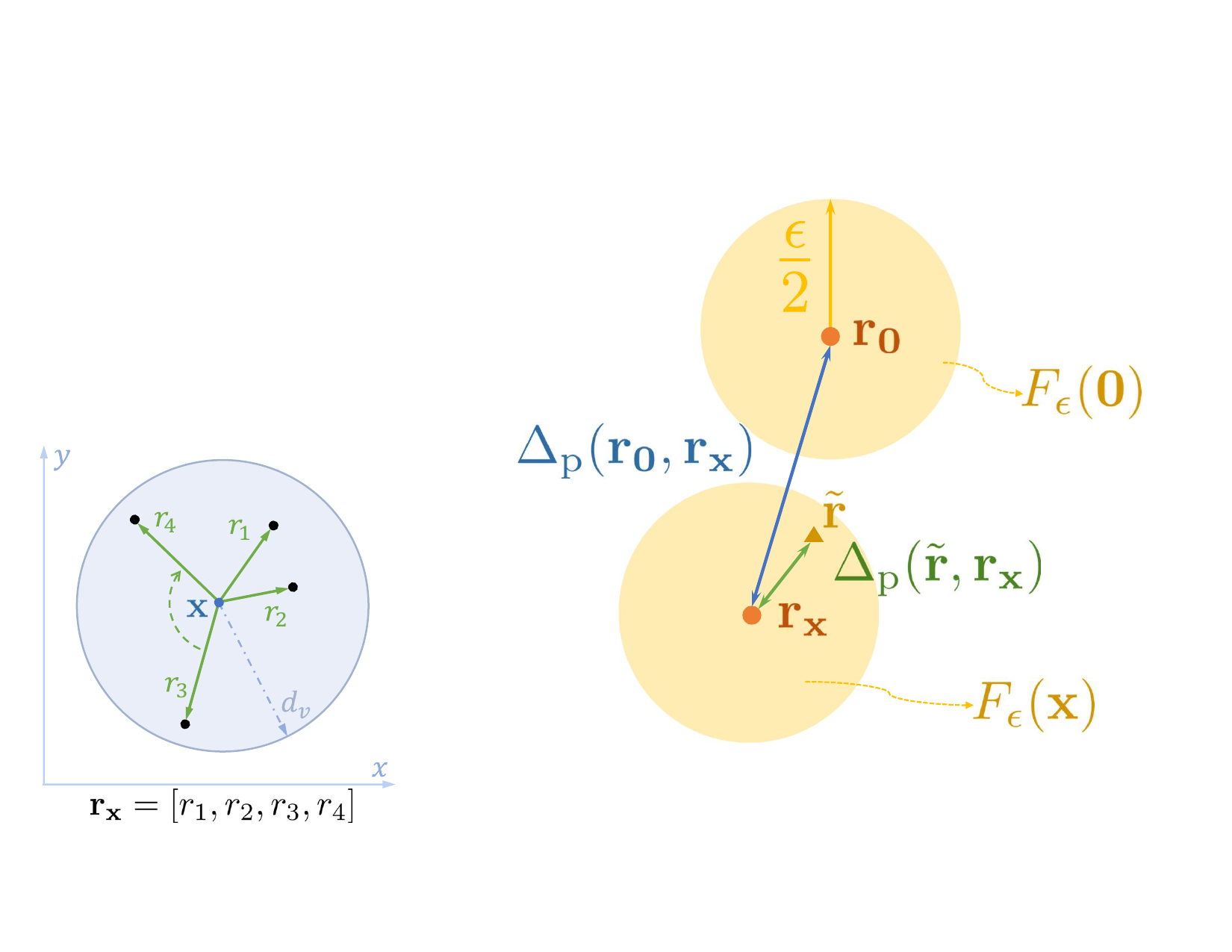}
		\label{fig:c1-model-1}
	\end{minipage}
	}
	\subfigure[]{
	\begin{minipage}[t]{0.3\textwidth}
		\centering
		\includegraphics[width=\textwidth]{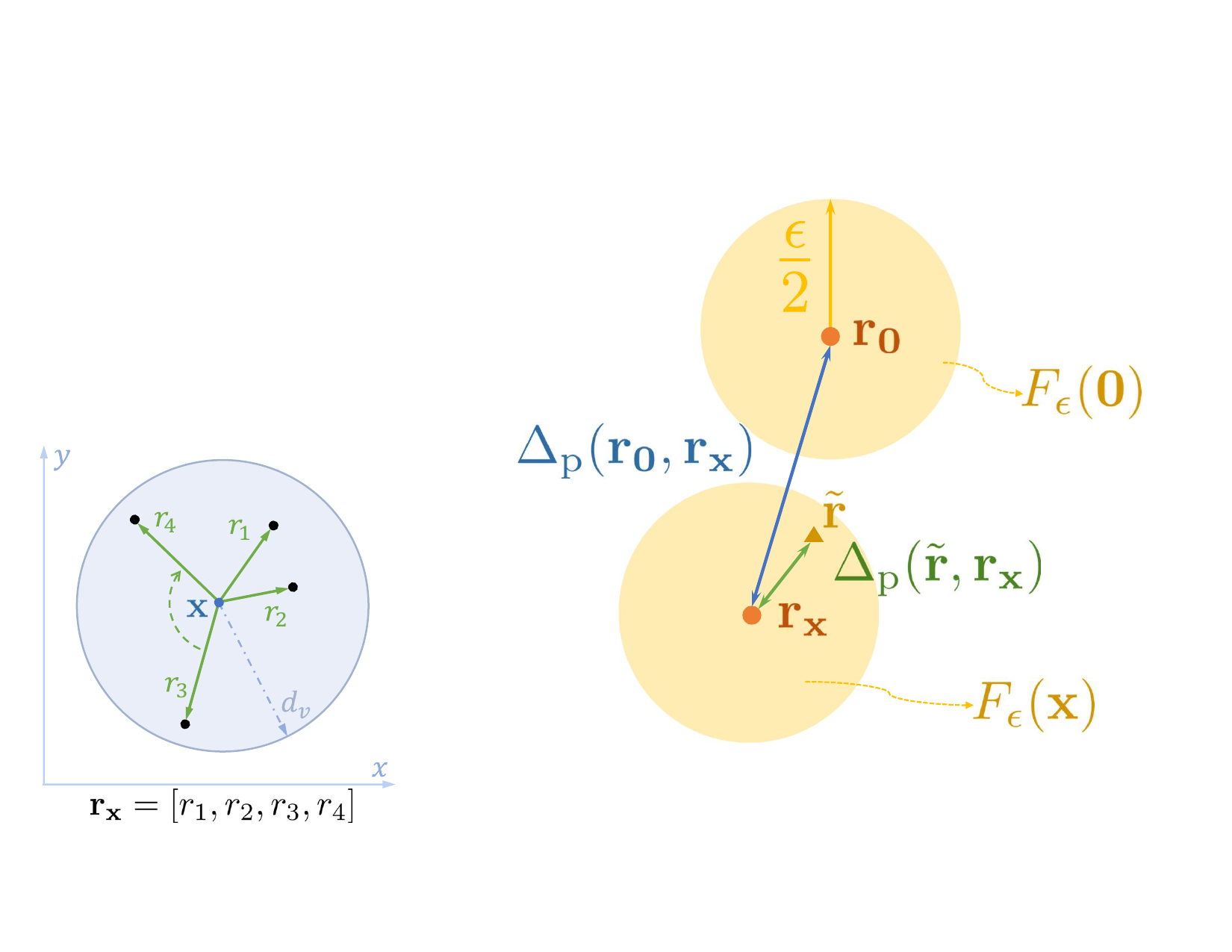}
		\label{fig:c1-model-2}
	\end{minipage}
	}
	\caption{An illustration of (a) the range vector measurement and (b) the distances between range vector measurements.}
	\label{fig:c1-model}
\end{figure*}

As we show in~\figref{fig:c1-model-2}, $\nbr_\nbzero$ and $\nbr_\nbx$ represent the noise-free range vectors obtained at the origin and at $\nbx$, respectively. When $\Delta_{\rm p}\!\left(\nbr_{\nbzero}, \nbr_{\nbx}\right) > \epsilon$, the noisy measurements at two locations do not overlap. 
Therefore, the localizability probability is equivalently written as
\begin{align}
	\label{def:c1-ploc}
	P_{\rm Loc} = \nbbP\! \left[\Delta_{\rm p}\!\left(\nbR_{\nbzero}, \nbR_{\nbx}\right) > \epsilon\right],
\end{align}
which represents the probability that the noise-free ordered distance vectors at locations $\nbx$ and the origin have a minimum separation of $\epsilon$. We present the statistical results necessary for analyzing localizability probability in the following lemmas.
First, given the number of visible landmarks, we derive the distribution of range measurements in the following lemma.
\begin{lemma}
	\label{lem2}
	Given a fixed number of visible landmarks $N(B_{\nbx}) = N_\nbx = k , (k > 0)$, the conditional distribution of the distance $R$ from any visible landmark to the location $\nbx$ is 
	\begin{equation}
		f_R (r) = \frac{2r}{d_v^2} \cdot \delta\!\left(0 \leq r \leq d_v\right),
	\end{equation}
	where $\delta\!\left(0 \leq r \leq d_v\right)$ is the Dirichlet function
    \begin{align}
     \delta\!\left(0 \leq r \leq d_v\right) = \begin{cases}
         1, &0 \leq r \leq d_v \\
         0, &otherwise
     \end{cases},
    \end{align}
    and $\nbx$ is an arbitrary location on the map.
\end{lemma}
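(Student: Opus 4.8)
The plan is to reduce the statement to the well-known conditional uniformity property of the homogeneous PPP and then carry out an elementary area computation. Recall that for a bounded Borel set $B$ with $0<|B|<\infty$, conditioned on the event $\{N(B)=k\}$ the $k$ points of $\Phi$ lying in $B$ are distributed as $k$ independent random variables, each uniform on $B$ (equivalently, the restriction of a Poisson process to a bounded window is a mixed binomial process). I would invoke this as a standard fact from \cite{chiu2013stochastic,andrews2023introduction}; if a self-contained argument is wanted, it follows from the explicit product form of the density of the Poisson process restricted to $B$, or from the void-probability characterization.

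The first step is to specialize $B$ to the visibility region $B_{\nbx}=\nbb(\nbx,d_v)$, a disk of radius $d_v$ with $|B_{\nbx}|=\pi d_v^2$. By the conditional uniformity property and the definition of the landmark pattern in \eqref{eqdef:lmk-pattern}, conditioned on $N(B_{\nbx})=k$ each relative position $\tilde{\nbx}_i=\nbx_i-\nbx$ is uniform on $\nbb(\nbzero,d_v)$. Since the $k$ visible landmarks are exchangeable, the marginal law of the distance $R=\|\tilde{\nbx}_i\|$ is the same for every $i$, so it suffices to analyze one of them; moreover this marginal law does not depend on the particular value of $k$ as long as $k\ge 1$, which is why the hypothesis only needs $k>0$.

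The second step is the computation: for $0\le r\le d_v$,
\begin{align}
\nbbP\!\left[R\le r \mid N(B_{\nbx})=k\right] = \frac{|\nbb(\nbzero,r)|}{|\nbb(\nbzero,d_v)|} = \frac{r^2}{d_v^2},
\end{align}
and $R$ is supported on $[0,d_v]$. Differentiating in $r$ gives $f_R(r)=\tfrac{2r}{d_v^2}$ on $[0,d_v]$ and $0$ otherwise, i.e.\ $f_R(r)=\tfrac{2r}{d_v^2}\,\delta(0\le r\le d_v)$ with $\delta(\cdot)$ the indicator of $[0,d_v]$ as defined in the statement. There is no real obstacle here — the result is essentially textbook material; the only points needing a little care are invoking the conditional uniformity (mixed-binomial) property correctly and observing that exchangeability makes the law common to all visible landmarks and independent of $k>0$.
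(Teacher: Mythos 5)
Your proposal is correct and follows the same route as the paper: both rest on the conditional uniformity (mixed binomial) property of the homogeneous PPP in a bounded window, with your version simply spelling out the area-ratio computation $\nbbP[R\le r]=r^2/d_v^2$ that the paper leaves implicit. No issues.
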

\begin{proof}
This follows directly from the property of PPP. Given the number of points in a homogeneous PPP within a Borel set $B$, the points are independently and uniformly at random distributed in $B$.	
\end{proof}
\begin{remark}
	Since the homogeneous PPP is stationary, the probability density function of the range measurements $f_R(r)$ are identical regardless of which landmark they are obtained from.
\end{remark}

Using Lemma~\ref{lem2}, we then provide the probability that the range measurement $R$ is within distance $\epsilon$ to $r$ in the following lemma.

\begin{lemma}
	\label{lem3}
	Conditioned on the number of visible landmarks, the probability that the distance between the range measurement $R$ and a given $r$ is smaller than $\epsilon$ is given below.
	
	When $0 \leq \epsilon < \frac{d_v}{2}$,
	\begin{align}
		\label{eq:lem3-c1}
		\begin{aligned}
			\nbbP\!&\left[ \left|R - r\right| \leq \epsilon \mid N_\nbx = k\right] = \begin{cases}
				\dfrac{\left(r + \epsilon\right)^2}{d_v^2}, & 0 \leq r < \epsilon, \\
				\dfrac{4\epsilon r}{d_v^2}, & \epsilon \leq r < d_v - \epsilon,\\
				1 - \dfrac{\left(r - \epsilon\right)^2}{d_v^2}, & d_v - \epsilon \leq r \leq d_v,
			\end{cases}
		\end{aligned}
	\end{align}
	where $N_\nbx = N(B_{\nbx}) = \#\{\Phi \cap \nbb\left(\nbx, d_v\right)\} = k > 0$ is the number of visible landmarks at an arbitrary location $\nbx$.
	
	When $\frac{d_v}{2} \leq \epsilon < d_v$,
	\begin{align}
		\label{eq:lem3-c2}
		\begin{aligned}
			\nbbP\!&\left[ \left|R - r\right| \leq \epsilon \mid N_\nbx = k\right] = \begin{cases}
				\dfrac{\left(r + \epsilon\right)^2}{d_v^2}, & 0 \leq r < d_v - \epsilon, \\
				1, & d_v - \epsilon \leq r < \epsilon,\\
				1 - \dfrac{\left(r - \epsilon\right)^2}{d_v^2}, & \epsilon \leq r \leq d_v,
			\end{cases}
		\end{aligned}	
	\end{align}
	
	When $d_v < \epsilon$,
	\begin{align}
		\label{eq:lem3-c3}
		\nbbP\!\left[ \left|R - r\right| \leq \epsilon \mid N_\nbx = k\right]  = 1.
	\end{align}
\end{lemma}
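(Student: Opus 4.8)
\textbf{Proof proposal for Lemma~\ref{lem3}.}
The plan is to reduce everything to the conditional cumulative distribution function (CDF) of $R$ supplied by Lemma~\ref{lem2} and then to carry out a routine case analysis on the clipped interval $[r-\epsilon, r+\epsilon]$. First I would integrate the density $f_R(\rho) = \frac{2\rho}{d_v^2}\,\delta(0\le \rho\le d_v)$ to obtain the conditional CDF
\begin{align}
F_R(t) = \begin{cases}
0, & t < 0,\\
\dfrac{t^2}{d_v^2}, & 0 \le t \le d_v,\\
1, & t > d_v.
\end{cases}
\end{align}
Since $R$ is supported on $[0,d_v]$, for any $r \ge 0$ and $\epsilon > 0$ we can write
\begin{align}
\nbbP\!\left[\left|R-r\right|\le \epsilon \mid N_\nbx = k\right]
= F_R\!\left(\min\{r+\epsilon,\, d_v\}\right) - F_R\!\left(\max\{r-\epsilon,\, 0\}\right),
\end{align}
which removes all reference to $N_\nbx$ beyond $k>0$ (the conditional law is the same for every such $k$ by Lemma~\ref{lem2} and its remark).

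The remaining work is to evaluate the two clipping operations. The lower clip $\max\{r-\epsilon,0\}$ switches from $0$ to $r-\epsilon$ as $r$ crosses $\epsilon$; the upper clip $\min\{r+\epsilon,d_v\}$ switches from $r+\epsilon$ to $d_v$ as $r$ crosses $d_v-\epsilon$. The key structural observation is that the relative order of the two thresholds $\epsilon$ and $d_v-\epsilon$ flips at $\epsilon = d_v/2$, which is exactly why the statement has two separate case lists. For $0\le\epsilon<d_v/2$ we have $\epsilon < d_v-\epsilon$, so the three sub-intervals are $[0,\epsilon)$, $[\epsilon, d_v-\epsilon)$, $[d_v-\epsilon, d_v]$; substituting into the CDF formula gives $\frac{(r+\epsilon)^2}{d_v^2}$, $\frac{(r+\epsilon)^2-(r-\epsilon)^2}{d_v^2} = \frac{4\epsilon r}{d_v^2}$, and $1-\frac{(r-\epsilon)^2}{d_v^2}$ respectively, matching~\eqref{eq:lem3-c1}. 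For $d_v/2\le\epsilon<d_v$ we have $d_v-\epsilon\le\epsilon$, so the middle sub-interval becomes $[d_v-\epsilon,\epsilon)$ on which both clips are active simultaneously, giving $F_R(d_v)-F_R(0)=1$, and the outer two pieces give $\frac{(r+\epsilon)^2}{d_v^2}$ and $1-\frac{(r-\epsilon)^2}{d_v^2}$, matching~\eqref{eq:lem3-c2}. Finally, when $\epsilon\ge d_v$, for every $r\in[0,d_v]$ we have $r-\epsilon\le 0$ and $r+\epsilon\ge d_v$, so the probability is $F_R(d_v)-F_R(0)=1$, which is~\eqref{eq:lem3-c3}.

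The only point requiring a small amount of care — and the closest thing to an obstacle — is verifying that within each sub-case the endpoints really do land where the case split claims; e.g.\ in the regime $0\le\epsilon<d_v/2$ and the sub-case $d_v-\epsilon\le r\le d_v$, one must check $r-\epsilon\ge d_v-2\epsilon>0$ so that the lower clip is genuinely inactive, and symmetrically that $r+\epsilon\ge d_v$ so the upper clip is active; both follow immediately from $\epsilon<d_v/2$. Beyond this bookkeeping, the lemma is a direct computation, so I would keep the write-up to the CDF identity plus the three-by-(two/three) table of substitutions.
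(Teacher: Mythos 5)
Your proof is correct and follows essentially the same route as the paper: the paper likewise writes the probability as $\int_{\max\{r-\epsilon,0\}}^{\min\{r+\epsilon,d_v\}} \frac{2\rho}{d_v^2}\,\mathrm{d}\rho$ and then case-splits on $\epsilon$ and $r$, which is exactly your CDF-difference identity with the clipping analysis made explicit. Your write-up just spells out the bookkeeping (the flip of the thresholds at $\epsilon=d_v/2$) that the paper leaves implicit.
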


\begin{proof}
	By definition and Lemma~\ref{lem2}, we have 
	\begin{align}
		\begin{aligned}
			\nbbP\!&\left[ \left|R - r\right| \leq \epsilon \mid N_\nbx = k\right] = \int_{r - \epsilon}^{r + \epsilon} f_R(d) \, \nrmd d = \int_{\max \{r - \epsilon, 0\}}^{\min \{r + \epsilon, d_v\}} \frac{2d}{d_v^2} \, \nrmd d,
		\end{aligned}
	\end{align}
	where the last equation is by that $f_R(d)$ only has non-zero values when $d \in [0, d_v]$. The result in Lemma~\ref{lem3} is derived by considering different values of $\epsilon$ and $r_i$. This completes the proof.
\end{proof}

Further, we derive conditional localizability probability $P_{\rm C, Loc}$, which is the probability that the distance between $\nbR_{\nbzero}$ and a given range vector $\nbr_{\nbx} = \left[r_1, \dots, r_k\right]$ is greater than $\epsilon$. The result is presented in the following lemma.
\begin{lemma}
	\label{lem4}
	Given a {\em range vector} $\nbr_{\nbx} \in \nbbR^k$, the conditional localizability probability is
	\begin{align}
		\begin{aligned}
			P_{\rm C, Loc} &= 1 - \nbbP\!\left[\Delta_{\rm p}\!\left(\nbR_{\nbzero}, \nbR_{\nbx}\right)\leq \epsilon \mid \nbR_{\nbx} = \nbr_{\nbx}, N_\nbx = k\right]\\
			&= 1 -  \frac{m^k}{k !} e^{-m}\!\left\{\prod_{i = 1}^{k} \nbbP\!\left[\left|R_i - r_i\right| \leq \epsilon \mid N_\nbzero = k\right]\right\},
		\end{aligned}
	\end{align}
    where $\nbR_{\nbzero} = \left[R_1, \dots, R_k\right]$ is the random range vector obtained at the origin.
\end{lemma}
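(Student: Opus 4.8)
The plan is to condition on the event $\{N_{\boldsymbol{x}} = k\}$ throughout and exploit two facts: (i) the random range vector $\boldsymbol{R}_{\boldsymbol{0}}$ at the origin has a dimension that equals $N_{\boldsymbol{0}}$, a Poisson random variable with mean $m$; and (ii) by the definition of $\Delta_{\rm p}$ in \eqref{def:c1-vec-dist}, the distance $\Delta_{\rm p}(\boldsymbol{R}_{\boldsymbol{0}}, \boldsymbol{r}_{\boldsymbol{x}})$ is finite only when $\dim(\boldsymbol{R}_{\boldsymbol{0}}) = \dim(\boldsymbol{r}_{\boldsymbol{x}}) = k$. Hence the event $\{\Delta_{\rm p}(\boldsymbol{R}_{\boldsymbol{0}}, \boldsymbol{r}_{\boldsymbol{x}}) \le \epsilon\}$ already forces $N_{\boldsymbol{0}} = k$, and I would split the probability as
\begin{align}
	\nbbP\!\left[\Delta_{\rm p}\!\left(\nbR_{\nbzero}, \nbr_{\nbx}\right)\leq \epsilon\right] = \nbbP\!\left[N_\nbzero = k\right] \cdot \nbbP\!\left[\Delta_{\rm p}\!\left(\nbR_{\nbzero}, \nbr_{\nbx}\right)\leq \epsilon \mid N_\nbzero = k\right],
\end{align}
with the first factor equal to $\frac{m^k}{k!}e^{-m}$ by the Poisson law of the number of visible landmarks.

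Next I would handle the conditional factor. Conditioned on $N_{\boldsymbol{0}} = k$, the generalized Chebyshev distance reduces to the ordinary infinity norm, so $\{\Delta_{\rm p}(\boldsymbol{R}_{\boldsymbol{0}}, \boldsymbol{r}_{\boldsymbol{x}}) \le \epsilon\} = \bigcap_{i=1}^{k}\{|R_i - r_i| \le \epsilon\}$. The key structural input is that, given $N_{\boldsymbol{0}} = k$, the $k$ visible landmarks are i.i.d.\ uniform in $B_{\boldsymbol{0}}$ (Lemma~\ref{lem2} and the property of the PPP quoted there), so the ranges $R_1, \dots, R_k$ are independent, each with density $f_R$. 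One subtlety to address cleanly: the range vector is defined by ordering the landmarks clockwise from true north, which is a permutation of the labels but does not change the multiset of ranges; because the landmark directions are themselves i.i.d.\ uniform and independent of the ranges, this reordering does not introduce dependence among the $R_i$'s. Therefore the joint event factorizes:
\begin{align}
	\nbbP\!\left[\bigcap_{i=1}^{k}\{|R_i - r_i| \le \epsilon\} \;\middle|\; N_\nbzero = k\right] = \prod_{i=1}^{k} \nbbP\!\left[|R_i - r_i| \le \epsilon \mid N_\nbzero = k\right],
\end{align}
each factor being given explicitly by Lemma~\ref{lem3}. Combining the two displays yields $\nbbP[\Delta_{\rm p}(\nbR_{\nbzero}, \nbr_{\nbx}) \le \epsilon] = \frac{m^k}{k!}e^{-m}\prod_{i=1}^{k}\nbbP[|R_i - r_i|\le\epsilon \mid N_\nbzero = k]$, and taking the complement gives the stated formula for $P_{\rm C, Loc}$.

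The main obstacle is conceptual rather than computational: one must argue carefully that the dimension mismatch in $\Delta_{\rm p}$ is handled correctly, i.e.\ that no contribution survives from realizations with $N_{\boldsymbol{0}} \ne k$, and that conditioning on $N_{\boldsymbol{x}} = k$ (the landmark count near $\boldsymbol{x}$) is independent of the landmark configuration near the origin — this is where Slivnyak's theorem and the independence of the PPP on disjoint regions (already invoked in the paper's setup, and harmless here since we may take $\boldsymbol{x}$ and the origin far apart, cf.\ Remark~\ref{rem:localizability}) are needed. Once that independence and the dimension bookkeeping are in place, the factorization over $i$ is immediate from Lemma~\ref{lem2}, and substituting the piecewise expressions from Lemma~\ref{lem3} completes the proof.
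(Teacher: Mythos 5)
Your proposal is correct and follows essentially the same route as the paper's proof: factor out $\nbbP[N_\nbzero = k] = \frac{m^k}{k!}e^{-m}$ via the dimension-matching requirement of $\Delta_{\rm p}$, then use the conditional i.i.d.-uniform property of PPP points given their count to factor the joint event into a product of the per-coordinate probabilities from Lemma~\ref{lem3}. Your explicit observation that the clockwise-from-north ordering is a permutation determined by the angles alone, which are independent of the (exchangeable) radii, is a welcome clarification of the step the paper attributes more loosely to ``motion-invariance.''
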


\begin{proof}
	To calculate $P_{\rm C, Loc}$, we consider the probability of the complementary event, i.e., 
    the obtained range vector $\nbR_{\nbzero}$ at the origin is within $\epsilon$ distance to $\nbr_\nbx$. The probability is 
	\begin{align}
		&\nbbP\!\left[\Delta_{\rm p}\!\left(\nbR_{\nbzero}, \nbR_\nbx\right)\leq \epsilon \mid \nbR_\nbx = \nbr_\nbx, N_\nbx = k\right] \notag\\
		&\overset{(a)}{=} \nbbP\!\left[\Delta_{\rm p}\!\left(\nbR_{\nbzero}, \nbr_\nbx\right)\leq \epsilon, \dim(\nbR_\nbzero) = k \right]\\
		&\overset{(b)}{=}  \nbbP\!\left[\Delta_{\rm p}\!\left(\nbR_{\nbzero}, \nbr_\nbx\right)\leq \epsilon, N(B_{\nbzero}) = k \right] \\
		&\overset{(c)}{=}\nbbP\!\left[\max_{i \in \{1,\dots,N\}}\left\{ \left|R_{i} - r_i\right| \right\} \leq \epsilon \mid N = k \right] \cdot \nbbP\!\left[ N_\nbzero = k  \right]\\
		&\overset{(d)}{=}\nbbP\!\left[\left|R_1 - r_1\right| \leq \epsilon, \dots, \left|R_k - r_k\right| \leq \epsilon \mid N_\nbzero = k\right] \cdot \nbbP\!\left[ N_\nbzero = k\right]\\
		&\overset{(e)}{=} \left\{\prod_{i = 1}^{k} \nbbP\!\left[\left|R_i - r_i\right| \leq \epsilon \mid N_\nbzero = k\right]\right\} \cdot \nbbP\!\left[ N_\nbzero = k \right], \label{eq1:lem3}
	\end{align}
	where $(a)$ follows from the fact that $ \dim(\nbR_\nbzero) = \dim(\nbr_\nbx)$ is the prerequisite of $\Delta_{\rm p}\!\left(\nbR_{\nbzero}, \nbr_\nbx\right)\leq \epsilon$; $(b)$ follows from the fact that the dimension of the range vector is equal to the number of visible landmarks; $(c)$ follows from $\Delta_{\rm p}\!\left(\nbR_{\nbzero}, \nbr_{\nbx}\right) =  \max_{i \in \{1,\dots,N\}} \left|R_{i} - r_i\right| $. Additionally, we employ the definition of the infinity norm to characterize the maximum distance between the elements of $\nbR_{\nbzero}$ and $\nbr_{\nbx}$; in $(d)$ we utilize the $i$-th element of$\nbr_{\nbx}$ to constrain $R_i$; and in $(e)$ the motion-invariance property of the PPP is used, meaning that range measurements from all directions are independent and identical to each other.
	
	The second term in~\eqref{eq1:lem3} is the probability that there are $k$ points in $\Phi$ lying in $B_{\nbzero}$, given as
	\begin{equation}
		\label{eq:pn}
		\nbbP\!\left[ N_\nbzero = k \right] = \frac{ m^k}{k !} e^{-m},
	\end{equation}
	where $ m = \Lambda(B_{\nbzero}) = \int_{B_{\nbzero}} \lambda \, \nrmd \nbx =\lambda \pi d_v^2$ ,  $\lambda$ is the intensity of landmarks, $d_v$ is the maximum visibility distance. This completes the proof.
\end{proof}

Lemma~\ref{lem4} provides the closed-form of conditional localizability probability, which indicates how frequently measurements similar to $\nbr_\nbx$ appear at the origin.
To further calculate the localizability probability, we remove the condition on $\nbr_\nbx$ by first deriving the joint probability density function of $\nbR_\nbx$ and the number of visible landmarks in $B_\nbx$. The result is presented in the following lemma.
\begin{lemma}
	\label{lem5}
	The joint probability density function of $\nbR_\nbx$ and $N_\nbx$ is
	\begin{align}
		f_{\nbR_\nbx, N_\nbx}&\!\left(\nbr_\nbx, k\right) = \frac{m^k}{k !} e^{-m} \prod_{i=1}^{k}\! \left\{\frac{2r_i}{d_v^2} \delta\!\left( 0 \leq r_i  \leq d_v \right)\right\}\!.
	\end{align}
\end{lemma}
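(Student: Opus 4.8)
The plan is to obtain the joint density of the (ordered) range vector $\nbR_\nbx$ and the count $N_\nbx$ by decomposing the joint law into the probability mass of the count times the conditional density of the range vector given the count, and then invoking the independence structure already established. First I would write, for a fixed $k>0$,
\begin{equation}
 f_{\nbR_\nbx, N_\nbx}\!\left(\nbr_\nbx, k\right) = \nbbP\!\left[N_\nbx = k\right] \cdot f_{\nbR_\nbx \mid N_\nbx}\!\left(\nbr_\nbx \mid k\right),
\end{equation}
which is just the chain rule for a mixed discrete/continuous pair. The first factor is the Poisson mass $\frac{m^k}{k!}e^{-m}$ with $m = \lambda\pi d_v^2$, exactly as in~\eqref{eq:pn} of Lemma~\ref{lem4}.

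For the conditional factor, I would appeal to the conditioning property of the homogeneous PPP used in Lemma~\ref{lem2}: given $N_\nbx = k$, the $k$ visible landmarks are i.i.d.\ uniform on the disk $B_\nbx = \nbb(\nbx, d_v)$. Consequently their ranges $R_1,\dots,R_k$ are i.i.d.\ with the common density $f_R(r) = \frac{2r}{d_v^2}\,\delta(0\le r\le d_v)$ from Lemma~\ref{lem2}, and by the motion-invariance (isotropy) of the PPP — the same argument used in step $(e)$ of the proof of Lemma~\ref{lem4} — the ordering by azimuth counted clockwise from true north does not introduce any dependence among the radial coordinates, since the angles are i.i.d.\ uniform on $[0,2\pi)$ independently of the radii. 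Hence the conditional density factorizes as $\prod_{i=1}^k \frac{2r_i}{d_v^2}\,\delta(0\le r_i\le d_v)$. Multiplying the two factors yields the claimed expression.

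The only point that needs a little care — and the place I would spend the most words — is justifying why reindexing the landmarks by their azimuthal order (rather than by an arbitrary exchangeable labeling) still leaves the radial vector with the stated product density. The clean way is to note that the joint law of the $k$ i.i.d.\ uniform points is exchangeable, so any measurable relabeling that depends only on the angular coordinates acts as a permutation that is independent of the (unordered) multiset of radii; since the radii are i.i.d., permuting them leaves their joint density invariant, and the density of the ordered-by-angle radial vector coincides with $\prod_{i=1}^k f_R(r_i)$ on its support. For $k=0$ the statement is interpreted as the atom $e^{-m}$ at the empty configuration (the empty product equals $1$), so no separate argument is required. This completes the plan.
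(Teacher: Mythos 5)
Your proposal is correct and follows essentially the same route as the paper: factor the joint law as $\nbbP[N_\nbx = k]\cdot f_{\nbR_\nbx\mid N_\nbx}(\nbr_\nbx\mid k)$, identify the first factor as the Poisson mass from~\eqref{eq:pn}, and use the conditional-uniformity and motion-invariance of the PPP to factorize the conditional density into $\prod_{i=1}^k f_R(r_i)$. Your added justification for why the azimuthal reindexing preserves the product form is a welcome refinement of a step the paper passes over silently, but it does not change the argument.
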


\begin{proof}
	Because of the motion-invariance property of PPP, the distribution of the number of points in $B_{\nbx}$ is invariant to the location $\nbx$. As a result, the distribution of $N_\nbx$ is identical to $N_\nbzero$, given in~\eqref{eq:pn}. By definition, the joint probability density function is
	\begin{align}
		f_{\nbR_\nbx, N_\nbx}&\!\left(\nbr_\nbx, k\right) = f_{\nbR_\nbx|N_\nbx}\!\left(\nbr_\nbx|k\right) \cdot \nbbP\!\left[ N_\nbx = k \right].
	\end{align}
	Because PPP is motion-invariant, the distributions of the ranges to visible landmarks from all orientations are independent and identical. Hence, we can write
	\begin{align}
		f_{\nbR_\nbx|N_\nbx}\!\left(\nbr_\nbx|k\right) = \prod_{i=1}^{k} f_R\!\left(r_i\right) =\prod_{i=1}^{k} \left\{\frac{2r_i}{d_v^2}\delta\!\left( 0 \leq r_i  \leq d_v \right)\right\}.
	\end{align}
	This completes the proof.
\end{proof}

Using the previous lemmas and the definition in~\eqref{def:c1-ploc}, we present the main result of this section in the following theorem.
\begin{theorem}
	\label{the1}
	The localizability probability $P_{\rm Loc}$ based on the range vector is
	\begin{align}
		\label{eq:the1}
		P_{\rm Loc} = \nbbP\!\left[ \Delta_{\rm p}\!\left(\nbR_\nbzero, \nbR_\nbx\right) > \epsilon\right] =\begin{cases}
			1 - e^{-2m}\cdot I_0\!\left(2m \cdot \sqrt{\frac{8d_v^3\epsilon - 6 d_v^2 \epsilon^2 + \epsilon^4}{3 d_v^4}}\right), & 0 \!\leq\! \epsilon < d_v,\\
			1 - e^{-2m} \cdot I_0\!\left(2m\right), & \epsilon\! \geq\! d_v,
		\end{cases}
	\end{align}
	where $ m =\lambda \pi d_v^2$ is defined in \eqref{eq:pn}.
\end{theorem}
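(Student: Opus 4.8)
The plan is to evaluate the complementary event and subtract from one: $P_{\rm Loc}=1-\nbbP[\Delta_{\rm p}(\nbR_\nbzero,\nbR_\nbx)\le\epsilon]$. The starting point is Lemma~\ref{lem4}, which (moving the ``$1-$'' to the other side) gives the conditional overlap probability $\nbbP[\Delta_{\rm p}(\nbR_\nbzero,\nbR_\nbx)\le\epsilon\mid\nbR_\nbx=\nbr_\nbx,N_\nbx=k]=\frac{m^k}{k!}e^{-m}\prod_{i=1}^k\nbbP[|R_i-r_i|\le\epsilon\mid N_\nbzero=k]$. First I would remove the conditioning on $(\nbr_\nbx,k)$ by integrating against the joint density $f_{\nbR_\nbx,N_\nbx}(\nbr_\nbx,k)$ of Lemma~\ref{lem5}. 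Since that density also factorizes, as $\frac{m^k}{k!}e^{-m}\prod_{i=1}^k\frac{2r_i}{d_v^2}\delta(0\le r_i\le d_v)$, the $k$-fold integral separates into a product of $k$ identical one-dimensional integrals, giving $\nbbP[\Delta_{\rm p}(\nbR_\nbzero,\nbR_\nbx)\le\epsilon]=\sum_{k\ge0}\frac{m^{2k}}{(k!)^2}e^{-2m}\,q^k$, where $q:=\int_0^{d_v}\nbbP[|R-r|\le\epsilon\mid N_\nbzero=k]\,\frac{2r}{d_v^2}\,\nrmd r$ and, crucially, $q$ does not depend on $k$ because the conditional probabilities in Lemma~\ref{lem3} are $k$-free. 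The $k=0$ term needs a word: both range vectors are then empty, the generalized Chebyshev distance of two $0$-dimensional vectors is $0\le\epsilon$, so this term contributes $e^{-2m}$ (equivalently, the empty product is $1$).

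Next I would recognize the series. With the power-series representation $I_0(z)=\sum_{k\ge0}\frac{(z/2)^{2k}}{(k!)^2}$ applied at $z=2m\sqrt{q}$, the sum equals $e^{-2m}I_0(2m\sqrt q)$, whence $P_{\rm Loc}=1-e^{-2m}I_0(2m\sqrt q)$. It then remains only to evaluate the scalar $q$ as a function of $\epsilon$ and $d_v$.

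The last step is a direct computation of $q$ from Lemma~\ref{lem3}, split along the same thresholds that appear there. When $\epsilon\ge d_v$, the conditional probability equals $1$ on $[0,d_v]$ by~\eqref{eq:lem3-c3}, so $q=\int_0^{d_v}\frac{2r}{d_v^2}\,\nrmd r=1$, which recovers the second branch $P_{\rm Loc}=1-e^{-2m}I_0(2m)$. When $0\le\epsilon<d_v/2$ I would split the integral into the pieces $[0,\epsilon)$, $[\epsilon,d_v-\epsilon)$, $[d_v-\epsilon,d_v]$ dictated by~\eqref{eq:lem3-c1}; when $d_v/2\le\epsilon<d_v$, into the pieces $[0,d_v-\epsilon)$, $[d_v-\epsilon,\epsilon)$, $[\epsilon,d_v]$ dictated by~\eqref{eq:lem3-c2}. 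Each piece is a polynomial integral, and collecting terms gives in both cases $q=\frac{8d_v^3\epsilon-6d_v^2\epsilon^2+\epsilon^4}{3d_v^4}$, which is exactly the radicand in the first branch of~\eqref{eq:the1}. Useful sanity checks along the way: $q\to0$ as $\epsilon\to0$ and $q\to1$ as $\epsilon\to d_v$, the latter matching continuity with the $\epsilon\ge d_v$ branch.

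The main obstacle is entirely in this last step: the piecewise polynomial bookkeeping for $q$, and in particular verifying that the two sub-cases $0\le\epsilon<d_v/2$ and $d_v/2\le\epsilon<d_v$ collapse to the same expression. Nothing conceptually deep happens there, but it is the part most prone to algebra slips, so I would cross-check by differentiating candidate antiderivatives or by numerically evaluating $q$ at, say, $\epsilon=d_v/3$ and $\epsilon=2d_v/3$. A minor secondary point to state cleanly is the justification for interchanging the sum over $k$ with the integration and for factoring the $k$-fold integral into a product; both are immediate since all integrands are nonnegative (Tonelli) and both the conditional overlap probability and the joint density factorize over the $k$ landmarks.
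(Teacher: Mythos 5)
Your proposal is correct and follows essentially the same route as the paper's proof: decondition Lemma~\ref{lem4} against the joint density of Lemma~\ref{lem5}, factor the $k$-fold integral into the $k$-th power of a single scalar $q$, recognize the resulting series as $e^{-2m}I_0(2m\sqrt{q})$, and evaluate $q$ piecewise from Lemma~\ref{lem3} (including the observation, also made in the paper, that the two sub-cases $0\le\epsilon<d_v/2$ and $d_v/2\le\epsilon<d_v$ yield the same value). Your explicit treatment of the $k=0$ term and the consistency checks $q\to 0$ and $q\to 1$ are welcome additions but do not change the argument.
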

\begin{proof}
	Using the definition in~\eqref{def:c1-ploc} and Lemma~\ref{lem5}, we have 
	\begin{align}
		\nbbP\!\left[ \Delta_{\rm p}\!\left(\nbR_\nbzero, \nbR_\nbx\right) > \epsilon\right] =\sum_{k=0}^{\infty} \Bigg\{ \int_{0}^{\nbr_v} P\!\left[\Delta_{\rm p}\!\left(\nbR_\nbzero, \nbr_\nbx\right) > \epsilon \mid \nbr_\nbx, k\right] \cdot f_{\nbR_\nbx, N_\nbx}\!\left(\nbr_\nbx, k\right)\, \nrmd \nbr_\nbx \Bigg\}.
	\end{align}
	When $0 \leq \epsilon < d_v$, using equations~\eqref{eq:lem3-c1},~\eqref{eq:lem3-c2} in Lemma~\ref{lem3} and Lemma~\ref{lem4}, the result is
	\begin{align}
		\label{thm1:case1}
		P_{\rm Loc} &= 1 - \sum_{k=0}^{\infty}\! \left\{ \left(\frac{8d_v^3\epsilon - 6 d_v^2 \epsilon^2 + \epsilon^4}{3 d_v^4} \right)^k \cdot \frac{m^{2k}}{(k!)^2} e^{- 2 m}\right\}\\
		&= 1 - e^{-2m} \cdot I_0\!\left(2m \cdot \sqrt{\frac{8d_v^3\epsilon - 6 d_v^2 \epsilon^2 + \epsilon^4}{3 d_v^4}}\right),
	\end{align}
	where $I_0(\cdot)$ is the modified Bessel function of the first kind.
	It should be noted that although equations~\eqref{eq:lem3-c1} and~\eqref{eq:lem3-c2} may appear different, the integrals result in the same value.
	
	When $d_v \leq \epsilon$, using equation~\eqref{eq:lem3-c3}, we have
	\begin{align}
		\label{thm1:case2}
		P_{\rm Loc} =1-\sum_{k=0}^{\infty} \! \left\{ \left(\frac{m^{k}}{k!} e^{- m}\right)^2\right\} = 1-e^{-2m}  \cdot I_0\!\left(2m\right).
	\end{align}
	This completes the proof.
\end{proof}
\begin{cor}
\label{cor1}
When the noise level is significant, meaning $\epsilon \geq d_v$, the localizability probability cannot be improved by including the range vector. 
This can be inferred by comparing  equation~\eqref{thm1:case2} and Lemma~\ref{lem1}, indicating that the range vector does not improve localizability in this regime.
\end{cor}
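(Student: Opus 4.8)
The plan is to obtain Corollary~\ref{cor1} as an immediate consequence of the two closed forms already established, together with a short structural explanation of why they coincide. First I would specialize Theorem~\ref{the1} to the regime $\epsilon \geq d_v$, which gives
\begin{equation*}
	P_{\rm Loc} = 1 - e^{-2m} I_0(2m),
\end{equation*}
and then compare this directly with the expression of Lemma~\ref{lem1} for the number of visible landmarks, which is the same $1 - e^{-2m} I_0(2m)$. Since the two are literally equal, the localizability based on the range vector equals the localizability based on the count alone; hence in this regime the range vector carries no additional localization value, which is the assertion of the corollary.

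To make the statement self-contained rather than a pointer to Theorem~\ref{the1}, I would add the one-line reason behind the equality. When $\epsilon \geq d_v$, equation~\eqref{eq:lem3-c3} of Lemma~\ref{lem3} gives $\nbbP[\,|R - r| \leq \epsilon \mid N_\nbx = k\,] = 1$ for every admissible $r$, because any two ranges lying in $[0, d_v]$ are automatically within $\epsilon$ of each other. Consequently, in Lemma~\ref{lem4} the conditional event $\Delta_{\rm p}(\nbR_\nbzero, \nbr_\nbx) \leq \epsilon$ reduces to the event $\dim(\nbR_\nbzero) = k$, i.e.\ $N_\nbzero = k$, regardless of the actual values in $\nbr_\nbx$. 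Integrating out $\nbr_\nbx$ against the joint density of Lemma~\ref{lem5} then just returns $\nbbP[N_\nbx = k]$, so the non-overlap probability collapses exactly to $\sum_{k} (m^k e^{-m}/k!)^2 = e^{-2m} I_0(2m)$ --- precisely the non-localizability probability $\nbbP[M_\nbx = M_\nbzero]$ from Lemma~\ref{lem1}. In words: once the noise swamps the visibility radius, the only distinguishing information left in a range vector is its length, which is the landmark count.

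There is no real analytical obstacle here --- the statement is labeled a corollary for exactly this reason. The only thing I would be careful to check is that the reduction above uses the correct branch of Lemma~\ref{lem3} on its whole domain (the $d_v < \epsilon$ case, and the boundary $\epsilon = d_v$), and that the $k = 0$ term is handled consistently (an empty range vector at both locations trivially overlaps). With those trivialities verified, the comparison is complete.
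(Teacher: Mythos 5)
Your proposal is correct and follows the same route as the paper: the corollary is justified there simply by observing that the $\epsilon \geq d_v$ branch of Theorem~\ref{the1} reproduces the expression $1 - e^{-2m}I_0(2m)$ of Lemma~\ref{lem1} verbatim, which is exactly your first paragraph. Your added tracing through Lemma~\ref{lem3} (the $\nbbP[\,|R-r|\leq\epsilon\,]=1$ branch collapsing the conditional event to $N_\nbzero = k$) is a faithful unpacking of why the coincidence occurs rather than a different argument, and your caution about the $\epsilon = d_v$ boundary and the $k=0$ term is reasonable but immaterial to the conclusion.
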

%%%%%%%%%%%%%%%%%%%%%%%%%%%%%%%%%%%%%%%%%%%%%%%%%%
\begin{remark}
To obtain further insights, the modified Bessel function of the first kind can be bounded within exponential functions, as shown in~\cite{yang2016approximating}, given by
\begin{align}
\frac{e^x}{1 + 2x} < I_0(x) < \frac{e^x}{\sqrt{1 + 2x}}, \quad x > 0.
\end{align}
\end{remark}

Theorem~\ref{the1} provides the analytical expression of localizability probability. A natural question arises about how the localizability probability performs as the landmark intensity $\lambda$ approaches infinity. This question is answered in the following proposition.

\begin{prop}
	\label{prop1}
	As the landmark intensity $\lambda$ tends to infinity, the localizability probability $P_{\rm Loc}$ approaches one, and the non-localizability probability 
    \begin{equation*}
        P_{\rm N-Loc} = 1 - P_{\rm Loc} = O\!\left(e^{-2(1 - \alpha)\lambda \pi d_v^2} \left(4 \alpha \lambda \pi^2  d_v^2\right)^{-\frac{1}{2}} \right),
    \end{equation*}
 approaches zero, regardless of the value of $\epsilon$.
\end{prop}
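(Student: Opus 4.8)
The plan is to start from the closed form for $P_{\rm Loc}$ in Theorem~\ref{the1} and bound the complementary quantity $P_{\rm N-Loc}=1-P_{\rm Loc}$ directly. Put $m=\lambda\pi d_v^2$ and let
\begin{equation*}
	c=c(\epsilon)=
	\begin{cases}
		\sqrt{\dfrac{8d_v^3\epsilon-6d_v^2\epsilon^2+\epsilon^4}{3d_v^4}}, & 0\le\epsilon<d_v,\\
		1, & \epsilon\ge d_v,
	\end{cases}
\end{equation*}
so that Theorem~\ref{the1} reads $P_{\rm N-Loc}=e^{-2m}I_0(2mc)$ in every regime. Since $m\to\infty$ as $\lambda\to\infty$, the whole statement reduces to controlling the growth of $e^{-2m}I_0(2mc)$, and for this two ingredients are needed: (i) the bound $0\le c\le 1$, ensuring the decaying exponential prefactor always dominates the growth of the Bessel function; and (ii) a sharp exponential upper bound on $I_0$.

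For (i) I would establish the elementary but slightly hidden inequality $0\le c^2\le 1$ for $0\le\epsilon\le d_v$. Nonnegativity is immediate after factoring $\epsilon$ out of the numerator, since $8d_v^3-6d_v^2\epsilon+\epsilon^3$ has negative derivative on $[0,d_v]$ and equals $3d_v^3>0$ at $\epsilon=d_v$, hence stays positive. The upper bound $c^2\le 1$ rests on the polynomial identity
\begin{equation*}
	8d_v^3\epsilon-6d_v^2\epsilon^2+\epsilon^4-3d_v^4=(\epsilon-d_v)^3(\epsilon+3d_v),
\end{equation*}
whose right-hand side is $\le 0$ exactly for $0\le\epsilon\le d_v$; verifying this factorization is the one genuinely non-routine step. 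Thus $0\le c\le 1$, with $c<1$ for $0<\epsilon<d_v$ and $c=1$ for $\epsilon\ge d_v$, while the edge case $\epsilon=0$ gives $c=0$ and $P_{\rm N-Loc}=e^{-2m}$, handled directly.

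For (ii) I would use the bound $I_0(x)<e^{x}/\sqrt{1+2x}$ for $x>0$ recorded in the remark following Corollary~\ref{cor1}, applied with $x=2mc>0$ (when $c=0$ use $I_0(0)=1$). This gives
\begin{equation*}
	P_{\rm N-Loc}=e^{-2m}I_0(2mc)<\frac{e^{-2m(1-c)}}{\sqrt{1+4mc}}=O\!\left(e^{-2(1-c)\lambda\pi d_v^2}\bigl(4c\lambda\pi d_v^2\bigr)^{-1/2}\right),
\end{equation*}
which is the stated estimate with $\alpha:=c$ (the displayed order is unchanged by the harmless constant inside the square root). Letting $\lambda\to\infty$: when $\epsilon<d_v$ we have $1-c>0$ and the exponential factor drives $P_{\rm N-Loc}\to 0$ geometrically; when $\epsilon\ge d_v$ we have $c=1$ and the bound degrades to $O(\lambda^{-1/2})$, still $\to 0$. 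Hence $P_{\rm N-Loc}\to 0$ for every $\epsilon$, and therefore $P_{\rm Loc}=1-P_{\rm N-Loc}\to 1$. The main obstacle is precisely the factorization in step (i) that certifies $c\le 1$; the rest is bookkeeping around the $I_0$ estimate.
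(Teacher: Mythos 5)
Your proposal is correct and follows the same basic route as the paper: write $P_{\rm N-Loc}=e^{-2m}I_0(2\alpha m)$ with $m=\lambda\pi d_v^2$ and exploit the fact that $I_0(z)$ grows like $e^z/\sqrt{2\pi z}$, so that $\alpha\le 1$ forces the product to vanish. The two differences are worth recording. First, the paper invokes the full asymptotic expansion of $I_0$ from the NIST handbook, whereas you use the uniform two-sided bound $I_0(x)<e^x/\sqrt{1+2x}$ already quoted in the remark after Corollary~\ref{cor1}; your version is a genuine (if non-asymptotic) upper bound valid for all $m$, not just in the limit, and avoids the slightly awkward error-term bookkeeping in the paper's display. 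Second, and more substantively, the paper simply asserts that $0\le\alpha\le 1$ without justification, while you supply the proof: nonnegativity from the monotonicity of $8d_v^3-6d_v^2\epsilon+\epsilon^3$ on $[0,d_v]$, and the upper bound from the factorization $8d_v^3\epsilon-6d_v^2\epsilon^2+\epsilon^4-3d_v^4=(\epsilon-d_v)^3(\epsilon+3d_v)\le 0$, which checks out on expansion. That verification is the one nontrivial fact the argument actually rests on (it is also what makes Theorem~\ref{the1} consistent with the $\epsilon\ge d_v$ case), so your write-up is, if anything, more complete than the paper's. The constant discrepancy between $\sqrt{1+4mc}$ and $\sqrt{4\pi\alpha m}$ is absorbed by the big-$O$, as you note.
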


\begin{proof}
    We use the result provided in~\cite{olver2010nist} and write the asymptotic expansions of the modified Bessel functions of the first kind 
    As $z \rightarrow \infty$ with some fixed $v$, we have
	\begin{align}
		\label{eq:asym}
		I_v(z) = \frac{e^z}{\left(2\pi z\right)^{\frac{1}{2}}} \sum_{k=0}^{\infty} (-1)^k \left(\frac{a_k(v)}{z^k}\right),
	\end{align}
	where
	\begin{align}
		a_k(v) =\begin{cases}
			1, & k = 0,\\
			\frac{(4v^2 - 1^2)(4v^2 - 3^2)\dots(4v^2 - (2k-1)^2)}{k! 8k}, & k \neq 0.
		\end{cases}
	\end{align}
    The non-localizability probability under the limit $\lambda \rightarrow \infty$ can be rewritten using~\eqref{eq:asym} with $v = 0$ and $z = 2 \alpha m$. It becomes
	\begin{align}
		\lim_{\lambda \rightarrow \infty} P_{\rm N-Loc} &= \lim_{\lambda \rightarrow \infty} 1 -  P_{\rm Loc}\\
		&= \lim_{\lambda \rightarrow \infty} 1 - \nbbP\!\left[ \Delta_{\rm p}\!\left(\nbR_\nbzero, \nbR_\nbx\right) > \epsilon\right]  \notag\\
		&= \lim_{m \rightarrow \infty} e^{-2m} \cdot \frac{e^{2\alpha m}}{\left(4\pi \alpha m \right)^{\frac{1}{2}}} \left(1 + O\left(2\alpha m^{-\frac{1}{2}}\right)\right) \\
		&= \lim_{m \rightarrow \infty}  e^{-2(1 - \alpha)m} \left(4\pi \alpha m\right)^{-\frac{1}{2}} \left(1 + O\left(m^{-\frac{1}{2}}\right)\right)\\
		&= 0, 
	\end{align}
	where $m = \lambda \pi d_v^2 $ and 
	\begin{align}
		\alpha
		&= \begin{cases}
			\sqrt{\frac{8d_v^3\epsilon - 6 d_v^2 \epsilon^2 + \epsilon^4}{3 d_v^4}}, & 0 \leq \epsilon < d_v,\\
			1, & \epsilon \geq d_v,
		\end{cases}
	\end{align}
	which is bounded by $0 \leq \alpha \leq 1$. Therefore, we have $ P_{\rm N-Loc} \in O\!\left(e^{-2(1 - \alpha)m} \left(4\pi \alpha m\right)^{-\frac{1}{2}} \right) $
	This completes the proof.
\end{proof}

\section{Case 2: Set of Relative Locations}\label{sec:case2}
This section considers the scenario where the target is an aircraft equipped with a camera that captures aerial images, such as in the case of unmanned aerial vehicles.
The target aims to determine its location on the ground using the captured aerial image.
The locations of visible landmarks relative to the target's ground location can be estimated using the aerial image.
Since there is no specific ordering for these relative positions, we use a set to represent them, defined as:
\begin{equation}
	\label{def:set_relative_loc}
	\ncalX_{\nbx} = g(\nbx) = \left\{\tilde{\nbx}_i \mid \tilde{\nbx}_i = \nbx_i - \nbx, \nbx_i \in \Phi \cap B_{\nbx}\right\},
\end{equation}
where $\ncalX_{\nbx}$ is equivalent to the landmark pattern at location $\nbx$.
\begin{figure}
	\centering
	\includegraphics[width=0.3\textwidth]{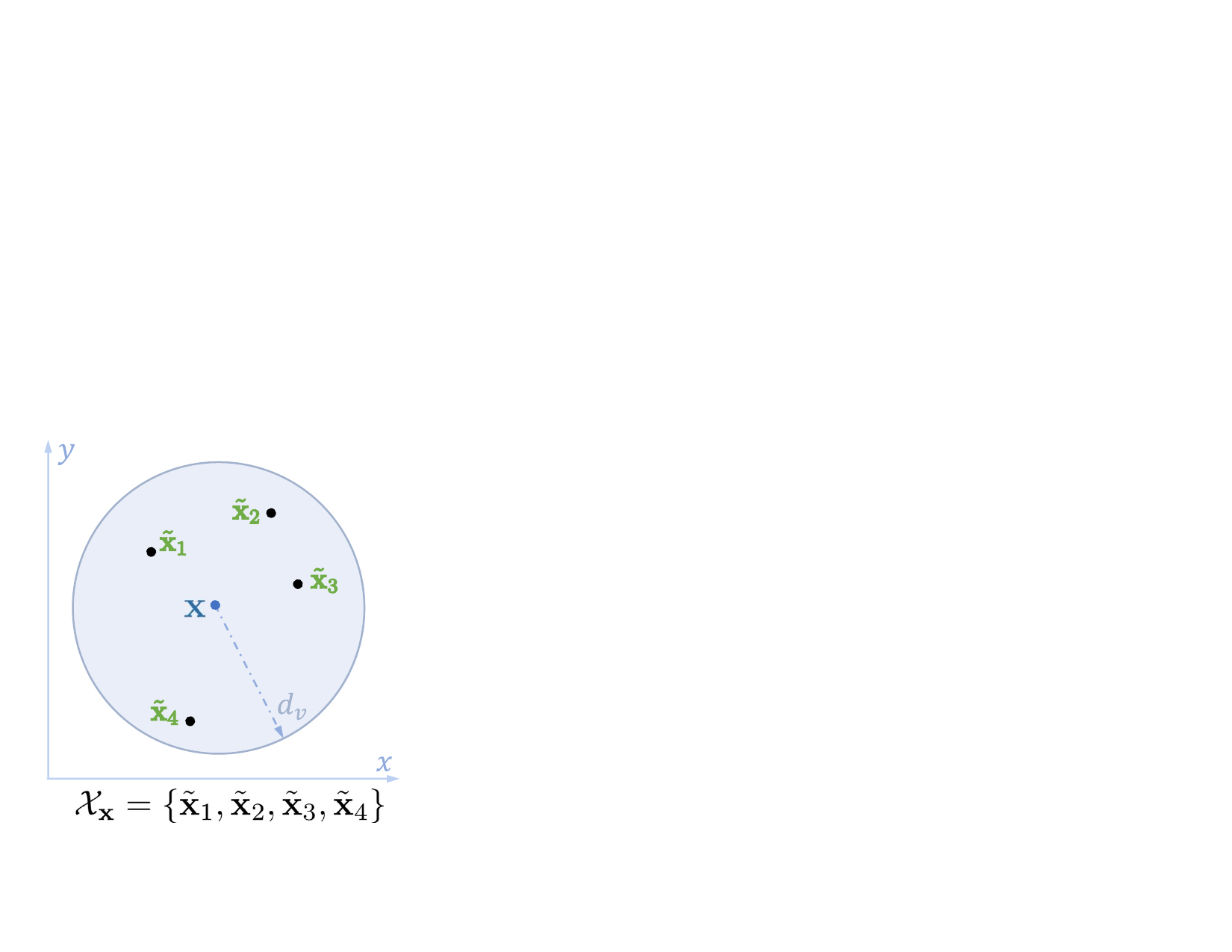}
	\caption{An illustration of the set of relative locations measured from location $\nbx$.}
	\label{fig:c2-model}
\end{figure}
\figref{fig:c2-model} illustrates the set of relative locations measured from location $\nbx$.
Since there is no specific ordering, we consider all possible permutations of the elements in the set and represent them as vectors.
The resulting set of vectors is denoted as $S(\ncalX_{\nbx})$, where $S(\cdot)$ represents all possible permutations. For instance, all possible permutations of $\{1,2,3\}$ using the operator $S(\cdot)$ is
\begin{equation}
	S(\{1,2,3\}) = \{[1,2,3], [1,3,2], [2,1,3], [2,3,1], [3,1,2], [3,2,1]\},
\end{equation}
which contains $3!$ permutations of three elements. 
We consider $S(\ncalX_{\nbx})$ as a function of location $\nbx$, denoted by $f(\nbx)$.
The mapping $f$ is formally defined as
\begin{equation}
	f: \nbbR^2 \mapsto \nbbR^{N_\nbx \times 2}, \nbx \mapsto f(\nbx) = S(\ncalX_{\nbx}) = \{\nbv_1, \dots, \nbv_i, \dots, \nbv_{N_\nbx!}\},
\end{equation}
where $N_\nbx$ represents the number of visible landmarks at location $\nbx$ and $\nbv_i = [\tilde{\nbx}_1, \dots, \tilde{\nbx}_{N_\nbx}]$ represents the permutation of the relative locations of landmarks. Since landmarks are positioned on a two-dimensional plane $\nbbR^2$, $\nbv_i$ is a matrix in the space $\nbbR^{N_\nbx \times 2}$. 
Next, we formalize the notion of distance between two matrices:
\begin{equation}
	\label{def:C2-vec-dist}
	\Delta_{\rm v}\!\left(\nbv_i, \nbv_j\right) = 
	\begin{cases}
		\left\| \nbv_i - \nbv_j \right\|_{2, \infty}, & \dim(\nbv_i) = \dim(\nbv_j) \\
		\infty, & \dim(\nbv_i) \neq \dim(\nbv_j)
	\end{cases},
\end{equation}
where $\left\| \nbv \right\|_{2, \infty} = \max_{1 \leq i \leq N_\nbx} \left\|\nbv_{i:} \right\|_2$ is the norm of matrix induced by the $l_2$ and infinity vector norm. 
Based on the matrix distance $\Delta_{\rm v}$, we further define the distance between two sets of permutations
\begin{align}
	\label{def:C2-set-dist}
	\Delta_{\rm s}\!\left(f(\nbzero), f(\nbx)\right) = \min_{\nbv_{i} \in f(\nbzero), \nbv_j \in f(\nbx)} \Delta_{\rm v}\!\left(\nbv_i, \nbv_j\right),
\end{align}
which is the minimal matrix distance among matrices in both sets. 

The measurement error, denoted by $\nbn \in \nbbR^{N_\nbx \times 2}$, is introduced by adding it to permutation matrices $\nbv$. 
We can define the set of possible noisy measurements observed at location $\nbx$, given by
\begin{equation}
	F_{\epsilon}(\nbx) = \left\{\nbv + \nbn \mid \nbv \in f(\nbx), \nbn \in \nbbR^{N_\nbx \times 2}, \| \nbn \|_{2, \infty} \leq \epsilon/2\right\}.
\end{equation}
Since $\| \nbn \|_{2, \infty} \leq \epsilon/2$, it becomes evident that the $l_2$ norm of each column of $\nbn$, denoted as $\left\| \nbn_{i,:} \right\|_2$, is bounded $\epsilon /2$. 
Consequently, the landmark locations affected by this noise deviate by a maximum Euclidean distance of $\epsilon /2 $ from their original positions.  $F_{\epsilon}(\nbx)$ can be equivalently represented using the matrix distance in~\eqref{def:C2-vec-dist}
\begin{equation}
	F_{\epsilon}(\nbx) = \left\{\tilde{\nbv} \mid \Delta_{\rm v}\!\left(\tilde{\nbv},\nbv \right) \leq \epsilon/2, \nbv \in f(\nbx)\right\},
\end{equation}
where $\tilde{\nbv}$ are the permutations of noisy relative locations of the visible landmarks at location $\nbx$.

We can now express the non-localizability probability based on the relative locations
\begin{align}
	\label{eqdef:C2-P-nloc}
	P_{\rm N-Loc} &= \nbbP\!\left[F_{\epsilon}(\nbx) = \varnothing, F_{\epsilon}(\nbzero) = \varnothing \right] + \nbbP\!\left[F_{\epsilon}(\nbx) \cap F_{\epsilon}(\nbzero) \neq \varnothing \right]\\
	&= \nbbP\!\left[N\!\left( B_{\nbx}\right)  = 0, N\!\left( B_{\nbzero}\right)  = 0 \right] + \nbbP\!\left[\Delta_{\rm s}\!\left(F(\nbx), F(\nbzero)\right) \leq \epsilon \right], \label{eqdef:C2-P-nloc-p1}
\end{align}
where $\nbbP\!\left[N\!\left( B_{\nbx}\right)  = 0, N\!\left( B_{\nbzero}\right)  = 0 \right]$ is an additional term that the visible landmarks do not exist. 
We consider $F(\nbzero)$ and $F(\nbx)$ as two random noise-free measurements obtained at the origin and $\nbx$, respectively, since the landmark locations form a PPP and are not deterministic.
The first term in~\eqref{eqdef:C2-P-nloc-p1} represents the probability that no landmarks are visible at both locations, and the second component represents the probability that the noisy measurements are overlapped.
%Mathematically, the derivation of the first component is simplified by acknowledging the stationary nature of the PPP and the assumption that $\nbX$ isn't closely situated to $\nbX_0$. 
%Given these conditions, it is logical to posit that the regions $B_{\nbX_0}$ and $B_{\nbX}$ do not intersect, represented as $B_{\nbX_0} \cap B_{\nbX} = \varnothing$. 
As discussed in Remark~\ref{rem:localizability}, the probabilities related to the number of visible landmarks are almost surely independent. We have
\begin{align}
	\label{eqref:C2-first-term}
	\nbbP&\! \left[N\!\left( B_{\nbx}\right)  = 0, N\!\left( B_{\nbzero}\right) = 0\right] = \nbbP\!\left[N\!\left( B_{\nbx}\right) = 0 \right] \cdot \nbbP\!\left[ N\!\left( B_{\nbzero}\right)  = 0 \right] = e^{- 2m}, 
\end{align}
where $m = \lambda \pi d_v^2$.
We calculate the second component in~\eqref{eqdef:C2-P-nloc-p1} by fixing $F(\nbx) = f(\nbx)$. The result is presented in the following lemma.
\begin{lemma}
	\label{lem6}
	Given the measurement $F(\nbx) = f(\nbx)$, the probability of $\Delta_{\rm s}\!\left(f(\nbx), F(\nbzero)\right) \leq \epsilon$ can be formulated as:
	\begin{align}
		\nbbP\!&\left[ \Delta_{\rm s}(f(\nbx), F(\nbzero)) \leq \epsilon \right] = \frac{m^k}{k!} e^{-m} \cdot \nbbE_{\Phi \cap B_\nbzero \mid N_\nbzero}\! \left[ \lor_{\nbV_0 \in F(\nbzero)} \left\{\prod_{\tilde{\nbX}_i \in \nbV_0} \nb1\!\left( \tilde{\nbX}_i \in A_i \right)\right\} \right],
	\end{align}
where $m = \lambda \pi d_v^2$, $A_i = \nbb(\tilde{\nbx}_i, \epsilon)$ is a ball centered at $\tilde{\nbx}_i$ with radius $\epsilon$, $\tilde{\nbx}_i \in f(\nbx)$ is the relative location of the visible landmark observed at $\nbx$, the symbol $\lor$ represents the logical OR operation.
\end{lemma}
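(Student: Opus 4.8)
The plan is to turn the event $\{\Delta_{\rm s}(f(\nbx),F(\nbzero))\le\epsilon\}$ into a single existential statement over the orderings of the landmarks visible at the origin, and then to factor out the Poisson probability that exactly $k$ landmarks are visible at $\nbzero$. Throughout, $k>0$ denotes the fixed number of rows of every matrix in $f(\nbx)$; the case $k=0$ is already handled by the separate void term in~\eqref{eqdef:C2-P-nloc-p1}. First I would observe that $\Delta_{\rm v}(\nbv_i,\nbv_j)<\infty$ requires $\dim(\nbv_i)=\dim(\nbv_j)$, so the event of interest can occur only on $\{N_\nbzero=k\}$, while on $\{N_\nbzero\neq k\}$ the set distance is $\infty$. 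Conditioning on $N_\nbzero$ and using that $N_\nbzero$ is Poisson with mean $m=\lambda\pi d_v^2$, it then remains to compute $\nbbP[\Delta_{\rm s}(f(\nbx),F(\nbzero))\le\epsilon\mid N_\nbzero=k]$ and multiply it by $\nbbP[N_\nbzero=k]=\frac{m^k}{k!}e^{-m}$.

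The key step is the reduction of this conditional probability. Fix the canonical ordering $\nbv_\ast=[\tilde{\nbx}_1,\dots,\tilde{\nbx}_k]\in f(\nbx)$. Since both $f(\nbx)$ and $F(\nbzero)$ are closed under row permutations and $\Delta_{\rm v}(P\nbv,P\nbw)=\Delta_{\rm v}(\nbv,\nbw)$ for every permutation matrix $P$, every value $\Delta_{\rm v}(\nbv_i,\nbv_j)$ with $\nbv_i\in f(\nbx)$, $\nbv_j\in F(\nbzero)$ equals $\Delta_{\rm v}(\nbv_\ast,P^{-1}\nbv_j)$ for a suitable $P$ with $P^{-1}\nbv_j\in F(\nbzero)$, and conversely; hence the double minimum in~\eqref{def:C2-set-dist} collapses to $\Delta_{\rm s}(f(\nbx),F(\nbzero))=\min_{\nbV_0\in F(\nbzero)}\Delta_{\rm v}(\nbv_\ast,\nbV_0)$. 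Writing $\tilde{\nbX}_i$ for the $i$-th row of $\nbV_0$, the matrix norm gives $\Delta_{\rm v}(\nbv_\ast,\nbV_0)=\max_{1\le i\le k}\|\tilde{\nbx}_i-\tilde{\nbX}_i\|_2$, so $\Delta_{\rm v}(\nbv_\ast,\nbV_0)\le\epsilon$ holds iff $\tilde{\nbX}_i\in\nbb(\tilde{\nbx}_i,\epsilon)=A_i$ for every $i$, that is, iff $\prod_{\tilde{\nbX}_i\in\nbV_0}\nb1(\tilde{\nbX}_i\in A_i)=1$. As $F(\nbzero)$ is a finite set, $\min_{\nbV_0\in F(\nbzero)}\Delta_{\rm v}(\nbv_\ast,\nbV_0)\le\epsilon$ iff at least one ordering $\nbV_0$ realizes this, i.e.\ iff $\lor_{\nbV_0\in F(\nbzero)}\{\prod_{\tilde{\nbX}_i\in\nbV_0}\nb1(\tilde{\nbX}_i\in A_i)\}=1$.

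To finish, conditioned on $N_\nbzero=k$ the points of $\Phi\cap B_\nbzero$ are $k$ i.i.d.\ uniform points in $B_\nbzero$ (the PPP property already used in Lemma~\ref{lem2}), and $F(\nbzero)$ is the collection of their $k!$ orderings. Taking the expectation of the indicator of the event characterized above under this conditional law yields
\begin{align}
\nbbP\!\left[\Delta_{\rm s}(f(\nbx),F(\nbzero))\le\epsilon\mid N_\nbzero=k\right]=\nbbE_{\Phi\cap B_\nbzero\mid N_\nbzero}\!\left[\lor_{\nbV_0\in F(\nbzero)}\Big\{\prod_{\tilde{\nbX}_i\in\nbV_0}\nb1\!\left(\tilde{\nbX}_i\in A_i\right)\Big\}\right],
\end{align}
and multiplying through by $\frac{m^k}{k!}e^{-m}$ gives the stated identity. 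I expect the permutation-invariance step to be the main obstacle: one has to argue carefully that freezing the ordering on the $f(\nbx)$ side and minimizing only over $F(\nbzero)$ loses nothing, and that the $\|\cdot\|_{2,\infty}$ norm translates exactly into the conjunction of the ball memberships $\tilde{\nbX}_i\in A_i$ rather than a single worst-row condition that mixes up the pairing between the two point sets.
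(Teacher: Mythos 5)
Your proposal is correct and follows essentially the same route as the paper: rewrite the event as an indicator, expand the set distance into a logical OR over the orderings $\nbV_0 \in F(\nbzero)$ of the conjunction $\prod_i \nb1(\tilde{\nbX}_i \in A_i)$, and factor out $\nbbP[N_\nbzero = k] = \frac{m^k}{k!}e^{-m}$ from the conditional expectation over the i.i.d.\ uniform points. The one place you go beyond the paper is the explicit permutation-invariance argument showing that the double minimum in~\eqref{def:C2-set-dist} collapses to a minimum over $F(\nbzero)$ alone with a fixed ordering $\nbv_\ast \in f(\nbx)$ -- a step the paper's proof uses implicitly without justification -- so your version is, if anything, the more careful one.
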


\begin{proof}
	The probability of the event can be written as the expectation of an indicator function, given as:
	\begin{align}
		\nbbP\!\left[ \Delta_{\rm s}(f(\nbx), F(\nbzero)) \leq \epsilon \right]
		= \nbbE_{\Phi \cap B_\nbzero} \! \left[\nb1\!\left(\Delta_{\rm s}(f(\nbx), F(\nbzero)) \leq \epsilon\right)\right].
	\end{align}
	The inequality $\Delta_{\rm s}(f(\nbx), F(\nbzero)) \leq \epsilon$ means that there exists a matrix $\nbV_0 \in F(\nbzero)$ such that $\Delta_{\rm v}\!\left(\nbv, \nbV_0\right) \leq \epsilon$ holds for some $\nbv \in f(\nbx)$. Thus, the corresponding indicator function is
	\begin{align}
		\nb1\!\left(\Delta_{\rm s}(f(\nbx), F(\nbzero)) \leq \epsilon\right) = \lor_{\nbV_0 \in F(\nbzero)} \nb1\!\left(\Delta_{\rm v}(\nbv, \nbV_0) \leq \epsilon\right).
	\end{align}
    The above indicator function equals $1$, if and only if 
    one of the distances $\Delta_{\rm v}(\nbv, \nbV_0)$ is less than or equal to $\epsilon$.
    Based on the definition of the matrix distance in~\eqref{def:C2-vec-dist}, $\Delta_{\rm v}(\nbv, \nbV_0) \leq \epsilon$ indicates that the $l_2$ norm of columns in $\Delta\nbV = \nbV_0-\nbv$, denoted by $\left\| [\Delta\nbV]_{i,:} \right\|_2$, are bounded $\epsilon$. We have
	\begin{align}
		\nb1\!\left(\Delta_{\rm v}(\nbv, \nbV_0) \leq \epsilon\right) &= \nb1\!\left({\rm dim}(\nbV_0) = {\rm dim}(\nbv) \right) \cdot \prod_{\tilde{\nbX}_i \in \nbV_0} \nb1\!\left(\left\| [\Delta\nbV]_{i,:} \right\|_2 \leq \epsilon\right) \\
		&= \nb1\!\left(N(B_\nbzero) = k \right) \cdot \prod_{\tilde{\nbX}_i \in \nbV_0} \nb1\!\left( \left\| \tilde{\nbX}_i - \tilde{\nbx}_i \right\|_2 \leq \epsilon\right), \label{eqref:c3-1}
	\end{align}
	where $\tilde{\nbX}_i$ and $\tilde{\nbx}_{i}$ are the $i$-th columns of metrics $\nbV_0$ and $\nbv$, respectively, $N_\nbzero$ is the number of visible landmarks at the origin. 
    To simplify, we define the ball $A_i = \nbb(\tilde{\nbx}_i, \epsilon)$, where locations within $A_i$ are within $\epsilon$ distance to $\tilde{\nbx}_i$. The equation \eqref{eqref:c3-1} can be simplified
	\begin{equation}
        \label{eqref:c2-region}
		\nb1\!\left(\Delta_{\rm v}(\nbv, \nbV_0) \leq \epsilon\right) = \nb1\!\left(N(B_\nbzero) = k \right) \cdot \prod_{\tilde{\nbX}_i \in \nbV_0} \nb1\!\left( \tilde{\nbX}_i \in A_i \right).
	\end{equation}
	Now, we give the probability of $\Delta_{\rm s}\!\left(f(\nbx), F(\nbzero)\right) \leq \epsilon$
	\begin{align}
		\nbbP\!\left[ \Delta_{\rm s}\!\left(f(\nbx), F(\nbzero)\right) \leq \epsilon \right]
		&= \nbbE_{\Phi \cap B_\nbzero}\! \left[ \nb1\!\left(N(B_\nbzero) = k \right) \cdot \lor_{\nbV_0 \in F(\nbzero)} \left\{\prod_{\tilde{\nbX}_i \in \nbV_0} \nb1\!\left( \tilde{\nbX}_i \in A_i \right)\right\} \right] \\
		&= \nbbP\! \left[ N_\nbzero = k \right] \cdot \nbbE_{\Phi \cap B_\nbzero \mid N_\nbzero}\! \left[ \lor_{\nbV_0 \in F(\nbzero)} \left\{\prod_{\tilde{\nbX}_{0,i} \in \nbV_0} \nb1\!\left( \tilde{\nbX}_i \in A_i \right)\right\} \right] \\
		&= \frac{m^k}{k!} e^{-m} \cdot \nbbE_{\Phi \cap B_\nbzero \mid N_\nbzero}\! \left[ \lor_{\nbV_0 \in F(\nbzero)} \left\{\prod_{\tilde{\nbX}_i \in \nbV_0} \nb1\!\left( \tilde{\nbX}_i \in A_i \right)\right\} \right],
	\end{align}
	where $m = \lambda \pi d_v^2$. This completes the proof.
\end{proof}

The regions $A_i$ in Lemma~\ref{lem6} may overlap, leading to results that are not easily tractable.
Thus, we explore its property by deriving a lower bound in the following lemma.
\begin{lemma}
	\label{lem7}
	Given the measurement $f(\nbx)$, the lower bound of the probability of $\Delta_{\rm s}(f(\nbx), F(\nbzero)) \leq \epsilon$ is
	\begin{equation}
		\nbbP\!\left[ \Delta_{\rm s}(f(\nbx), F(\nbzero)) \leq \epsilon \right] \geq \frac{m^k}{k!} e^{-m} \cdot \prod_{i=1}^k \left\{\frac{|A_i|}{\pi d_v^2}\right\}.
	\end{equation}
\end{lemma}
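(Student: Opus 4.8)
The plan is to start from the exact identity established in Lemma~\ref{lem6}, namely
\[
\nbbP\!\left[ \Delta_{\rm s}(f(\nbx), F(\nbzero)) \leq \epsilon \right] = \frac{m^k}{k!} e^{-m} \cdot \nbbE_{\Phi \cap B_\nbzero \mid N_\nbzero}\! \left[ \lor_{\nbV_0 \in F(\nbzero)} \left\{\prod_{\tilde{\nbX}_i \in \nbV_0} \nb1\!\left( \tilde{\nbX}_i \in A_i \right)\right\} \right],
\]
and to lower-bound the expectation of the logical OR by retaining a single disjunct. Fix an arbitrary labeling $\tilde{\nbX}_1, \dots, \tilde{\nbX}_k$ of the relative locations of the $k$ landmarks visible from the origin (under the conditioning $N_\nbzero = k$); then the particular ordering $(\tilde{\nbX}_1, \dots, \tilde{\nbX}_k)$ is one of the matrices in $F(\nbzero) = S(\ncalX_\nbzero)$, and the disjunct it contributes is exactly $\prod_{i=1}^{k} \nb1(\tilde{\nbX}_i \in A_i)$. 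Hence
\[
\lor_{\nbV_0 \in F(\nbzero)} \left\{\prod_{\tilde{\nbX}_i \in \nbV_0} \nb1\!\left( \tilde{\nbX}_i \in A_i \right)\right\} \;\geq\; \prod_{i=1}^{k} \nb1\!\left( \tilde{\nbX}_i \in A_i \right),
\]
which reduces the task to evaluating $\nbbE_{\Phi \cap B_\nbzero \mid N_\nbzero}\big[\prod_{i=1}^k \nb1(\tilde{\nbX}_i \in A_i)\big]$.

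The second step invokes the same PPP property used in Lemma~\ref{lem2}: conditioned on $N_\nbzero = k$, the points $\tilde{\nbX}_1, \dots, \tilde{\nbX}_k$ are i.i.d.\ and uniform on $B_\nbzero = \nbb(\nbzero, d_v)$. Therefore the expectation of the product factorizes, and each factor equals
\[
\nbbP\!\left[\tilde{\nbX}_i \in A_i \mid N_\nbzero = k\right] = \frac{|A_i \cap B_\nbzero|}{|B_\nbzero|} = \frac{|A_i \cap B_\nbzero|}{\pi d_v^2},
\]
which we identify with $|A_i|/(\pi d_v^2)$ (the two coincide when $A_i \subseteq B_\nbzero$, i.e.\ when $\epsilon \le d_v - \|\tilde{\nbx}_i\|$; more generally the relative-location volume is to be read as restricted to the visibility ball). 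Multiplying by the Poisson weight $\frac{m^k}{k!}e^{-m}$ carried over from Lemma~\ref{lem6} gives the claimed bound.

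The only genuinely delicate point is the first step: one must check that discarding all but one term of the OR is legitimate and that the surviving term corresponds to a well-defined, fixed matching between the origin landmarks and the reference coordinates $\tilde{\nbx}_i$ of $f(\nbx)$ — this is where the permutation bookkeeping matters, because $F(\nbzero)$ ranges over all $k!$ orderings of an unordered point configuration. Everything after that is a routine application of conditional uniformity and independence; crucially, no control of the overlaps among the balls $A_1,\dots,A_k$ is required here, since we only seek a lower bound (the overlaps are exactly what made the exact expression in Lemma~\ref{lem6} intractable).
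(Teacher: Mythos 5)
Your proof follows essentially the same route as the paper's: retain a single disjunct of the logical OR (the paper calls it $\nbV_0^*$), then use the conditional i.i.d.\ uniformity of the points given $N_\nbzero = k$ to factorize the expectation into $\prod_i |A_i|/(\pi d_v^2)$. Your added observation that each factor is really $|A_i \cap B_\nbzero|/(\pi d_v^2)$, which only equals $|A_i|/(\pi d_v^2)$ when $A_i \subseteq B_\nbzero$, is a boundary-effect subtlety the paper glosses over, and you are right to flag it.
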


\begin{proof}
	The result of logical OR operation is lower bounded by 
	\begin{align}
		\lor_{\nbV_0 \in F(\nbzero)} \left\{\prod_{\tilde{\nbX}_i \in \nbV_0} \nb1\!\left( \tilde{\nbX}_i \in A_i \right)\right\} \geq \prod_{\tilde{\nbX}_i \in \nbV_0^*} \nb1\!\left( \tilde{\nbX}_i \in A_i \right),
	\end{align}
	where $\nbV_0^*$ is an arbitrary elements in $F(\nbzero)$. With the above inequality, we represent the expectation in Lemma~\ref{lem6} as
	\begin{align}
		\nbbE_{\Phi \cap B_\nbzero \mid N_\nbzero}&\! \left[ \lor_{\nbV_0 \in F(\nbzero)} \left\{\prod_{\tilde{\nbX}_i \in \nbV_0} \nb1\!\left( \tilde{\nbX}_i \in A_i \right)\right\} \right] \\
        &\geq \nbbE_{\Phi \cap B_\nbzero \mid N_\nbzero}\! \left[ \prod_{\tilde{\nbX}_i \in \nbV_0^*} \nb1\!\left( \tilde{\nbX}_i \in A_i \right) \right] \\
		&= \nbbP\! \left[ \cap_{ \tilde{\nbX}_i \in \nbV_0^*} \left\{ \tilde{\nbX}_i \in A_i \right\} \mid N_\nbzero = k \right].
	\end{align}
	Now, given the number of landmarks $N_\nbzero$, the locations of these landmarks are independently and uniformly distributed in the visibility region $B_\nbzero$. Under this condition, the probability above can be written as
	\begin{align}
		\nbbP&\! \left[ \cap_{ \tilde{\nbX}_i \in \nbV_0^*} \left\{ \tilde{\nbX}_i \in A_i \right\} \mid N_\nbzero = k \right] \\
        &= \prod_{\tilde{\nbX}_i \in \nbV_0^*} \nbbP\! \left[ \tilde{\nbX}_i \in A_i \mid N_\nbzero = k \right] =  \prod_{1 \leq i \leq k} \frac{|A_i|}{\pi d_v^2},
	\end{align}
	where $|A_i|$ represents the Lebesgue measure of the ball $A_i$. This completes the proof.
\end{proof}

Additionally, we can derive an upper bound of the probability in Lemma~\ref{lem6}. The result is as follows:
\begin{lemma}
\label{lem8}
Given the measurement $f(\nbx)$, the upper bound of the probability of $\Delta_{\rm s}(f(\nbx), F(\nbzero)) \leq \epsilon$ is
\begin{equation}
	\nbbP\! \left[ \Delta_{\rm s}(f(\nbx), F(\nbzero)) \leq \epsilon \right] \leq m^k e^{-m} \cdot \prod_{i=1}^{k} \left\{\frac{|A_i|}{\pi d_v^2}\right\}.
\end{equation}
\end{lemma}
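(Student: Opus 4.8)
The plan is to start from the exact expression in Lemma~\ref{lem6} and control the expectation of the logical OR by a union bound. Conditioned on $N_\nbzero = k$, the set $\ncalX_\nbzero$ has $k$ elements, so $F(\nbzero) = S(\ncalX_\nbzero)$ consists of exactly $k!$ matrices (one per permutation); replacing the OR by the sum of the individual indicators therefore gives
\begin{align*}
	\lor_{\nbV_0 \in F(\nbzero)} \left\{\prod_{\tilde{\nbX}_i \in \nbV_0} \nb1\!\left( \tilde{\nbX}_i \in A_i \right)\right\} \leq \sum_{\nbV_0 \in F(\nbzero)} \prod_{\tilde{\nbX}_i \in \nbV_0} \nb1\!\left( \tilde{\nbX}_i \in A_i \right).
\end{align*}
Taking the conditional expectation and using linearity (a finite sum of $k!$ terms), it suffices to evaluate $\nbbE_{\Phi \cap B_\nbzero \mid N_\nbzero}\!\left[\prod_{\tilde{\nbX}_i \in \nbV_0}\nb1(\tilde{\nbX}_i \in A_i)\right]$ for one fixed ordering $\nbV_0$.

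For this I would invoke the same property of the PPP used in Lemmas~\ref{lem2} and~\ref{lem7}: given $N_\nbzero = k$, the relative locations of the visible landmarks are i.i.d.\ uniform on $B_\nbzero$. Hence, for any fixed $\nbV_0 = [\tilde{\nbX}_1,\dots,\tilde{\nbX}_k]$ the columns are independent and
\begin{align*}
	\nbbE_{\Phi \cap B_\nbzero \mid N_\nbzero}\!\left[\prod_{\tilde{\nbX}_i \in \nbV_0}\nb1\!\left(\tilde{\nbX}_i \in A_i\right)\right] = \prod_{i=1}^{k}\nbbP\!\left[\tilde{\nbX}_i \in A_i \mid N_\nbzero = k\right] = \prod_{i=1}^{k}\frac{|A_i \cap B_\nbzero|}{\pi d_v^2} \leq \prod_{i=1}^{k}\frac{|A_i|}{\pi d_v^2},
\end{align*}
the last step being just $A_i \cap B_\nbzero \subseteq A_i$. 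The key point is that this bound is the same for each of the $k!$ orderings (the points are exchangeable), so summing over $F(\nbzero)$ multiplies it by $k!$; substituting into Lemma~\ref{lem6}, the factor $k!$ cancels the $1/k!$ inside $\nbbP[N_\nbzero = k] = \frac{m^k}{k!}e^{-m}$, leaving $m^k e^{-m}\prod_{i=1}^{k}|A_i|/(\pi d_v^2)$, which is the claimed inequality.

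I do not expect a genuine obstacle; the only care needed is the bookkeeping of the union bound — confirming that $F(\nbzero)$ has exactly $k!$ elements and that every summand contributes the same conditional expectation by exchangeability, and getting the direction of the measure inequality $|A_i \cap B_\nbzero| \le |A_i|$ right for an upper bound. This union bound is also precisely the source of slack: it is tight only when the events $\{\Delta_{\rm v}(\nbv,\nbV_0)\le\epsilon\}$ are disjoint across permutations, i.e., when the balls $A_i$ do not overlap (the same obstruction noted before Lemma~\ref{lem7}). Combined with the complementary estimate of Lemma~\ref{lem7}, it sandwiches the otherwise intractable quantity in Lemma~\ref{lem6}.
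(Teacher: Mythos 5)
Your proposal is correct and follows essentially the same route as the paper: bound the logical OR by the sum over the $k!$ permutations, then apply the conditional i.i.d.-uniform property exactly as in Lemma~\ref{lem7}, with the $k!$ from exchangeability cancelling the $1/k!$ in $\nbbP[N_\nbzero = k]$. Your explicit bookkeeping (and the observation that $|A_i \cap B_\nbzero| \leq |A_i|$ gives the correct inequality direction) fills in details the paper leaves implicit, but the argument is the same.
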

\begin{proof}
	The result of logical OR operation is upper bounded by directly summing of the indicators, given as 
	\begin{align}
		\lor_{\nbV_0 \in F(\nbzero)} \left\{\prod_{\tilde{\nbX}_i \in \nbV_0} \nb1\!\left( \tilde{\nbX}_i \in A_i \right)\right\} \leq \sum_{\nbV_0 \in F(\nbzero)} \left\{\prod_{\tilde{\nbX}_i \in \nbV_0} \nb1\!\left( \tilde{\nbX}_i \in A_i \right),\right\}.
	\end{align}
    With the techniques in Lemma~\ref{lem7}, it is straightforward to complete this proof.
\end{proof}

Using results from Lemma~\ref{lem7} and Lemma~\ref{lem8}, we present the bounds on the localizability probability.

\begin{theorem}
	\label{the2}
		The localizability probability $P_{\rm Loc}$ based on the relative locations of landmarks is upper and lower bounded by 
	\begin{align}
		1 - \exp\!\left(-2m\right) I_0\!\left(\frac{2m\epsilon}{d_v}\right)\geq P_{\rm Loc} = 1 - P_{\rm N-Loc} \geq 1 - \exp\!\left(\frac{m^2\epsilon^2}{d_v^2} - 2m\right),
	\end{align}
	where $m = \lambda \pi d_v^2$ and $I_0(\cdot)$ is the modified Bessel function of the first kind. 
\end{theorem}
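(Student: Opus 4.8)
The plan is to begin from the decomposition of the non-localizability probability already established in~\eqref{eqdef:C2-P-nloc-p1}, $P_{\rm N-Loc} = \nbbP[N(B_\nbx)=0,\,N(B_\nbzero)=0] + \nbbP[\Delta_{\rm s}(F(\nbx),F(\nbzero))\le\epsilon]$, and to bound the two terms separately. The first term is exactly $e^{-2m}$ by~\eqref{eqref:C2-first-term}, so everything reduces to sandwiching the overlap term $\nbbP[\Delta_{\rm s}(F(\nbx),F(\nbzero))\le\epsilon]$. For that I would invoke the conditional estimates of Lemmas~\ref{lem7} and~\ref{lem8} (which refine the exact expression of Lemma~\ref{lem6}): conditioned on $F(\nbx)=f(\nbx)$ with $N_\nbx=k$ visible landmarks, the overlap probability lies between $\frac1{k!}m^k e^{-m}\prod_{i=1}^k\frac{|A_i|}{\pi d_v^2}$ and $m^k e^{-m}\prod_{i=1}^k\frac{|A_i|}{\pi d_v^2}$.

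Next I would remove the conditioning on $f(\nbx)$ by the tower rule, averaging over $N_\nbx\sim\mathrm{Pois}(m)$ and the (independent, uniformly placed) landmarks seen from $\nbx$. The key simplification is that each $A_i=\nbb(\tilde\nbx_i,\epsilon)$ is a ball of radius $\epsilon$, so $|A_i|=\pi\epsilon^2$ is configuration-independent and the product collapses to $\prod_{i=1}^k\frac{|A_i|}{\pi d_v^2}=(\epsilon/d_v)^{2k}$, making the expectation over landmark positions trivial. Multiplying the two bounds by $\nbbP[N_\nbx=k]=\frac1{k!}m^k e^{-m}$ and summing over $k\ge1$ gives, on the upper side, $\nbbP[\Delta_{\rm s}\le\epsilon]\le e^{-2m}\sum_{k\ge1}\frac1{k!}\big(\tfrac{m^2\epsilon^2}{d_v^2}\big)^k = e^{-2m}\big(e^{m^2\epsilon^2/d_v^2}-1\big)$, and on the lower side $\nbbP[\Delta_{\rm s}\le\epsilon]\ge e^{-2m}\sum_{k\ge1}\frac1{(k!)^2}\big(\tfrac{m\epsilon}{d_v}\big)^{2k}=e^{-2m}\big(I_0(2m\epsilon/d_v)-1\big)$, the latter using the series $I_0(z)=\sum_{j\ge0}(z/2)^{2j}/(j!)^2$. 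Adding the $e^{-2m}$ from the first term yields $e^{-2m}I_0(2m\epsilon/d_v)\le P_{\rm N-Loc}\le\exp(m^2\epsilon^2/d_v^2-2m)$, and the claimed bounds follow from $P_{\rm Loc}=1-P_{\rm N-Loc}$.

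I expect the delicate points to be bookkeeping rather than analysis. First, one must check that the $k=0$ contribution to the overlap term vanishes — two empty permutation sets have $\Delta_{\rm s}=\infty>\epsilon$ — so this case is fully carried by the separate ``both empty'' term $e^{-2m}$ and the overlap series legitimately starts at $k=1$. Second, $A_i$ may stick out of $B_\nbzero$, so the exact per-landmark hitting probability is $|A_i\cap B_\nbzero|/(\pi d_v^2)\le\pi\epsilon^2/(\pi d_v^2)$; this is in the conservative direction for the upper bound coming from Lemma~\ref{lem8}, while on the Lemma~\ref{lem7} side one either restricts to $\epsilon$ small enough that the clipping is immaterial or keeps $|A_i\cap B_\nbzero|$ in place of $\pi\epsilon^2$. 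Third, the gap between the $I_0(2m\epsilon/d_v)$ and $\exp(m^2\epsilon^2/d_v^2)$ factors is entirely due to bounding the logical OR over the $k!$ permutations of $F(\nbzero)$ either by a single term (lower bound) or by their sum (upper bound), so no further work is needed to see why the two bounds differ.
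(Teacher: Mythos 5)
Your proposal follows essentially the same route as the paper's proof: decompose $P_{\rm N-Loc}$ via~\eqref{eqdef:C2-P-nloc-p1}, apply Lemmas~\ref{lem7} and~\ref{lem8} conditionally, use $|A_i|=\pi\epsilon^2$ to collapse the product to $(\epsilon/d_v)^{2k}$, and sum the Poisson-weighted series to get $I_0(2m\epsilon/d_v)$ and $e^{m^2\epsilon^2/d_v^2}$ respectively. Your second ``delicate point'' — that $A_i$ may protrude outside $B_\nbzero$ so the exact hitting probability is $|A_i\cap B_\nbzero|/(\pi d_v^2)$, which is harmless for the upper bound on $P_{\rm N-Loc}$ but needs either a small-$\epsilon$ restriction or the clipped measure for the lower bound — is a legitimate subtlety that the paper's proof silently ignores, so your treatment is in fact slightly more careful than the original.
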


\begin{proof}
	We can write the localizability probability by taking the expectation of the conditional localizability probability over $\Phi \cap B_\nbx$, given as 
	\begin{align}
		\nbbP\! \left[ \Delta_{\rm s}(F(\nbx), F(\nbzero)) \leq \epsilon  \right] &= \nbbE_{\Phi \cap B_\nbx}\!\left[\nbbP\! \left[ \Delta_{\rm s}(F(\nbx), F(\nbzero)) \leq \epsilon \mid F(\nbx) = f(\nbx) \right]\right].
	\end{align}
	Using the result from Lemma~\ref{lem7}, we have
	\begin{align}
		\nbbE_{\Phi \cap B_\nbx}&\!\left[\nbbP\! \left[ \Delta_{\rm s}(F(\nbx), F(\nbzero)) \leq \epsilon \mid F(\nbx) = f(\nbx)\right]\right] \\
        &\overset{(a)}{\geq} \nbbE_{\Phi \cap B_\nbx}\left[\frac{m^{N_\nbx}}{N_\nbx!} e^{-m} \cdot \left\{\frac{\epsilon^2}{d_v^2}\right\}^{N_\nbx}\right] \\
		&\overset{(b)}{=} \sum_{k = 1}^{\infty} \left(\frac{m^k}{k!} e^{-m}\right)^2 \cdot \left\{\frac{\epsilon^2}{d_v^2}\right\}^k \\
		&= e^{-2m} \left(I_0\!\left(\frac{2m\epsilon}{d_v}\right) - 1\right),
	\end{align}
	where (a) follows by the fact that $|A_i| = |b(\tilde{\nbx}_i, \epsilon)| = \pi \epsilon^2$ and $N_\nbx$ is the number of landmarks in $B_\nbx$; (b) follows by $N(B_\nbx) = k$ and we remove the condition on the locations of the points in $B_\nbx$.
	Now, we use the definition of the non-localizability probability in~\eqref{eqdef:C2-P-nloc} and have
	\begin{align}
		P_{\rm N-Loc} &= \nbbP\!\left[N\!\left( B_{\nbx}\right)  = 0, N\!\left( B_{\nbzero}\right)  = 0 \right] + \nbbP\!\left[\Delta_{\rm s}\!\left(F(\nbx), F(\nbzero)\right) \leq \epsilon\right]\\
		&\geq e^{-2m} \left(I_0\!\left(\frac{2m\epsilon}{d_v}\right) - 1\right) + e^{-2m} = e^{-2m} I_0\!\left(\frac{2m\epsilon}{d_v}\right). \label{eq:the2-lb}
	\end{align}
	The left side of the inequality is proved. 
    The right side of the inequality can be proved using Lemma~\ref{lem8} with the same technique. This completes the proof.
\end{proof}
\begin{comment}
    \begin{remark}
	Suppose the relative location measurements are in a vector form, similar to the approach we used in section~\ref{sec:case1}. In that case, we can directly derive the corresponding localizability probability, which will have the same expression of the lower bound in~\eqref{eq:the2-lb}. This result is reasonable since the vector form contains ordering information among these measurements.
\end{remark}
\end{comment}

\section{Case 3: Set of Ranges} \label{sec:case3}
In the previous section, we discussed the scenario involving the use of the relative locations of visible landmarks.
However, the relative locations may not be available in many situations, for example, when the target is on the ground and takes images of its surroundings to estimate the ranges to visible landmarks. In such cases, the orientations are unknown and the relative locations cannot be determined.
Without ordering, we represent these range measurements as a set
\begin{equation}
	\ncalD_{\nbx} = \left\{ r_i \mid r_i = \|\nbx_i - \nbx\|_2, \nbx_i \in \Phi \cap B_{\nbx} \right\}.
\end{equation}
We illustrate the set of range measurements at location $\nbx$ in~\figref{fig:c3-model}.
\begin{figure}
	\centering
	\includegraphics[width=0.3\textwidth]{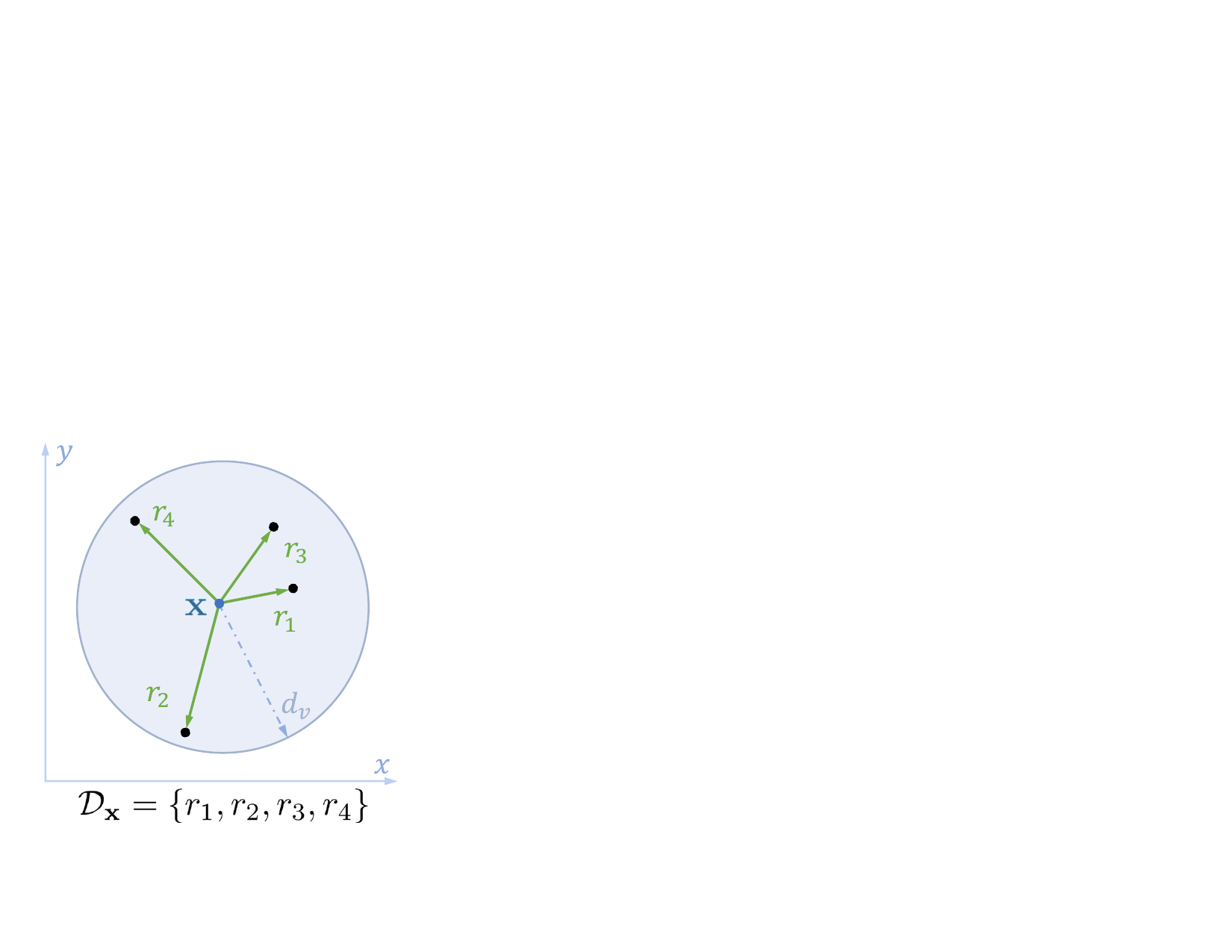}
	\caption{An illustration of the set of range measurements at location $\nbx$.}
	\label{fig:c3-model}
\end{figure}
Mathematically, this representation of range measurements shares the same structure as defined in~\eqref{def:set_relative_loc}. 
Similarly, we formulate these range measurements into the set of vectors, denoted as $S(\ncalD_{\nbx})$. 
We treat $S(\ncalD_{\nbx})$ as the function of location $\nbx$, defined as
\begin{equation}
	f: \nbbR^2 \mapsto \nbbR^{N_\nbx}, \nbx \mapsto f(\nbx) = S(\ncalD_{\nbx}) = \left\{\nbr_1, \cdots,\nbr_{N_\nbx!}\right\},
\end{equation}
where $N_\nbx$ is the number of visible landmarks at location $\nbx$ and $\nbr_i = \left[r_1, \cdots, r_{N_\nbx}\right]$ is
a vector representing the permutation of range measurements. 
Similar to the distance defined in~\eqref{def:C2-vec-dist}, we define the distance between permutations of the range measurement as
\begin{equation}
	\label{def:C3-vec-dist}
	\Delta_{\rm v}\!\left(\nbr_i, \nbr_j\right) = 
	\begin{cases}
		\left\| \nbr_i - \nbr_j \right\|_{\infty}, & \dim(\nbr_i) = \dim(\nbr_j) \\
		\infty, & \dim(\nbr_i) \neq \dim(\nbr_j)
	\end{cases}.
\end{equation}
Further, the distance between two sets of permutations is defined as
\begin{align}
	\label{def:C3-set-dist}
	\Delta_{\rm s}\!\left(f(\nbzero), f(\nbx)\right) = \min_{\nbr_i \in f(\nbzero), \nbr_j \in f(\nbx)} \Delta_{\rm v}\!\left(\nbr_i, \nbr_j\right).
\end{align}
The measurement error is introduced by adding a noise $\nbn \in \nbbR^{N_\nbx}$ to the range vector $\nbr$. 
The set of possible noisy range vectors observed at location $\nbx$ is
\begin{equation}
	F_{\epsilon}(\nbx) = \left\{\nbr + \nbn \mid \nbr \in f(\nbx), \nbn \in \nbbR^{N_\nbx}, \| \nbn \|_{\infty} \leq \epsilon/2\right\}
\end{equation}
Since $\| \nbn \|_{\infty} \leq \epsilon/2$, the distance between the noisy range vector $\tilde{\nbr} = \nbr + \nbn$ and its noise-free version $\nbr$ is bounded $\epsilon /2$. 
Thus, $F_{\epsilon}(\nbx)$ can be equivalently represented with the vector distance, given as
\begin{equation}
	F_{\epsilon}(\nbx) = \left\{\tilde{\nbr} \mid \Delta_{\rm v}\!\left(\tilde{\nbr},\nbr \right) \leq \epsilon/2, \nbr \in f(\nbx)\right\}.
\end{equation}
We now define the localization probability based on the set of ranges as
\begin{align}
	\label{eqdef:C3-P-nloc}
	P_{\rm N-Loc} &= \nbbP\!\left[F_{\epsilon}(\nbx) = \varnothing, F_{\epsilon}(\nbzero) = \varnothing \right] + \nbbP\!\left[F_{\epsilon}(\nbx) \cap F_{\epsilon}(\nbzero) \neq \varnothing \right]\\
	&= \nbbP\!\left[N\!\left( B_{\nbx}\right)  = 0, N\!\left( B_{\nbzero}\right)  = 0 \right] + \nbbP\!\left[\Delta_{\rm s}\!\left(F(\nbx), F(\nbzero)\right) \leq \epsilon \right],
\end{align}
where $F(\nbzero)$ and $F(\nbx)$ are two random measurements obtained at the origin and $\nbx$, respectively, since the landmark locations form a PPP and are not deterministic.
The first term of~\eqref{eqdef:C3-P-nloc} is given in~\eqref{eqref:C2-first-term}, and the second component represents the probability that the noisy measurements are overlapped.

We calculate the second component by fixing F(x) = f(x) and present in the following lemma.
\begin{lemma}
	\label{lem9}
	Given the measurement $F(\nbx) = f(\nbx)$, the probability of $\Delta_{\rm s}\!\left(f(\nbx), F(\nbzero)\right) \leq \epsilon$ can be formulated as:
	\begin{align}
		\nbbP\!&\left[ \Delta_{\rm s}\!\left(f(\nbx), F(\nbzero)\right) \leq \epsilon \right] \notag \\
		&= \frac{m^k}{k!} e^{-m} \cdot \nbbE_{\Phi \cap B_\nbzero \mid N_\nbzero}\! \left[ \lor_{\nbR_0 \in F(\nbzero)} \left\{\prod_{R_i = \|\tilde{\nbX}_i\|_2 \in \nbR_0} \nb1\!\left( \tilde{\nbX}_i \in A_i \right)\right\} \right].
	\end{align}
\end{lemma}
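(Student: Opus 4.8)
The plan is to mirror the proof of Lemma~\ref{lem6} almost verbatim, since the set-of-ranges measurement has exactly the same combinatorial structure as the set-of-relative-locations measurement: a set of scalars rather than a set of $\nbbR^2$-vectors, formatted into all its permutations. First I would write $\nbbP[\Delta_{\rm s}(f(\nbx), F(\nbzero)) \leq \epsilon]$ as the expectation over $\Phi \cap B_\nbzero$ of the indicator $\nb1(\Delta_{\rm s}(f(\nbx), F(\nbzero)) \leq \epsilon)$, and then unfold the set distance $\Delta_{\rm s}$ from~\eqref{def:C3-set-dist} into a logical OR over the permutation vectors $\nbR_0 \in F(\nbzero)$ of the event $\Delta_{\rm v}(\nbr, \nbR_0) \leq \epsilon$ for some $\nbr \in f(\nbx)$.

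Next I would expand each $\nb1(\Delta_{\rm v}(\nbr, \nbR_0) \leq \epsilon)$ using the definition of $\Delta_{\rm v}$ in~\eqref{def:C3-vec-dist}: it splits into the dimension-matching factor $\nb1(\dim(\nbR_0) = \dim(\nbr)) = \nb1(N(B_\nbzero) = k)$ times a product over coordinates $\prod_i \nb1(|R_i - r_i| \leq \epsilon)$. The key step is to re-interpret each scalar constraint $|R_i - r_i| \leq \epsilon$ geometrically: since $R_i = \|\tilde{\nbX}_i\|_2$ is the distance from a landmark $\tilde{\nbX}_i$ to the origin and $r_i$ is the corresponding distance observed at $\nbx$, the set of landmark positions $\tilde{\nbX}_i$ satisfying this is an annulus, which the lemma denotes $A_i$ — here $A_i$ is the annulus $\{\nbz : |\, \|\nbz\|_2 - r_i\, | \leq \epsilon\}$ (intersected with $B_\nbzero$), playing the analogous role to the ball $A_i = \nbb(\tilde{\nbx}_i, \epsilon)$ in Lemma~\ref{lem6}. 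This gives $\nb1(\Delta_{\rm v}(\nbr, \nbR_0) \leq \epsilon) = \nb1(N(B_\nbzero) = k) \cdot \prod_{R_i \in \nbR_0} \nb1(\tilde{\nbX}_i \in A_i)$.

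Finally I would pull the dimension indicator $\nb1(N(B_\nbzero) = k)$ out of the OR (it is common to every term), take expectations, and condition on $N_\nbzero = k$: by the void-probability/Poisson structure, $\nbbP[N_\nbzero = k] = \frac{m^k}{k!}e^{-m}$ factors out, leaving the conditional expectation $\nbbE_{\Phi \cap B_\nbzero \mid N_\nbzero}[\lor_{\nbR_0 \in F(\nbzero)} \prod_{R_i \in \nbR_0} \nb1(\tilde{\nbX}_i \in A_i)]$, which is exactly the claimed expression. The only genuine subtlety — and the step I would be most careful about — is confirming that the annuli $A_i$ are the correct geometric objects here (as opposed to balls in Case~2) and that the dimension-matching factor is properly handled before conditioning; the rest is a direct transcription of the Lemma~\ref{lem6} argument, so I would simply remark that the proof follows by the same reasoning as Lemma~\ref{lem6} with $A_i$ now denoting the annulus $\{\nbz \in B_\nbzero : |\,\|\nbz\|_2 - r_i\,| \leq \epsilon\}$.
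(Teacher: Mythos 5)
Your proposal is correct and follows essentially the same route as the paper's own proof: expectation of the indicator, unfolding $\Delta_{\rm s}$ into a logical OR over permutations, splitting each $\Delta_{\rm v}$ indicator into the dimension-matching factor times per-coordinate constraints, and reinterpreting $|R_i - r_i| \leq \epsilon$ as membership in the annulus $A_i = \nbb(\nbzero, \|\tilde{\nbx}_i\|_2 + \epsilon) \backslash \nbb(\nbzero, \|\tilde{\nbx}_i\|_2 - \epsilon)$ before factoring out $\nbbP[N_\nbzero = k] = \frac{m^k}{k!}e^{-m}$. Your identification of the annuli as the only genuine change relative to Lemma~\ref{lem6} is exactly the point the paper makes.
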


\begin{proof}
	Similar to the proof for Lemma~\ref{lem6}, we write the probability as the expectation of the indicator function
	\begin{align}
		\nbbP\!\left[ \Delta_{\rm s}(f(\nbx), F(\nbzero)) \leq \epsilon\right]
		= \nbbE_{\Phi \cap B_\nbzero} \! \left[\nb1\!\left(\Delta_{\rm s}(f(\nbx), F(\nbzero)) \leq \epsilon\right)\right].
	\end{align}
    With the distance defined in~\eqref{def:C3-set-dist}, the event $\left\{\Delta_{\rm s}(f(\nbx), F(\nbzero)) \leq \epsilon\right\}$ indicates that there exists a vector $\nbR_0 \in F(\nbzero)$ such that $\Delta_{\rm v}\!\left(\nbr, \nbR_0\right) \leq \epsilon$ holds for the vector $\nbr \in f(\nbx)$.
    The corresponding indicator function is
	\begin{align}
		\nb1\!\left(\Delta_{\rm s}(f(\nbx), F(\nbzero)) \leq \epsilon\right) = \lor_{\nbR_0 \in F(\nbzero)} \nb1\!\left(\Delta_{\rm v}(\nbr, \nbR_0) \leq \epsilon\right),
	\end{align}
	where the symbol $\lor$ represents the logical OR operation. If one of the distances $\Delta_{\rm v}(\nbr, \nbR_0)$ is less than or equal to $\epsilon$, the logical OR of these indicator functions goes to one.
	Further, by the definition in~\eqref{def:C3-vec-dist},  $\Delta_{\rm v}(\nbr, \nbR_0) \leq \epsilon$ indicates that the difference of ranges in each dimension $| R_i - r_i | \leq \epsilon, 1\leq i \leq N_\nbx$. 
    The corresponding indicator functions are
	\begin{align}
		\nb1\!\left(\Delta_{\rm v}(\nbr, \nbR_0) \leq \epsilon\right) &= \nb1\!\left({\rm dim}(\nbR_0) = {\rm dim}(\nbr) \right) \cdot \prod_{R_i \in \nbR_0} \nb1\!\left(| R_i - r_i | \leq \epsilon\right) \\
		&= \nb1\!\left(N(B_\nbzero) = k \right) \cdot \prod_{R_i = \|\tilde{\nbX}_i\|_2 \in \nbR_0} \nb1\!\left( | \|\tilde{\nbX}_i\|_2 - \|\tilde{\nbx}_i\|_2 | \leq \epsilon\right), \label{eqref:c3-2}
	\end{align}
	where $\tilde{\nbX}_i$ and $\tilde{\nbx}_{i}$ are the relative locations of the visible landmarks, $N(\nbzero) = N_\nbzero$ is the number of visible landmarks at the origin.
    One can observe that the equation~\eqref{eqref:c3-2} has the same form as~\eqref{eqref:c3-1} by replacing $| \|\tilde{\nbX}_i\|_2 - \|\tilde{\nbx}_i\|_2 | \leq \epsilon$ with $\left\| \tilde{\nbX}_i - \tilde{\nbx}_i \right\|_2 \leq \epsilon$, and the correspond regions are the annuli $A_i = \nbb(\nbzero, \|\tilde{\nbx}_i\|_2 + \epsilon) \backslash \nbb(\nbzero, \|\tilde{\nbx}_i\|_2 - \epsilon)$. Now, \eqref{eqref:c3-2} can be simplified
    \begin{equation}
		\nb1\!\left(\Delta_{\rm v}(\nbr, \nbR_0) \leq \epsilon\right) = \nb1\!\left(N(B_\nbzero) = k \right) \cdot \prod_{R_i = \|\tilde{\nbX}_i\|_2 \in \nbR_0} \nb1\!\left( \tilde{\nbX}_i \in A_i \right).
	\end{equation}
    The follow-up proof is the same as the proof of Lemma~\ref{lem7} and the conditional probability 
	\begin{align}
		\nbbP\!\left[ \Delta_{\rm s}(f(\nbx), F(\nbzero)) \leq \epsilon\right] =\frac{m^k}{k!} e^{-m} \cdot \nbbE_{\Phi \cap B_\nbzero \mid N_\nbzero}\! \left[ \lor_{\nbR_0 \in F(\nbzero)} \left\{\prod_{R_i = \|\tilde{\nbX}_i\|_2 \in \nbR_0} \nb1\!\left( \tilde{\nbX}_i \in A_i \right)\right\} \right],
	\end{align}
	where $m = \lambda \pi d_v^2$. This completes the proof.
\end{proof}

Because of the overlaps among $A_i$, the derivation of the exact expression of the conditional localizability probability in Lemma~\ref{lem9} is not tractable. Thus, we provide its lower bound in the following lemma.
\begin{lemma}
	\label{lem10}
	Given the measurement $f(\nbx)$, the lower bound of the probability is
	\begin{equation}
		\nbbP \left[ \Delta_{\rm s}(f(\nbx), F(\nbzero)) \leq \epsilon \right] \geq \frac{m^k}{k!} e^{-m} \cdot \prod_{i = 1}^{k} \left\{\frac{|A_i|}{\pi d_v^2}\right\}.
	\end{equation}
\end{lemma}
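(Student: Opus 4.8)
The plan is to reproduce the argument of Lemma~\ref{lem7} essentially verbatim, the only change being that the sets $A_i$ are now the annuli identified in the proof of Lemma~\ref{lem9} rather than balls. By Lemma~\ref{lem9}, the conditional probability equals $\frac{m^k}{k!}e^{-m}$ times $\nbbE_{\Phi \cap B_\nbzero \mid N_\nbzero}\!\left[\lor_{\nbR_0 \in F(\nbzero)} \left\{\prod_{R_i = \|\tilde{\nbX}_i\|_2 \in \nbR_0} \nb1(\tilde{\nbX}_i \in A_i)\right\}\right]$, with $m = \lambda\pi d_v^2$, so it suffices to bound this expectation from below.

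First I would discard all but one disjunct of the logical OR: since every indicator is nonnegative, for an arbitrarily fixed permutation $\nbR_0^{*}\in F(\nbzero)$ we have $\lor_{\nbR_0 \in F(\nbzero)} \left\{\prod_{R_i \in \nbR_0} \nb1(\tilde{\nbX}_i \in A_i)\right\} \geq \prod_{R_i \in \nbR_0^{*}} \nb1(\tilde{\nbX}_i \in A_i)$. Taking conditional expectations and recognizing the product of indicators as the indicator of an intersection event, the right-hand side becomes $\nbbP[\cap_i \{\tilde{\nbX}_i \in A_i\} \mid N_\nbzero = k]$. Then I would invoke the standard conditional-uniformity property of the PPP already used in Lemma~\ref{lem2} and Lemma~\ref{lem7}: given $N(B_\nbzero) = k$, the $k$ visible landmarks at the origin are i.i.d.\ uniform on $B_\nbzero = \nbb(\nbzero, d_v)$, hence exchangeable, so the intersection probability factorizes as $\prod_{i=1}^k \nbbP[\tilde{\nbX}_i \in A_i \mid N_\nbzero = k] = \prod_{i=1}^k \frac{|A_i \cap B_\nbzero|}{\pi d_v^2}$. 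Restoring the Poisson prefactor $\frac{m^k}{k!}e^{-m}$ then yields the claimed lower bound, where the measure $|A_i|$ appearing in the statement is read as $|A_i \cap B_\nbzero|$.

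I do not expect a substantive obstacle; the proof is a line-by-line transcription of Lemma~\ref{lem7}, and the matching upper bound of Lemma~\ref{lem8} (and hence the two-sided bound on $P_{\rm Loc}$ obtained by plugging into the template of the proof of Theorem~\ref{the2}) would follow in exactly the same way. The only point deserving a sentence of care is the geometry of $A_i$: in Case~3 it is the annulus $\nbb(\nbzero, \|\tilde{\nbx}_i\|_2+\epsilon)\backslash\nbb(\nbzero, \|\tilde{\nbx}_i\|_2-\epsilon)$ determined by the known range $r_i=\|\tilde{\nbx}_i\|_2$, and since $\tilde{\nbX}_i$ is uniform on the visibility disc $B_\nbzero$, it is the intersection $A_i\cap B_\nbzero$ whose Lebesgue measure enters the product — this clipped region is what is abbreviated as $A_i$ in the stated inequality.
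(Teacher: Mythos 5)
Your proposal is correct and takes essentially the same approach as the paper: the paper's proof of Lemma~\ref{lem10} simply defers to the ball-case lower-bound argument (dropping all but one disjunct of the OR, then using conditional uniformity of the points given $N_\nbzero=k$) with the regions $A_i$ replaced by the annuli identified in Lemma~\ref{lem9}, which is exactly what you carry out. Your remark that $|A_i|$ must be read as $|A_i \cap B_{\nbzero}|$ is consistent with how the paper actually computes these measures in Lemma~\ref{lem12}.
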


\begin{proof}
	The proof strategy is the same as Lemma~\ref{lem8}. The difference is $A_i = \nbb(\nbzero, \|\tilde{\nbX}_i\|_2 + \epsilon) \backslash \nbb(\nbzero, \|\tilde{\nbX}_i\|_2 - \epsilon)$ in this case.
\end{proof}

Additionally, we provide an upper bound of the conditional localizability probability as follows:
\begin{lemma}
	\label{lem11}
	Given the measurements $f(\nbx)$, the upper bound of the probability is
	\begin{equation}
		\nbbP\! \left[ \Delta_{\rm s}(f(\nbx), F(\nbzero)) \leq \epsilon \right] \leq m^k e^{-m} \cdot \prod_{i = 1}^{k} \left\{\frac{|A_i|}{\pi d_v^2}\right\}.
	\end{equation}
\end{lemma}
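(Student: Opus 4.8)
The plan is to mirror the proof of Lemma~\ref{lem8}, starting from the exact expression for the conditional probability established in Lemma~\ref{lem9}. The only structural difference from Case~2 is that the relevant regions are now annuli, $A_i = \nbb(\nbzero, \|\tilde{\nbx}_i\|_2 + \epsilon) \backslash \nbb(\nbzero, \|\tilde{\nbx}_i\|_2 - \epsilon)$, arising from the constraint $\big|\,\|\tilde{\nbX}_i\|_2 - \|\tilde{\nbx}_i\|_2\,\big| \leq \epsilon$ rather than disks; this substitution does not affect any of the estimates.

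First I would bound the logical OR inside the expectation in Lemma~\ref{lem9} by the sum of the corresponding indicators over all permutations $\nbR_0 \in F(\nbzero)$,
\begin{align}
	\lor_{\nbR_0 \in F(\nbzero)} \left\{\prod_{R_i = \|\tilde{\nbX}_i\|_2 \in \nbR_0} \nb1\!\left( \tilde{\nbX}_i \in A_i \right)\right\} \leq \sum_{\nbR_0 \in F(\nbzero)} \left\{\prod_{R_i = \|\tilde{\nbX}_i\|_2 \in \nbR_0} \nb1\!\left( \tilde{\nbX}_i \in A_i \right)\right\},
\end{align}
which is valid since a sum of nonnegative integers dominates their logical OR. Taking the conditional expectation $\nbbE_{\Phi \cap B_{\nbzero} \mid N_{\nbzero}}$ and using linearity, I would then evaluate each summand exactly as in the proof of Lemma~\ref{lem7}: conditioned on $N_{\nbzero} = k$, the $k$ visible landmarks are i.i.d.\ uniform on $B_{\nbzero}$, so for each permutation the expectation of the product of indicators factorizes into $\prod_{i=1}^{k} |A_i|/(\pi d_v^2)$, independently of which permutation is chosen. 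Since $F(\nbzero)$ contains $k!$ permutations (the ranges being a.s.\ distinct), the sum of the $k!$ identical expectations equals $k!\prod_{i=1}^{k} |A_i|/(\pi d_v^2)$.

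Substituting back into the expression from Lemma~\ref{lem9} yields
\begin{align}
	\nbbP\!\left[ \Delta_{\rm s}(f(\nbx), F(\nbzero)) \leq \epsilon \right] \leq \frac{m^k}{k!} e^{-m} \cdot k!\prod_{i=1}^{k}\frac{|A_i|}{\pi d_v^2} = m^k e^{-m} \prod_{i=1}^{k}\frac{|A_i|}{\pi d_v^2},
\end{align}
which is the claimed bound. I do not anticipate any real obstacle: the nontrivial content --- identifying $A_i$ as the correct annulus and reducing to a uniform-placement computation --- has already been done in Lemmas~\ref{lem6}--\ref{lem9}, and the remaining step is simply the ``OR $\leq$ sum'' inequality together with the permutation count $k!$, which exactly cancels the $1/k!$ from the Poisson weight in Lemma~\ref{lem9}.
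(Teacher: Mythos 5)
Your proof is correct and follows essentially the same route as the paper: bound the logical OR by the sum over the $k!$ permutations, factorize each conditional expectation into $\prod_{i=1}^{k}|A_i|/(\pi d_v^2)$ using the i.i.d.\ uniform placement given $N_{\nbzero}=k$, and let the $k!$ cancel the $1/k!$ in the Poisson weight. The paper's own proof states exactly this (by reference to Lemmas~\ref{lem7} and~\ref{lem8}, with $A_i$ replaced by the annulus); your write-up just makes the permutation count explicit.
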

\begin{proof}
	The approach for proving this lemma is the same method used in Lemma~\ref{lem8}. The only difference is $A_i = \nbb(\nbzero, \|\tilde{\nbX}_i\|_2 + \epsilon) \backslash \nbb(\nbzero, \|\tilde{\nbX}_i\|_2 - \epsilon)$ in this case.
\end{proof}

\begin{lemma} % Given the distribution of Area
	\label{lem12}
	Conditioned on the number of visible landmarks, the expectation of the Lebesgue measure of the region $A_i$ is 
	\begin{equation}
		\nbbE_{R_i | N_\nbx}\left[|A_i|\right] = \frac{\pi \epsilon}{3 d_v^2} \left(8 d_v^3 - 6 d_v^2 \epsilon + \epsilon^3\right), \quad 1 \leq i \leq N_\nbx,
	\end{equation}
	where $N_\nbx$ is the number of visible landmarks at location $\nbx$, $R_i = \|\tilde{\nbx}_i\|$ is the range to the $i$-th visible landmark.
\end{lemma}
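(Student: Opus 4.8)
The plan is to reduce the computation to the range distribution of Lemma~\ref{lem2} together with the probability that was already evaluated (implicitly) in the proof of Theorem~\ref{the1}. The first step is to pin down what $|A_i|$ means here: in Lemma~\ref{lem9} the set $A_i$ only ever enters through indicators $\nb1(\tilde{\nbX}_i \in A_i)$ evaluated at landmarks $\tilde{\nbX}_i$ lying inside the visibility ball $B_\nbzero = \nbb(\nbzero, d_v)$, so only the portion $A_i \cap B_\nbzero$ of the annulus is relevant (and this is also what makes $|A_i|/(\pi d_v^2)$ a genuine probability in Lemmas~\ref{lem10}--\ref{lem11}). Thus the quantity to compute is $\nbbE_{R_i \mid N_\nbx}\!\left[\,|A_i \cap B_\nbzero|\,\right]$, where $A_i$ is the annulus centered at the origin with radii $\|\tilde{\nbx}_i\| \pm \epsilon$ and $R_i = \|\tilde{\nbx}_i\|$.

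Second, I would express $|A_i \cap B_\nbzero|$ as a piecewise-polynomial function of $R_i$ by elementary planar geometry: clipping the inner radius at $0$ and the outer radius at $d_v$ gives $\pi(R_i+\epsilon)^2$ when the annulus is absorbed into a disc, $4\pi\epsilon R_i$ when it is a full annulus contained in $B_\nbzero$, and $\pi\!\left(d_v^2 - (R_i-\epsilon)^2\right)$ when only the outer radius is clipped, with the breakpoints in $R_i$ depending on whether $0 \le \epsilon < d_v/2$ or $d_v/2 \le \epsilon \le d_v$. Equivalently, and more economically, I would note that for a point uniformly distributed in $B_\nbzero$ with range $R$ one has $|A_i \cap B_\nbzero| = \pi d_v^2\,\nbbP[\,|R - R_i| \le \epsilon \mid N_\nbzero = k\,]$, and this conditional probability is precisely the three-branch expression of Lemma~\ref{lem3} with $r$ replaced by $R_i$.

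Third, I would integrate against the range density $f_{R_i}(r) = \tfrac{2r}{d_v^2}\,\delta(0 \le r \le d_v)$ from Lemma~\ref{lem2}. Since $R_i$ and $R$ then carry the same law, $\nbbE_{R_i}\!\left[|A_i\cap B_\nbzero|\right] = \pi d_v^2 \int_0^{d_v} \tfrac{2r}{d_v^2}\,\nbbP[\,|R-r|\le\epsilon \mid N_\nbzero = k\,]\,\nrmd r$, and the integral on the right is exactly the per-coordinate factor $\tfrac{8 d_v^3\epsilon - 6 d_v^2\epsilon^2 + \epsilon^4}{3 d_v^4}$ that appeared in~\eqref{thm1:case1} inside the proof of Theorem~\ref{the1}. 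Multiplying by $\pi d_v^2$ and factoring out $\epsilon$ yields $\tfrac{\pi\epsilon}{3 d_v^2}\!\left(8 d_v^3 - 6 d_v^2\epsilon + \epsilon^3\right)$, which is the claim. Alternatively one performs the three polynomial integrals directly; as remarked just after~\eqref{thm1:case1}, the $0 \le \epsilon < d_v/2$ and $d_v/2 \le \epsilon \le d_v$ cases lead to different-looking integrands but the same value, and at $\epsilon = d_v$ the formula correctly collapses to $\pi d_v^2$.

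The only genuine subtlety, and the step I would handle most carefully, is the bookkeeping of the clips at radius $0$ and radius $d_v$, i.e., correctly delimiting the three regimes of $R_i$ and matching them to the branches of Lemma~\ref{lem3}; everything after that is routine integration of cubic polynomials. Routing the argument through Lemma~\ref{lem3} and the computation already done for Theorem~\ref{the1} bypasses most of this, so that is the route I would take.
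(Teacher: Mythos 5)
Your proposal is correct and follows essentially the same route as the paper's proof: express the (clipped) annulus area as the same three-branch piecewise polynomial in $r$ and integrate it against the range density $\tfrac{2r}{d_v^2}$ from Lemma~\ref{lem2}. Your additional observation that $|A_i \cap B_\nbzero| = \pi d_v^2\,\nbbP[\,|R - R_i| \le \epsilon \mid N_\nbzero = k\,]$, which lets you reuse the integral already computed for Theorem~\ref{the1}, is a clean shortcut the paper does not state explicitly, but it does not change the substance of the argument.
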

\begin{proof}
	When conditioned on the number of visible landmarks, the locations of these landmarks follow a uniform distribution within the visibility region. The Lebesgue measure of the annular is a function of the range $r$, given by
	\begin{equation}
		s(r) = \begin{cases}
			\pi \!\left(r+\epsilon\right)^2, & 0 < r \leq \epsilon\\
			4 \pi r \epsilon, &\epsilon < r \leq d_v - \epsilon \\
			\pi \! \left(d_v^2 - r^2 - \epsilon^2 + 2r\epsilon\right), & d_v - \epsilon < r < d_v.
		\end{cases}
	\end{equation} 
	The distribution of the range $R_i$ between the target and the visible landmark is detailed in Lemma~\ref{lem1}.
	After taking expectations, the average size of $|A_i|$ is
	\begin{equation}
		\nbbE_{R_i | N_\nbx}\left[|A_i|\right] = \int_{0}^{d_v} s(r) f_D(r) \nrmd r = \frac{\pi \epsilon}{3 d_v^2} \left(8 d_v^3 - 6 d_v^2 \epsilon + \epsilon^3\right).
	\end{equation}	
	This completes the proof.
\end{proof}
Using results from the above lemmas, we present the bounds on the localizability probability based on the ranges.

\begin{theorem}
	\label{the3}
	The localizability probability $P_{\rm Loc}$ based on the ranges is bounded by 
	\begin{align}
		1-\exp\!\left(-2m\right) I_0\!\left(2m\sqrt{\frac{8d_v^3 \epsilon - 6 d_v^2 \epsilon^2 + \epsilon^4}{3d_v^4}}\right)\geq P_{\rm Loc} = 1 - P_{\rm N-Loc} \\
		P_{\rm Loc} = 1 - P_{\rm N-Loc} \geq \exp\!\left(\frac{m^2\epsilon}{3d_v^4}\!\left(8d_v^3 - 6 d_v^2 \epsilon + \epsilon^3\right) - 2m\right),
	\end{align}
	where $m = \lambda \pi d_v^2$ and $I_0(\cdot)$ is the modified Bessel function of the first kind. 
\end{theorem}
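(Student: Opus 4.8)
\emph{Proof strategy.} The plan is to follow the same route as the proof of Theorem~\ref{the2}, with Lemma~\ref{lem12} providing the one genuinely new computation. Starting from the definition in~\eqref{eqdef:C3-P-nloc}, the non-localizability probability splits as $P_{\rm N-Loc} = \nbbP[N(B_\nbx)=0,\,N(B_\nbzero)=0] + \nbbP[\Delta_{\rm s}(F(\nbx),F(\nbzero))\le\epsilon]$; by~\eqref{eqref:C2-first-term} the first term equals $e^{-2m}$, so the task reduces to bounding the overlap term $\nbbP[\Delta_{\rm s}(F(\nbx),F(\nbzero))\le\epsilon]$. I would write it as $\nbbE_{\Phi\cap B_\nbx}\bigl[\nbbP[\Delta_{\rm s}(F(\nbx),F(\nbzero))\le\epsilon \mid F(\nbx)=f(\nbx)]\bigr]$ and then substitute the conditional lower bound of Lemma~\ref{lem10} to obtain the \emph{upper} bound on $P_{\rm Loc}$, and the conditional upper bound of Lemma~\ref{lem11} to obtain the \emph{lower} bound. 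Only $k=N_\nbx\ge 1$ contributes to this term, because a nonempty landmark pattern at $\nbx$ forces a pattern of matching dimension at the origin, whereas the event $N_\nbx=N_\nbzero=0$ is already accounted for by the first term.

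The key step is evaluating $\nbbE_{\Phi\cap B_\nbx}\bigl[\tfrac{m^{N_\nbx}}{N_\nbx!}e^{-m}\prod_{i=1}^{N_\nbx}\tfrac{|A_i|}{\pi d_v^2}\bigr]$ for the Lemma~\ref{lem10} branch, and the analogous quantity with $m^{N_\nbx}$ in place of $\tfrac{m^{N_\nbx}}{N_\nbx!}$ for the Lemma~\ref{lem11} branch. Conditioning on $N_\nbx=k$, the $k$ landmarks in $B_\nbx$ are i.i.d.\ uniform, so the ranges $R_i=\|\tilde{\nbx}_i\|$ are i.i.d.\ with the density of Lemma~\ref{lem2}, and the annulus areas $|A_i|$ are i.i.d.\ deterministic functions of them; hence the expectation of the product factors into $\prod_{i=1}^{k}\tfrac{\nbbE_{R_i\mid N_\nbx}[|A_i|]}{\pi d_v^2}$. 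By Lemma~\ref{lem12}, each factor equals the constant $\beta:=\tfrac{8d_v^3\epsilon-6d_v^2\epsilon^2+\epsilon^4}{3d_v^4}$. Summing over $k\ge1$ then collapses the two series using $I_0(z)=\sum_{k\ge0}(z/2)^{2k}/(k!)^2$: the Lemma~\ref{lem10} branch gives $\sum_{k\ge1}\bigl(\tfrac{m^k}{k!}e^{-m}\bigr)^2\beta^k = e^{-2m}\bigl(I_0(2m\sqrt{\beta})-1\bigr)$, and the Lemma~\ref{lem11} branch gives $\sum_{k\ge1}\tfrac{m^k}{k!}e^{-m}\,m^k e^{-m}\,\beta^k = e^{-2m}\bigl(e^{m^2\beta}-1\bigr)$.

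Adding back the $e^{-2m}$ from the first term cancels the two $-1$'s and leaves $e^{-2m}I_0(2m\sqrt{\beta}) \le P_{\rm N-Loc} \le e^{-2m}e^{m^2\beta}$, equivalently $1-e^{-2m}I_0(2m\sqrt{\beta}) \ge P_{\rm Loc} \ge 1-\exp(m^2\beta-2m)$; rewriting $m^2\beta = \tfrac{m^2\epsilon}{3d_v^4}\bigl(8d_v^3-6d_v^2\epsilon+\epsilon^3\bigr)$ recovers the exponent in the statement. The argument is essentially mechanical once Theorem~\ref{the2} is available; I expect the only delicate point to be the bookkeeping of the $k=0$ term, i.e.\ making sure the ``both visibility regions empty'' event is counted exactly once so that the additive $e^{-2m}$ merges cleanly with the Bessel/exponential series rather than being double-counted or dropped. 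A minor supporting observation that must be verified is that the annuli $A_i$ of Lemma~\ref{lem9} have Lebesgue measure depending only on $R_i$ (not on the direction of $\tilde{\nbx}_i$), which is exactly what justifies the factorization of the expectation above.
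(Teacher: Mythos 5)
Your proposal is correct and follows essentially the same route as the paper's proof: the same decomposition of $P_{\rm N\text{-}Loc}$ into the empty-pattern term $e^{-2m}$ plus the overlap term, the same use of Lemma~\ref{lem10} (resp.\ Lemma~\ref{lem11}) for the upper (resp.\ lower) bound on $P_{\rm Loc}$, and the same evaluation of $\nbbE_{R_i\mid N_\nbx}\!\left[|A_i|\right]$ via Lemma~\ref{lem12} before collapsing the two series into $I_0$ and the exponential. One remark: your final lower bound $P_{\rm Loc}\ge 1-\exp\!\left(\frac{m^2\epsilon}{3d_v^4}\left(8d_v^3-6d_v^2\epsilon+\epsilon^3\right)-2m\right)$ is the correct conclusion (it is what the paper's own argument and Table~\ref{tab:comparison} yield, and it parallels Theorem~\ref{the2}); the absence of the leading ``$1-$'' in the second display of the theorem statement is a typo in the paper, not a defect of your derivation.
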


\begin{proof}
	Using the same proof strategies in Theorem~\ref{the2} and Lemma~\ref{lem10}, we can write the probability of $\Delta_{\rm s}(F(\nbX_0), F(\nbzero)) \leq \epsilon$ by taking expectations over $\Phi \cap B_\nbx$, given as 
	\begin{align}
		\nbbP\! \left[ \Delta_{\rm s}(F(\nbx), F(\nbzero)) \leq \epsilon  \right] =& \nbbE_{\Phi \cap B_\nbx}\!\left[\nbbP\! \left[ \Delta_{\rm s}(F(\nbx), F(\nbzero)) \leq \epsilon \mid F(\nbx)= f(\nbx) \right]\right] \\
		=& \nbbE_{\Phi \cap B_\nbx}\!\left[\frac{m^{N_\nbx}}{N_\nbx!} e^{-m} \cdot \prod_{i = 1}^{N_\nbx}\left\{\frac{|A_i|}{\pi d_v^2}\right\}\right] \\
		\overset{(a)}{\geq}&  \nbbE_{\Phi \cap B_\nbx}\!\left[\frac{m^{N_\nbx}}{N_\nbx!} e^{-m} \cdot \prod_{i = 1}^{N_\nbx} \nbbE_{R_i \mid N_\nbx}\! \left\{\frac{|A_i|}{\pi d_v^2}\right\}\right] \\
		=& \exp\!\left(-2m\right) \left(I_0\!\left(2\sqrt{\frac{m^2\epsilon}{3d_v^4}\!\left(8d_v^3 - 6 d_v^2 \epsilon + \epsilon^3\right)}\right) - 1\right),
	\end{align}
		where (a) follows from the fact that conditioned on the number of visible landmarks, the ranges $R_i$ are independently and identically distributed, as demonstrated in Lemma~\ref{lem12}. 
		Now, we use the definition of the non-localizability probability in~\eqref{eqdef:C3-P-nloc} and write
		\begin{align}
			P_{\rm N-Loc} &= \nbbP\!\left[N\!\left( B_\nbx\right)  = 0, N\!\left( B_{\nbzero}\right)  = 0 \right] + \nbbP\!\left[\Delta_{\rm s}\!\left(F(\nbx), F(\nbzero)\right) \leq \epsilon \right]\\
			&\geq \exp\!\left(-2m\right) I_0\!\left(2m\sqrt{\frac{8d_v^3 \epsilon - 6 d_v^2 \epsilon^2 + \epsilon^4}{3d_v^4}}\right).\label{eq:c3-lb}
		\end{align}
		The left side of the inequality is proved. With Lemma~\ref{lem11}, the same technique can be used to prove the right side of the inequality. This completes the proof.
\end{proof}
\begin{remark}
	The lower bound of the non-localizability probability in~\eqref{eq:c3-lb} (or equivalently, the upper bound of the localizability probability) based on the ranges has the same closed-form expression as the non-localizability probability based range vectors, presented in~\eqref{thm1:case1}. This result aligns with our intuition that the range vector offers additional information about the ordering of the measurements.
\end{remark}

\section{Results and Discussions}\label{sec:simulation}

We now present a summary of the localizability probability results when employing various types of measurements in Table~\ref{tab:comparison}. These expressions share the same structure $e^{-2m}I_0\left(2\alpha m\right)$, where $\alpha \in \left[0,1\right]$. 
The common component in $\alpha$ is $\frac{\epsilon}{d_v}$, which we interpret as the noise-to-visible-distance ratio.
The expressions of $\alpha$ depend on specific types of measurements, allowing us to conveniently compare the non-localizability probability under different scenarios. 
For instance, the non-localizability probability based on relative locations outperforms that based on ranges, as the corresponding $\alpha$ will be smaller.
\begin{table}[htp]
	\centering
	\caption{The comparison of the non-localizability probabilities under different scenarios.}
	\begin{tabular}{c|c}
		\hline
		\textbf{Measurement type} & \textbf{Non-localizability probability}\\
		\hline\hline
		Number of landmarks & $P_{\rm N-Loc} = e^{-2m} I_0\!\left(2m\right)$\\
		Range vectors & $P_{\rm N-Loc} = e^{-2m} I_0\!\left(2m \cdot \sqrt{\frac{8d_v^3\epsilon - 6 d_v^2 \epsilon^2 + \epsilon^4}{3 d_v^4}}\right)$ \\
		Set of relative locations& $e^{-2m} \exp\!\left(m^2\frac{\epsilon^2}{d_v^2}\right) \geq P_{\rm N-Loc} \geq e^{-2m} I_0\!\left(2m\frac{\epsilon}{d_v}\right)$ \\
		Set of ranges& $e^{-2m} \exp\!\left(m^2\frac{8d_v^3\epsilon - 6 d_v^2 \epsilon^2 + \epsilon^4}{3d_v^4}\right) \geq P_{\rm N-Loc} \geq e^{-2m} I_0\!\left(2m \cdot \sqrt{\frac{8d_v^3\epsilon - 6 d_v^2 \epsilon^2 + \epsilon^4}{3 d_v^4}}\right)$ \\
		\hline
	\end{tabular}
	\label{tab:comparison}
\end{table}

We then verify our analytical findings by comparing the theoretically derived non-localizability probabilities, $P_{\rm N-Loc}$ in Theorems~\ref{the1}, \ref{the2} and~\ref{the3} against the outcomes of Monte-Carlo simulations.
For the simulations, we fix the candidate location at the origin and generate realizations of the PPP. The non-localizability probability is then determined by calculating the frequency at which noisy measurements from two different realizations overlap.
The landmark intensity $m = \lambda \pi d_v^2$ takes various values ranging from $2$ to $8$, providing the average number of landmarks in the visibility region. The maximum visibility distance $d_v$ is fixed at $50$ meters.
Figs.~\ref{fig:c1}, \ref{fig:c2} and \ref{fig:c3} depict the theoretical $P_{\rm N-Loc}$ values from Theorems~\ref{the1}, \ref{the2} and~\ref{the3}, alongside with Monte Carlo simulation results of non-localizability probabilities based on the range vector, the set of relative locations, and the set of ranges, respectively.
As expected, in \figref{fig:c1}, the simulation results perfectly match the theoretical non-localizability probability derived in Theorem~\ref{the1}.
Notably, when the noise-to-visible-distance ratio $\frac{\epsilon}{d_v}$ becomes larger, the result reduces to the one based on the number of visible landmarks, as discussed in Corollary~\ref{cor1}.
All three figures show that $P_{\rm N-Loc}$ increases with $\epsilon$, meaning that the performance degrades when noise increases.
In \figref{fig:c2} and \figref{fig:c3}, when $\epsilon$ and $m$ increase, the regions $A_i$ described in Theorems~\ref{the2} and \ref{the3} are more likely to overlap, which impacts the tightness of our bounds.
Furthermore, as expected by our analysis, $P_{\rm N-Loc}$ decreases with the increment of landmark intensity.
In particular, Proposition~\ref{prop1} establishes that $P_{\rm N-Loc}$ approaches zero as $\lambda$ tends to infinity, indicating the feasibility of error-free localization in this limiting regime.
The non-localizability probabilities derived in this paper provide valuable insights into the design of vision-based positioning systems. 
As an example, consider a finite area of interest with a fixed number of candidate locations, denoted as $n$. 
We can quantify the probability of a particular measurement being unique on the map, given by $p \simeq P_{\rm Loc}^{n-1}$.
For example, with $n=100$ and $P_{\rm Loc} = 0.999$, the result $p\simeq0.905$ means the target can be localized without error with approximate probability $90.5\%$.

\begin{comment}
    \chb{Let $X$ be the number of failures before the first success matching of the measurements from candidate locations that have independent landmark patterns. $X$ is a geometric random variable with mean $E(X) = \frac{1 - P_{\rm N-Loc}}{P_{\rm N-Loc}} \simeq \frac{1}{P_{\rm N-Loc}}$. With $P_{\rm N-Loc} = 10^{-2}$, we may determine the location of the target out of $100$ candidate locations that covers a $100 \times \pi 50^2 {\rm m}^2$.}
\end{comment}

\begin{figure}
	\centering
	\includegraphics[width=0.45\textwidth]{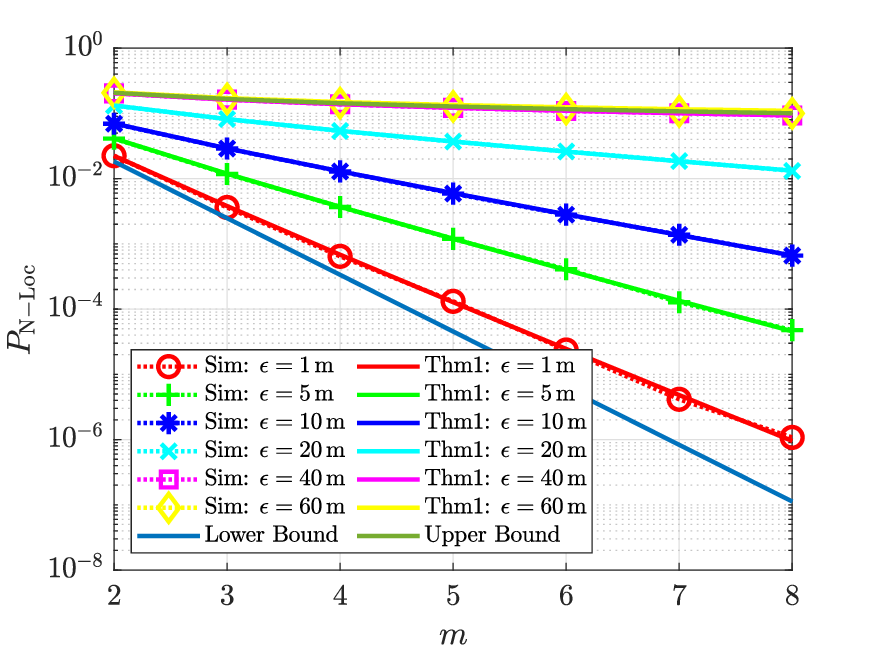}
	\caption{The non-localizability probability based on the range vector (Case 1) as a function of the intensity of landmarks $m$. Markers and solid lines represent simulated and theoretical results, respectively.} 
	\label{fig:c1}
\end{figure}

\begin{figure*}
	\centering
	\subfigure[Upper Bound]{
	\begin{minipage}[t]{0.45\textwidth}
		\centering
		\includegraphics[width=\textwidth]{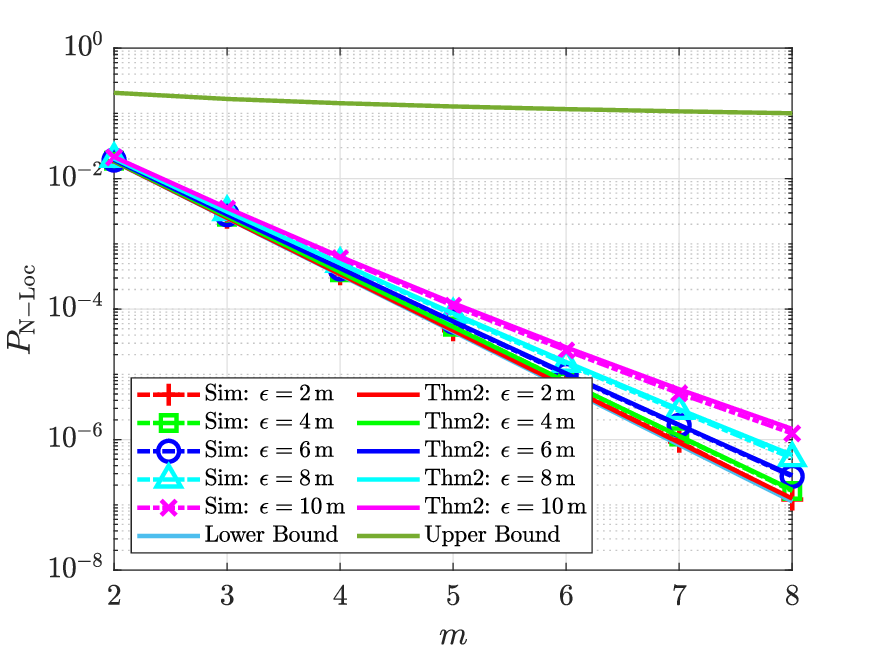}
		\label{fig:c2u}
	\end{minipage}
	}
	\subfigure[Lower Bound]{
	\begin{minipage}[t]{0.45\textwidth}
		\centering
		\includegraphics[width=\textwidth]{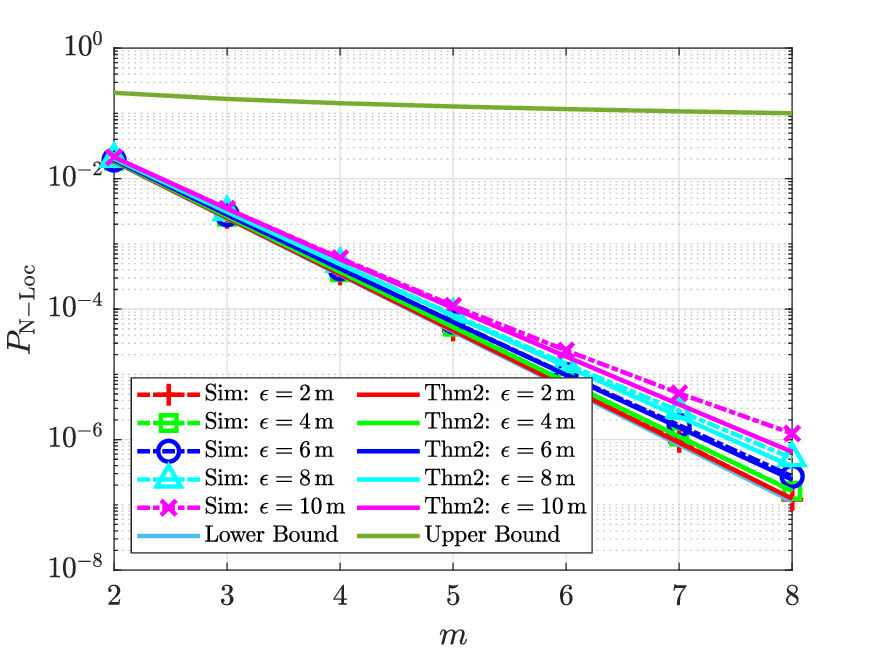}
		\label{fig:c2l}
	\end{minipage}
	}
	\caption{The non-localizability probability based on relative locations (Case 2) as a function of the intensity of landmarks $m$. Markers and solid lines represent simulated and theoretical results, respectively.}
	\label{fig:c2}
\end{figure*}

\begin{figure}
	\centering
	\subfigure[Upper Bound]{
	\begin{minipage}[t]{0.45\textwidth}
		\centering
		\includegraphics[width=\textwidth]{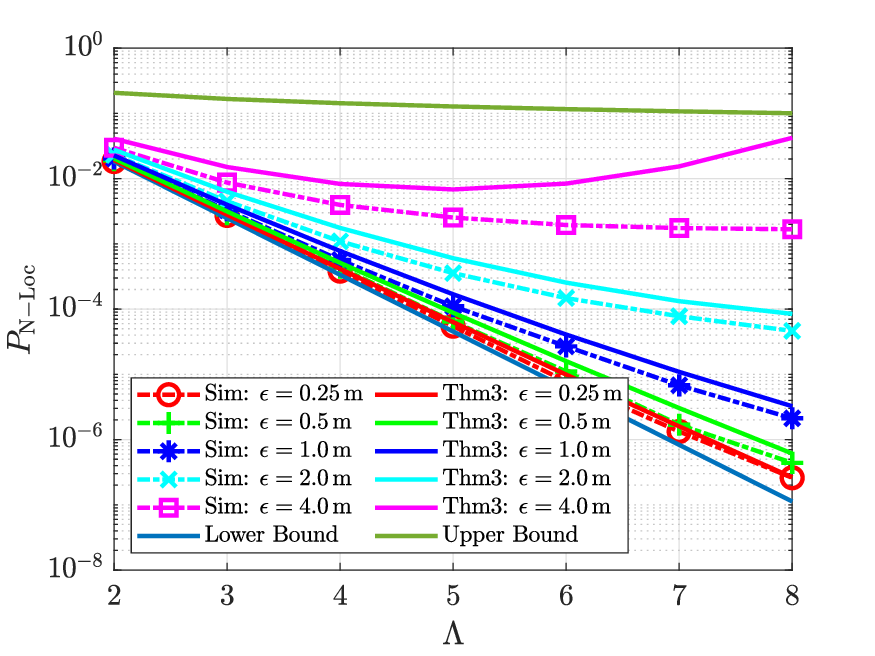}
		\label{fig:c3u}
	\end{minipage}
	}
	\subfigure[Lower Bound]{
	\begin{minipage}[t]{0.45\textwidth}
		\centering
		\includegraphics[width=\textwidth]{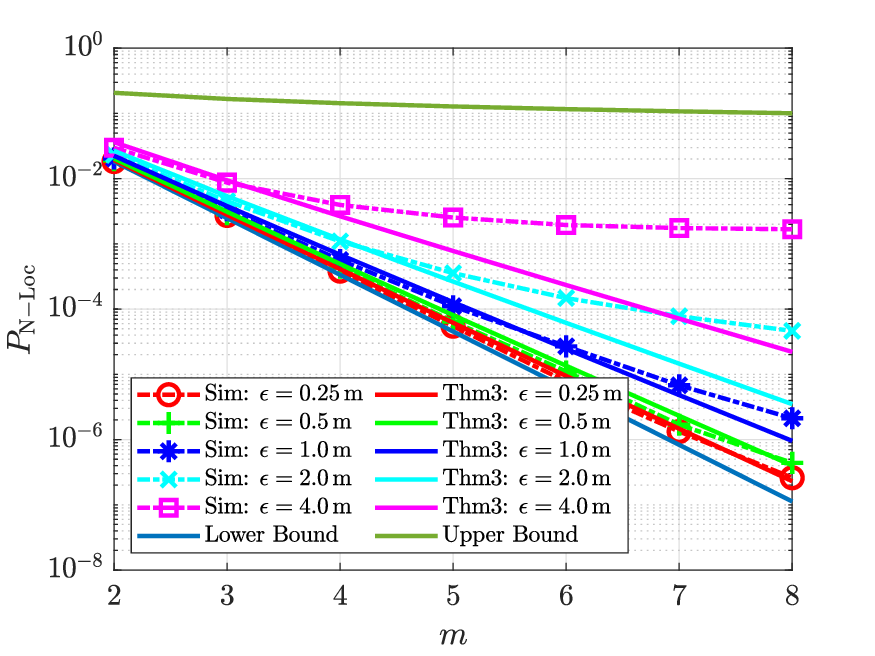}
		\label{fig:c3l}
	\end{minipage}
	}
	\caption{The non-localizability probability based on the set of ranges (Case 3) as a function of the intensity of landmarks $m$. Markers and solid lines represent simulated and theoretical results, respectively.}
	\label{fig:c3}
\end{figure}

\section{Conclusion} \label{sec:conclusion}
In this paper, we have proposed a novel and tractable statistical framework using stochastic geometry.
The concept of {\em localizability} is introduced and rigorously analyzed for vision-based positioning.
In our framework, landmarks are treated as indistinguishable points, and their spatial distribution is modeled using a PPP. 
The target can obtain measurements related to landmarks within the visibility region. 
These measurements can take various forms, including the range vector, the set of relative locations, or even the set of ranges, and are represented as a function of the target location. 
The main technical contributions of this paper are the unified approach to analyzing localizability and the application of this approach to various types of measurements.
One of the key findings is that the localizability probability approaches one as the landmark intensity tends to infinity, analogous to results found in information theory.
Our work provides valuable insights into understanding the limitations and challenges associated with positioning using vision information in the presence of indistinguishable landmarks.

Beyond this work, there are two potential lines for future exploration. 
First, factors such as sensor resolution, environmental conditions, and landmark types can be considered in the information-theoretic analysis of vision-based positioning systems. The foundational limits of localization accuracy in vision-based systems would provide a deeper understanding of existing algorithms. With this understanding, one can develop optimal localization strategies when there are indistinguishable landmarks. For instance, in scenarios where location estimation is inaccurate, the question arises whether it is better to obtain more measurements from indistinguishable landmarks or to identify the existing indistinguishable landmarks.
Additionally, potential connections to codewords in the communication system are also valuable to explore, possibly along the lines of the work in~\cite{anantharam2015capacity} on capacity and error exponents of point processes. In this line of work, the placements of landmarks can be optimized to improve localization accuracy. 

In terms of practical applications, specific algorithms for vision-based positioning that can operate effectively with limited visual information and computational resources are of great interest. 
We have initiated this study in a recent conference presentation~\cite{conf1}, where we used pairwise constraints to identify the correct landmark combination. 
Related to this, while we have demonstrated that error-free localization is achievable as the intensity of landmarks tends to infinity, the measurements considered in specific cases were primarily inspired by the characteristics of the vision sensors and potential deployment scenarios and hence may not necessarily be optimal. 
Therefore, exploring optimal schemes to represent point patterns that ensure the maximum location information is valuable. 
{\em Overall, this paper makes the first attempt to connect stochastic geometry, localization, and computer vision, which could potentially be a new research direction.}

\bibliographystyle{IEEEtran}
\bibliography{citation}

% Generated by IEEEtran.bst, version: 1.14 (2015/08/26)
\begin{thebibliography}{10}
\providecommand{\url}[1]{#1}
\csname url@samestyle\endcsname
\providecommand{\newblock}{\relax}
\providecommand{\bibinfo}[2]{#2}
\providecommand{\BIBentrySTDinterwordspacing}{\spaceskip=0pt\relax}
\providecommand{\BIBentryALTinterwordstretchfactor}{4}
\providecommand{\BIBentryALTinterwordspacing}{\spaceskip=\fontdimen2\font plus
\BIBentryALTinterwordstretchfactor\fontdimen3\font minus
  \fontdimen4\font\relax}
\providecommand{\BIBforeignlanguage}[2]{{%
\expandafter\ifx\csname l@#1\endcsname\relax
\typeout{** WARNING: IEEEtran.bst: No hyphenation pattern has been}%
\typeout{** loaded for the language `#1'. Using the pattern for}%
\typeout{** the default language instead.}%
\else
\language=\csname l@#1\endcsname
\fi
#2}}
\providecommand{\BIBdecl}{\relax}
\BIBdecl

\bibitem{zheng2009tour}
Y.-T. Zheng, M.~Zhao, Y.~Song, H.~Adam, U.~Buddemeier, A.~Bissacco, F.~Brucher,
  T.-S. Chua, and H.~Neven, ``Tour the world: building a web-scale landmark
  recognition engine,'' in \emph{Proc., IEEE Conf. on Computer Vision and
  Pattern Recognition (CVPR)}, 2009, pp. 1085--1092.

\bibitem{crandall2009mapping}
D.~J. Crandall, L.~Backstrom, D.~Huttenlocher, and J.~Kleinberg, ``Mapping the
  world's photos,'' in \emph{Proc., Intl. Conf. on of on World Wide Web}, 2009,
  pp. 761--770.

\bibitem{li2008modeling}
X.~Li, C.~Wu, C.~Zach, S.~Lazebnik, and J.-M. Frahm, ``Modeling and recognition
  of landmark image collections using iconic scene graphs,'' in \emph{European
  Conference on Computer Vision (ECCV)}, 2008, pp. 427--440.

\bibitem{zamir2016introduction}
A.~R. Zamir, A.~Hakeem, L.~Van~Gool, M.~Shah, and R.~Szeliski,
  \emph{Introduction to large-scale visual geo-localization}.\hskip 1em plus
  0.5em minus 0.4em\relax Cham, Switzerland: Springer, 2016.

\bibitem{lowe2004distinctive}
D.~G. Lowe, ``Distinctive image features from scale-invariant keypoints,''
  \emph{Intl. Journal of Computer Vision}, vol.~60, pp. 91--110, 2004.

\bibitem{csurka2004visual}
G.~Csurka, C.~Dance, L.~Fan, J.~Willamowski, and C.~Bray, ``Visual
  categorization with bags of keypoints,'' in \emph{European Conference on
  Computer Vision (ECCV)}, vol.~1, no. 1-22, 2004, pp. 1--2.

\bibitem{philbin2007object}
J.~Philbin, O.~Chum, M.~Isard, J.~Sivic, and A.~Zisserman, ``Object retrieval
  with large vocabularies and fast spatial matching,'' in \emph{Proc., IEEE
  Conf. on Computer Vision and Pattern Recognition (CVPR)}, 2007, pp. 1--8.

\bibitem{arandjelovic2013all}
R.~Arandjelovic and A.~Zisserman, ``All about {VLAD},'' in \emph{Proc., IEEE
  Conf. on Computer Vision and Pattern Recognition (CVPR)}, 2013, pp.
  1578--1585.

\bibitem{shankar2016approximate}
V.~Shankar, J.~Zhang, J.~Chen, C.~Dinh, M.~Clements, and A.~Zakhor,
  ``Approximate subgraph isomorphism for image localization,'' \emph{Electronic
  Imaging}, vol. 2016, no.~15, pp. 1--9, 2016.

\bibitem{verde2020ground}
S.~Verde, T.~Resek, S.~Milani, and A.~Rocha, ``Ground-to-aerial viewpoint
  localization via landmark graphs matching,'' \emph{IEEE Signal Processing
  Letters}, vol.~27, pp. 1490--1494, 2020.

\bibitem{gao2021introduction}
X.~Gao and T.~Zhang, \emph{Introduction to visual {SLAM}: from theory to
  practice}.\hskip 1em plus 0.5em minus 0.4em\relax Cham, Switzerland:
  Springer, 2021.

\bibitem{kendall2015posenet}
A.~Kendall, M.~Grimes, and R.~Cipolla, ``{PoseNet}: A convolutional network for
  real-time 6-{DOF} camera relocalization,'' in \emph{Proc., IEEE Intl. Conf.
  on Computer Vision (ICCV)}, 2015, pp. 2938--2946.

\bibitem{geometry-aware}
S.~Brahmbhatt, J.~Gu, K.~Kim, J.~Hays, and J.~Kautz, ``Geometry-aware learning
  of maps for camera localization,'' in \emph{Proc., IEEE Conf. on Computer
  Vision and Pattern Recognition (CVPR)}, 2018, pp. 2616--2625.

\bibitem{ding2019camnet}
M.~Ding, Z.~Wang, J.~Sun, J.~Shi, and P.~Luo, ``{CamNet}: Coarse-to-fine
  retrieval for camera re-localization,'' in \emph{Proc., IEEE Intl. Conf. on
  Computer Vision (ICCV)}, 2019, pp. 2871--2880.

\bibitem{zhu2021revisiting}
S.~Zhu, T.~Yang, and C.~Chen, ``Revisiting street-to-aerial view image
  geo-localization and orientation estimation,'' in \emph{Proc., IEEE/CVF
  Winter Conf. on Applications of Computer Vision}, 2021, pp. 756--765.

\bibitem{zhu2021vigor}
------, ``{VIGOR}: Cross-view image geo-localization beyond one-to-one
  retrieval,'' in \emph{Proc., IEEE Conf. on Computer Vision and Pattern
  Recognition (CVPR)}, 2021, pp. 5316--5325.

\bibitem{hu2022beyond}
W.~Hu, Y.~Zhang, Y.~Liang, Y.~Yin, A.~Georgescu, A.~Tran, H.~Kruppa, S.-K. Ng,
  and R.~Zimmermann, ``Beyond geo-localization: Fine-grained orientation of
  street-view images by cross-view matching with satellite imagery,'' in
  \emph{Proc., ACM Intl. Conference on Multimedia}, 2022, pp. 6155--6164.

\bibitem{conf1}
H.~Hu, H.~S. Dhillon, and R.~M. Buehrer, ``Landmark-based localization using
  range measurements: A stochastic geometry perspective,'' in \emph{IEEE/IFIP
  WiOpt Workshops}, May 2023, pp. 425--432.

\bibitem{conf2}
------, ``Stochastic geometry analysis of localizability in vision-based
  geolocation systems,'' in \emph{Proc., IEEE Asilomar}, Oct. 2023.

\bibitem{o'lone2018statistical}
C.~E. O'Lone, H.~S. Dhillon, and R.~M. Buehrer, ``A statistical
  characterization of localization performance in wireless networks,''
  \emph{IEEE Trans. on Wireless Commun.}, vol.~17, no.~9, pp. 5841--5856, Sep.
  2018.

\bibitem{christopher2019mathematical}
------, ``A mathematical justification for exponentially distributed {NLOS}
  bias,'' in \emph{Proc., IEEE Globecom}, 2019, pp. 1--6.

\bibitem{schloemann1}
J.~Schloemann, H.~S. Dhillon, and R.~M. Buehrer, ``A tractable metric for
  evaluating base station geometries in cellular network localization,''
  \emph{IEEE Wireless Commun. Letters}, vol.~5, no.~2, pp. 140--143, 2016.

\bibitem{schloemann2}
------, ``A tractable analysis of the improvement in unique localizability
  through collaboration,'' \emph{IEEE Trans. on Wireless Commun.}, vol.~15,
  no.~6, pp. 3934--3948, 2016.

\bibitem{aditya2018tractable}
S.~Aditya, H.~S. Dhillon, A.~F. Molisch, and H.~M. Behairy, ``A tractable
  analysis of the blind spot probability in localization networks under
  correlated blocking,'' \emph{IEEE Trans. on Wireless Commun.}, vol.~17,
  no.~12, pp. 8150--8164, Dec. 2018.

\bibitem{aditya2019characterizing}
S.~Aditya, H.~S. Dhillon, A.~F. Molisch, R.~M. Buehrer, and H.~M. Behairy,
  ``Characterizing the impact of {SNR} heterogeneity on time-of-arrival-based
  localization outage probability,'' \emph{IEEE Trans. on Wireless Commun.},
  vol.~18, no.~1, pp. 637--649, Jan. 2019.

\bibitem{elsawy2017base}
H.~Elsawy, W.~Dai, M.-S. Alouini, and M.~Z. Win, ``Base station ordering for
  emergency call localization in ultra-dense cellular networks,'' \emph{IEEE
  Access}, vol.~6, pp. 301--315, 2018.

\bibitem{andrews2023introduction}
J.~G. Andrews, A.~K. Gupta, A.~Alammouri, and H.~S. Dhillon, \emph{An
  introduction to cellular network analysis using stochastic geometry}.\hskip
  1em plus 0.5em minus 0.4em\relax Cham, Switzerland: Springer, Jul. 2023.

\bibitem{mankar2020distance}
P.~D. Mankar, P.~Parida, H.~S. Dhillon, and M.~Haenggi, ``Distance from the
  nucleus to a uniformly random point in the 0-cell and the typical cell of the
  {Poisson-Voronoi} tessellation,'' \emph{Journal of Statistical Physics}, vol.
  181, no.~5, pp. 1678--1698, Dec. 2020.

\bibitem{pineda2007temporal}
E.~Pineda and D.~Crespo, ``Temporal evolution of the domain structure in a
  {Poisson-Voronoi} transformation,'' \emph{Journal of Statistical Mechanics:
  Theory and Experiment}, vol. 2007, no.~06, p. P06007, 2007.

\bibitem{koufos2019distribution}
K.~Koufos and C.~P. Dettmann, ``Distribution of cell area in bounded {Poisson
  Voronoi} tessellations with application to secure local connectivity,''
  \emph{Journal of Statistical Physics}, vol. 176, pp. 1296--1315, 2019.

\bibitem{parida2022multilayer}
P.~Parida and H.~S. Dhillon, ``Multilayer random sequential adsorption,''
  \emph{Journal of Statistical Physics}, vol. 187, no.~1, p.~1, 2022.

\bibitem{dhillon2020poisson}
H.~S. Dhillon and V.~V. Chetlur, \emph{Poisson line {Cox} process:
  {Foundations} and applications to vehicular networks}.\hskip 1em plus 0.5em
  minus 0.4em\relax Cham, Switzerland: Springer, 2020.

\bibitem{chetlur2020shortest}
V.~V. Chetlur, H.~S. Dhillon, and C.~P. Dettmann, ``Shortest path distance in
  manhattan {Poisson} line cox process,'' \emph{Journal of Statistical
  Physics}, vol. 181, pp. 2109--2130, 2020.

\bibitem{crlb1}
T.~Jia and R.~M. Buehrer, ``A new {Cramer-Rao} lower bound for {TOA-based}
  localization,'' in \emph{Proc., IEEE MILCOM}, 2008, pp. 1--5.

\bibitem{crlb3}
R.~M. Vaghefi and R.~M. Buehrer, ``On the {CRLB} of {TDOA/FDOA} estimation from
  {MIMO} signals,'' in \emph{IEEE/ION Position, Location and Navigation
  Symposium (PLANS)}, 2016, pp. 772--778.

\bibitem{crlb2}
L.~Gui, M.~Yang, H.~Yu, J.~Li, F.~Shu, and F.~Xiao, ``A {Cramer-Rao} lower
  bound of {CSI-based} indoor localization,'' \emph{IEEE Trans. on Veh.
  Technology}, vol.~67, no.~3, pp. 2814--2818, 2017.

\bibitem{christopher2019single}
C.~E. O'Lone, H.~S. Dhillon, and R.~M. Buehrer, ``Single-anchor localizability
  in {5G} millimeter wave networks,'' \emph{IEEE Wireless Commun. Letters},
  vol.~9, no.~1, pp. 65--69, 2019.

\bibitem{li2019localizability}
Z.~Li, Y.~Gao, S.~Wang, and J.~Liu, ``Localizability of laser {SLAM} robot
  based on deep learning,'' in \emph{IEEE Intl. Conf. on Robotics and
  Biomimetics (ROBIO)}, 2019, pp. 364--369.

\bibitem{dong2021efficient}
L.~Dong, W.~Chen, and J.~Wang, ``Efficient feature extraction and
  localizability based matching for {Lidar} {SLAM},'' in \emph{IEEE Intl. Conf.
  on Robotics and Biomimetics (ROBIO)}, 2021, pp. 820--825.

\bibitem{anantharam2015capacity}
V.~Anantharam and F.~Baccelli, ``Capacity and error exponents of stationary
  point processes under random additive displacements,'' \emph{Advances in
  Applied Probability}, vol.~47, no.~1, pp. 1--26, 2015.

\bibitem{baccelli2016entropy}
F.~Baccelli and J.~O. Woo, ``On the entropy and mutual information of point
  processes,'' in \emph{IEEE Intl. Symposium on Information Theory (ISIT)},
  2016, pp. 695--699.

\bibitem{rohde2016localization}
J.~Rohde, J.~E. Stellet, H.~Mielenz, and J.~M. Z{\"o}llner, ``Localization
  accuracy estimation with application to perception design,'' in \emph{IEEE
  Intl. Conf. on Robotics and Automation (ICRA)}, 2016, pp. 4777--4783.

\bibitem{chiu2013stochastic}
S.~N. Chiu, D.~Stoyan, W.~S. Kendall, and J.~Mecke, \emph{Stochastic Geometry
  and Its Applications}, 3rd~ed.\hskip 1em plus 0.5em minus 0.4em\relax New
  York, NY, USA: John Wiley and Sons, 2013.

\bibitem{BacBlaB2009}
F.~Baccelli and B.~Blaszczyszyn, \emph{Stochastic Geometry and Wireless
  Networks, Volume I -- Theory}.\hskip 1em plus 0.5em minus 0.4em\relax Delft,
  The Netherlands: NOW: Foundations and Trends in Networking, 2009.

\bibitem{yang2016approximating}
Z.-H. Yang and Y.-M. Chu, ``On approximating the modified bessel function of
  the first kind and {Toader-Qi} mean,'' \emph{Journal of Inequalities and
  Applications}, vol. 2016, pp. 1--21, 2016.

\bibitem{olver2010nist}
F.~W. Olver, D.~W. Lozier, R.~F. Boisvert, and C.~W. Clark, \emph{{NIST}
  handbook of mathematical functions}.\hskip 1em plus 0.5em minus 0.4em\relax
  Cambridge, U.K: Cambridge University Press, 2010.

\end{thebibliography}

\end{document}